\documentclass{article}

\usepackage{amsmath}
\usepackage{amssymb}
\usepackage{array}
\usepackage{pre}
\usepackage[colorinlistoftodos,prependcaption,textsize=tiny]{todonotes}

\title{Parity $\notin \qaczm$ $\iff$ $\qaczm$ is Fourier-Concentrated}

\def\ANON{0} 
\def\COMM{0} 

\ifnum\ANON=1
\author{
  Anonymous Authors.
}
\else
\author{
  Lucas Gretta \thanks{University of California at Berkeley. \ Email: \url{lucas_gretta@berkeley.edu}. \ Supported by NSF Award CCF-2231095.}
  \and
  Meghal Gupta\thanks{University of California at Berkeley. \ Email: \url{meghal@berkeley.edu}. \ Supported by NSF GRFP.}
  \and
  Malvika Raj Joshi\thanks{University of California at Berkeley. \ Email: \url{malvika@berkeley.edu}. \ Supported by NSF grant 2311733 and DOE grant DE-SC0024124.}
}
\fi
\date{}

\ifnum\COMM=1
\newcommand{\malvika}[1]{\textcolor{purple}{[\textbf{Malvika:} {#1}]}}
\newcommand{\luke}[1]{\textcolor{blue}{[\textbf{Luke:} {#1}]}}
\newcommand{\meghal}[1]{\textcolor{red}{[\textbf{Meghal:} {#1}]}}
\newcommand{\gpt}[1]{\textcolor{orange}{[\textbf{GPT:} {#1}]}}
\newcommand{\todomal}[1]{\todo[linecolor=Plum,backgroundcolor=Plum!25,bordercolor=Plum]{\textbf{@mal todo}: #1}}
\newcommand{\todoluke}[1]{\todo[linecolor=blue,backgroundcolor=blue!25,bordercolor=blue]{\textbf{@luke todo}:#1}}
\newcommand{\todomeghal}[1]{\todo[linecolor=red,backgroundcolor=red!25,bordercolor=red]{\textbf{@meghal todo} #1}}
\newcommand{\todoany}[1]{\todo[linecolor=orange,backgroundcolor=orange!25,bordercolor=orange]{\textbf{todo @plz do}: #1}}
\else
\newcommand{\malvika}[1]{}
\newcommand{\luke}[1]{}
\newcommand{\meghal}[1]{}
\newcommand{\gpt}[1]{}
\newcommand{\todomal}[1]{}
\newcommand{\todoluke}[1]{}
\newcommand{\todomeghal}[1]{}
\newcommand{\todoany}[1]{}
\fi

\begin{document}

\maketitle
\ifnum\ANON=1
\PackageWarningNoLine{Global}{Note anonymous mode}
\ifnum\COMM=1
\textcolor{red}{\textbf{Warning:}{ Comments are enabled in anonymous mode}}
\fi
\fi
\ifnum\COMM=1
\PackageWarningNoLine{Global}{Note comments are enabled}
\fi

\begin{abstract}
A major open problem in understanding shallow quantum circuits ($\QACZ$) is whether they can compute $\PARITY$. In this work, we show that this question is solely about the Fourier spectrum of $\qaczm$: \emph{any} $\qaczm$ circuit with non-negligible high-level Fourier mass suffices to exactly compute $\PARITY$ in $\QACZ$. Thus, proving a quantum analog of the seminal LMN theorem for $\ACZ$ \cite{lmn1993ac0} is \emph{necessary} to bound the quantum circuit complexity of $\PARITY$. 

In the other direction, LMN does not fully capture the limitations of $\ACZ$. For example, despite $\MAJORITY$ having $99\%$ of its weight on low-degree Fourier coefficients, no $\ACZ$ circuit can non-trivially correlate with it \cite{razborov1987lower, smolensky1987algebraic}. 
In contrast, we provide a $\QACZ$ circuit that achieves $(1-o(1))$ correlation with $\MAJORITY$, establishing the first average-case decision separation between $\ACZ$ and $\QACZ$. This suggests a uniquely quantum phenomenon: unlike in the classical setting, Fourier concentration may largely characterize the power of $\QACZ$.

$\PARITY$ is also known to be equivalent in $\QACZ$ to inherently quantum tasks such as preparing GHZ states to high fidelity. We extend this equivalence to a broad class of state-synthesis tasks. We demonstrate that existing measures such as trace distance, fidelity, and mutual information are insufficient to capture these states and introduce a new measure, \emph{felinity}. We prove that preparing any state with non-negligible \emph{felinity}, or derived states such as  $\poly(n)$-weight Dicke states, implies $\PARITY \in \qaczm$.

\end{abstract}
\tableofcontents
\section{Introduction}

Understanding the computational power of constant-depth quantum circuits ($\QACZ$) remains a central open problem in quantum complexity theory. 

Classical constant-depth circuits ($\ACZ$) cannot compute $\PARITY$ \cite{Ajtai83,FSS84,hastad1986switch} and in fact, form a strict hierarchy. Adding $\PARITY_n$ gates to $\ACZ$ does not enable other hard computations such as arbitrary modular arithmetic or $\MAJORITY_n$, making $\ACZ \subsetneq \ACZ[2] \subsetneq \ACC \subseteq \TCZ$ \cite{razborov1987lower, smolensky1987algebraic}.  In contrast, $\QACZ$ exhibits an “all-or-nothing” behavior \cite{takahashi2012collapse}.  The class  $\QACZF$, obtained by augmenting $\QACZ$ with $\PARITY$ gates (or its Hadamard conjugate $\FANOUT$), can simulate all of $\TCZ$ \cite{moore1999qac0, hoyer2005fanout}. Thus, determining whether $\QACZ$ can compute $\PARITY$, or equivalently, whether $\QACZ = \QACZF$, is a crucial step.

A distinctive feature of $\PARITY$ is that all of its Fourier mass lies on the highest-degree coefficient. However, $\ACZ$ circuits exhibit strong concentration on low-degree Fourier coefficients, a landmark result known eponymously by LMN \cite{lmn1993ac0}. Subsequent works have led to a tight characterization of this Fourier spectrum \cite{hastad2017average, tal2017fourierac0}. This suggests a natural approach to resolving whether $\PARITY \in \QACZ$ by seeking an analogous characterization for $\QACZ$. The Fourier spectrum of $\QACZ$ circuits, induced by measuring the output qubit, has been conjectured to exhibit low degree concentration  \cite{nadimpalli2024pauli, anshu2025computational}, suggesting a quantum analog of LMN (``QLMN") may hold. Informally, this conjecture states that the $\poly(n)$-level Fourier coefficients have negligible ($n^{-\omega(1)}$) weight. A fully parameterized version of this conjecture, analogous to the true LMN is stated in \Cref{conj:qlmn}.

Clearly, if the QLMN conjecture were true, then $\PARITY \not \in \QACZ$. In this work, we show that the reverse implication also holds:
$$\PARITY \not \in \QACZ \Longleftrightarrow \mathrm{QLMN}.$$
\noindent In other words, proving QLMN is \emph{necessary} for proving $\PARITY \not \in \QACZ$.

Another interpretation of our result is that any single-output $\QACZ$ circuit with non-negligible high-level Fourier mass (\cref{def:booleanckt}) suffices to compute $\PARITY$ in $\QACZ$. This equivalence has no classical analog because adding a gate to $\ACZ$ that only approximates some arbitrary high-degree boolean function, does not necessarily produce $\ACZ[2]$.  

An immediate consequence of our result is that any $\QACZ$ circuit achieving non-negligible correlation with $\PARITY_n$ on a random input provides a $\QACZ$ circuit for computing $\PARITY_n$ exactly. Similarly, it suffices to construct a circuit with $1-1/\poly(n)$ correlation with $\MAJORITY_n$ on average, or a circuit with $1/\poly(n)$ advantage for $\MAJORITY$ in the worst case to obtain exact $\PARITY_n$. These results and others are summarized in \cref{tab:prior} below. Prior work of \cite{grier2024threshold} introduced the worst-case bounded error classes $\bqac^0$, $\bqtc^0$, and \cite{grier2026tc0} subsequently showed $\bqtc^0 = \bqac^0_f$ and $\QTCZ = \QACZ_f$. In regards to these classes, our results imply that if $\MAJORITY \in \bqac^0$, then $\QACZ = \QACZF$.


\paragraph{An average-case separation.} $\MAJORITY$  is often used to illustrate limitations of Fourier-concentration for obtaining average-case $\ACZ$ lower bounds. Despite the Majority function having $1-o(1)$ of its weight on low-degree Fourier coefficients, $\ACZ$ circuits cannot achieve better than $\Theta\lr{\frac{1}{\sqrt{n}}}$-correlation with $\MAJORITY$ \cite{razborov1987lower, smolensky1987algebraic}. We, however, show that a $\QACZ$ circuit can achieve $(1-1/\polylog(n))$ correlation with $\MAJORITY_n$. This establishes the first average-case decision problem separation between $\ACZ$ and $\QACZ$. Furthermore, this sharp contrast with the classical setting suggests that, in the quantum world, Fourier concentration may be largely sufficient to characterize $\QACZ$. Indeed, by combining this result with our aforementioned hardness result, we see that the feasible regimes of $\MAJORITY$ in $\QACZ$ are essentially determined by QLMN.

\paragraph{Consequences for hardness of state synthesis.} It is well-known that preparing the cat state ($\frac{1}{\sqrt{2}}\ket{0^n}+\frac{1}{\sqrt{2}}\ket{1^n}$) or a nekomata state $(\kb{0^n} + \kb{1^n})$ in $\QACZ$ is equivalent to 
computing $\PARITY$ in $\QACZ$. This equivalence also holds for preparing high-fidelity (i.e. $1 - \eps$) approximations to these states \cite{moore1999qac0, rosenthal2021qac0, grier2026tc0}.
However, we show that there are much weaker approximations whose preparation implies $\PARITY \in \QACZ$, but these cannot be distinguished from states known to be in $\QACZ$ through existing measures. To more extensively capture state synthesis tasks that are as hard as \PARITY, we introduce a new notion, \emph{felinity} (\Cref{def:felinity}). The felinity of a $n$-qubit state $\rho$ directly quantifies the ability of a circuit for $\rho$ to correlate with $\PARITY_n$. Combined with our main result, this yields a large class of states whose preparation as black-box immediately unlock the full power of $\QACZF$.


Another family of quantum states of interest is Dicke states, defined as uniform superpositions over $n$-bit strings of Hamming weight $k$ (WLOG $k \leq n/2$). 
Recent work shows that $\FANOUT_n$ enables preparation of arbitrary-weight states in $\QACZF$ \cite{cleandicke2026}. However, it was previously unknown whether this reliance on fanout is necessary. Using felinity, we show that $\FANOUT_k$ is necessary to prepare Dicke states of weight $k$.
Consequently, for any $k=\poly(n)$, such as $k=n^{0.0001}$, the ability to prepare weight-$k$ Dicke states in $\QACZ$ would imply $\QACZ=\QACZF$.

\begin{table}[t!]\label{tab:prior}
\centering
\footnotesize
    \renewcommand{\arraystretch}{1.5}
    \setlength{\tabcolsep}{3pt}
    \begin{tabular}{|c|c|c|c|c|c|}
\hline
\cellcolor{gray!10}Problem &   \cellcolor{gray!10} Regime &  \cellcolor{gray!10}  $\QACZ$  & \cellcolor{gray!10}  Implies $\qaczm = \qaczm_f$ &  \cellcolor{gray!10}  $\ACZ$ & \cellcolor{gray!10} $\QACZF$\\
\hline 
\multirow{3}{*}{\shortstack{\PARITY$_n$ \\ (Avg correlation)}} 
    &  Exact &  ? & Yes \cite{moore1999qac0} & \multirow{3}{*}{{No \cite{hastad1986switch}}} &  \multirow{3}{*}{{Yes \cite{moore1999qac0} }}   \\
\cline{2-4}
&  $\geq 1/\polylog(n)$ & ? & Yes \cite{joshi2026improvedlowerboundsqac0} &  &  \\
\cline{2-4}
&  {$\geq 1/\poly(n)$ } & ? &  {\bf Yes (Thm \ref{thm:fanoutcomplete_intro}/\ref{thm:nonnegpar_to_exact}})\cellcolor{green!15} & &  \\
\hline 
\hline 
\multirow{3}{*}{\shortstack{\MAJORITY$_n$ \\ (Worst Case \emph{error})  }} 
&  $1/2 - \Theta(1/n)$ & Yes \cite{grier2026tc0} & -  & Yes*&  \multirow{3}{*}{Yes \cite{hoyer2005fanout} } \\
\cline{2-5}
&  0  & ? &  Yes \cite{grier2024threshold}  & \multirow{2}{*}{No \cite{hastad1986switch}} &   \\
\cline{2-4}
&  $1/2-1/\polylog(n)$ & ? & \cellcolor{green!15} {\bf Yes (Thm \ref{thm:fanoutcomplete_intro}/\ref{thm:hardmajority} ) } &   &   \\
\hline
\multirow{2}{*}{\shortstack{\MAJORITY$_n$ \\ (Avg correlation)  }} 
&  $1-1/\polylog(n)$& \cellcolor{green!15} {\bf Yes (Thm \ref{thm:ezmajority} ) } &  -  &\multirow{2}{*}{\shortstack{No \cite{razborov1987lower}, \\\cite{smolensky1987algebraic}}}  & \multirow{2}{*}{Yes \cite{hoyer2005fanout}}   \\
\cline{2-4}
&  $1-1/\poly(n)$& ? & \cellcolor{green!15} {\bf Yes (Thm \ref{thm:fanoutcomplete_intro}/\ref{thm:hardmajority} ) } &   &   \\
\hline
$\MAJORITY_n \land \PERIODIC$ 
& \multirow{2}{*}{$\negl(n)$} &  \multirow{2}{*}{Yes \cite{grier2024threshold}} & \multirow{2}{*}\shortstack{-}
& \multirow{2}{*}{No \cite{hastad1986switch}} & \multirow{2}{*}{Yes \cite{hoyer2005fanout}}   \\
 (Worst-case \emph{error}) & & & & & \\
\hline
\hline
\multirow{3}{*}{\shortstack{Nekomata \\ $\braket{0^n | \rho | 0^n}, \braket{1^n | \rho | 1^n} \geq \eps$}} 
    &  $\eps = 1/2$ &  ? & Yes \cite{rosenthal2021qac0} & \multirow{3}{*}{{-}} &  \multirow{3}{*}{{Yes \cite{moore1999qac0} }}   \\
\cline{2-4}
&   $\eps \approx 1/\sqrt{2}$   & ?  &  Yes \cite{rosenthal2021qac0,grier2026tc0} & &  \\
\cline{2-4}
& {$\eps \geq 1/\poly(n)$} & ?  & \cellcolor{green!15} {\bf Yes (Thm \ref{thm:fanoutcomplete_intro}/\ref{cor:felnhard}} & &  \\
\hline
\hline
\multirow{2}{*}{\shortstack{Dicke states: $\ket{D^n_k}$ } } 
    &  $k = n^{\Theta(1)}$ &  ? &\cellcolor{green!15} {\bf Yes (Thm \ref{thm:fanoutcomplete_intro}/\ref{thm:dicketocat})} & \multirow{2}{*}{-} &  \multirow{2}{*}{Yes \cite{cleandicke2026}}   \\
\cline{2-4}
&  $k = O(1)$    & Yes \cite{cleandicke2026} & - & &  \\
\cline{2-4}
\hline
\end{tabular}
\caption{Prior and new results regarding the hardness of problems related to $\FANOUT$. Prior work is elaborated upon in \Cref{sec:prior_work}. New results from our work are highlighted in green. Here worst case error indicates that the error guarantee holds for all inputs. The worst case error is of form $1/2-\eps$ since error $1/2$ is trivially achieved by a random bit. Average-case correlation is taken in expectation over a uniformly random input $\x \in \bin^n$. \\ *This is possible with access to $\log(n)$ random bits.}  
\end{table}

\subsection{Our results}

We show that the Fourier concentration of $\QAC$ functions is intimately related to the hardness of $\PARITY$ in $\QAC$. More formally, define $\QAC(d, s)$ to be $\QAC$ circuits with depth $\leq d$ and size $\leq s$\footnote{There is some subtlety based on whether $\QACZ$ is defined to have polynomial ancillae or polynomial gates. In this paper this concern is largely not important, as our reductions achieve proportional blowup in both measures.} and define the $\text{QLMN}(\gamma(d,k))$ to be the following proposition, parameterized based on a decay function $\gamma : \mathbb{N} \times \mathbb{N} \to \mathbb{N}$, similar to the way the classical \cite{lmn1993ac0} is parameterized.

\begin{restatable}[$\text{QLMN}(\gamma)$ (folklore, \cite{nadimpalli2024pauli, anshu2025computational})]{conjecture}{qlmn}\label{conj:qlmn}
For any $d,s, n$ and all $n$-input $C \in \QAC(d,s)$ with a designated output register $t$ and let $f_C(x) : \{0,1\}^n \to [-1,1]$ denote the expected outcome of measuring $Z_t$ on $t$. Then, for all $k \in [n]$. 
    \[ \wgk\left[f_\mathcal{C}\right] < s \cdot \gamma(d,k) ^ {-1}.\]
\end{restatable}
Note that setting $\gamma(d,k) := 2^{\Omega\left(k^{1/d}\right)}$ would conjecture the analog of classical LMN. We show that proving strong lower bounds on the size/depth complexity of $\PARITY$ in $\QAC$ necessitates proving QLMN under corresponding parameters.

\begin{restatable}[Equivalence of QLMN and Hardness of Parity]{theorem}{qlmnequivparity}\label{thm:qlmnequivparity}

For all $\gamma : \mathbb{N} \times \mathbb{N} \to \mathbb{N}$, \[\mathrm{QLMN}(\gamma) \iff \PARITY_n \not\in \QAC(\Theta(d), \wt{\Theta}(\gamma(d,n))).\]

Equivalently, letting $S(d,n)$ denote the maximum size such that no depth $d$ circuit of this size can compute $\PARITY_n$, then $\text{QLMN}(\tilde{\Theta}\left(S(\Theta(d), k))\right)$ holds.
\end{restatable}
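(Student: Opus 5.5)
The forward direction ($\mathrm{QLMN}(\gamma) \Rightarrow \PARITY_n \notin \QAC(\Theta(d),\wt{\Theta}(\gamma(d,n)))$) is immediate. A circuit computing $\PARITY_n$ exactly has $f_C = \chi_{[n]}$, up to the sign convention for the measurement outcome, so $W^{\geq n}[f_C] = 1$. Applying the conjecture with $k=n$ forces $s \geq \gamma(d,n)$. The tilde and $\Theta$ only absorb constant-factor and polylog slack.

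The substance is the converse. Suppose $C\in\QAC(d,s)$ has $\wgk[f_C]\ge s/\gamma(d,k)$ for some $k$. I want to build a $\PARITY_k$ circuit of depth $O(d)$ and size $\wt O(\poly(s)\cdot\gamma(d,k))$, so that parities of size $\approx\gamma$ appear. I would proceed in three steps.

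\textbf{Step 1 (random restriction / extraction of a single high-degree character).} Write $f_C=\sum_S \hat f(S)\chi_S$. Apply a Goldreich--Levin style argument that is implementable in $\QAC$: hardwire an input shift $y$ and use the identity $\E_x[f(x)f(x\oplus y)\ldots]$. More concretely, I would use the standard trick that running $C$ in "phase" form yields the quantum Fourier sampling state. Prepare $H^{\otimes n}\ket{0}$ on the input register, apply $C$ so that $Z_t$ kicks back phases, and uncompute. This gives a state whose amplitude on $\ket{S}$ is essentially $\hat f(S)$, up to the mixedness coming from the ancillae, which I would handle by purifying. Sampling then returns $|S|\ge k$ with probability $\ge \wgk[f_C]$. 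Since $\QACZ$ is not allowed measurement-based adaptivity, I would instead \emph{fix} a good set $S$ nonuniformly: by averaging, there is a restriction (a fixing of inputs outside some $T$, $|T|\approx k$) under which the restricted function retains correlation $\ge \sqrt{\wgk[f_C]}/\poly$ with $\chi_T$, possibly after renaming variables. This gives a depth-$d$ circuit with $\epsilon := s^{1/2}\gamma^{-1/2}$ correlation with $\PARITY_k$ on uniform inputs.

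\textbf{Step 2 (worst-case from average-case via self-reducibility).} Parity is random self-reducible. On input $x$, sample uniform $r$ (hardwired in superposition with Hadamards) and run the correlator on $x\oplus r$. Phase kickback then yields $(-1)^{|x|}$ times a uniform-over-$r$ correlation, giving advantage $\epsilon$ on \emph{every} input. This is done coherently, so the result is a controlled phase with amplitude bias $\epsilon$.

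\textbf{Step 3 (amplification to exact parity).} This is the main obstacle. Majority-vote amplification needs $\MAJORITY$, which is unavailable. Instead I would use the nekomata/cat-state route. A worst-case $\epsilon$-biased parity gate on $k$ bits, applied to $H^{\otimes k}\ket{0^k}$ style inputs, produces a state with $\Omega(\epsilon)$ overlap with the cat state, i.e. a weak nekomata. I would then boost fidelity with constant-depth tricks: take $m\approx \epsilon^{-2}\log$ many copies, combine them into a larger cat state using the known fact that merging cat states needs only parity of $O(1)$-sized pieces, and use oblivious amplitude amplification for $O(1/\epsilon)$ rounds. Depth is the delicate issue here, since $1/\epsilon$ rounds is not constant. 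I would try to avoid sequential amplification by splitting the $k$ bits into blocks of size $k'$ so that $\epsilon$ is only $1/\poly$, and then invoking the earlier result cited in the table (\Cref{thm:nonnegpar_to_exact}) that $1/\poly$ correlation implies exact parity with polynomial overhead. The size overhead is $\poly(1/\epsilon)=\poly(\gamma/s)$, which is exactly what is absorbed in $\wt\Theta(\gamma)$ after reparametrizing.

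Finally, the contrapositive gives the second formulation of the theorem: if $\wgk[f_C]\ge s/\wt\Theta(S(\Theta(d),k))$, then $\PARITY_k$ would have a depth-$\Theta(d)$ circuit smaller than $S$, which is a contradiction.
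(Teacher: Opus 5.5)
Your easy direction (apply $\mathrm{QLMN}(\gamma)$ at $k=n$ to $f_C=\chi_{[n]}$) matches the paper, but the converse fails at Step~1. Fix the coordinates outside $T$ to $z$. The correlation of the restricted function with $\chi_T$ is then $\sum_{U\subseteq \bar T}\hat f(T\cup U)\chi_U(z)$, and its mean square over $z$ is $\sum_{S\supseteq T}\hat f(S)^2$. If the level-$\geq k$ mass is spread over many sets, this is tiny for \emph{every} $T$; for a flat spectrum it equals $2^{-|T|}$. So averaging gives nothing like $\sqrt{W^{\geq k}[f_C]}/\poly$. A cleverer choice of restriction does not rescue this for general bounded $f$. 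A uniformly random $f$ has $W^{\geq n/3}[f]=1-o(1)$. Yet Hoeffding plus a union bound over all $\binom{n}{k'}2^{n-k'}$ restrictions shows that, with high probability, no restriction leaving $k'\geq C\log n$ live variables has $1/\poly(n)$ correlation with their parity. Nothing beyond boundedness is known about $f_C$ (that is the open question), so Step~1 cannot be completed by averaging. Fourier sampling does not let you ``fix'' a good $S$ either, because no single $S$ need carry non-negligible weight.

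Step~3 is also not available to you. \cref{thm:nonnegpar_to_exact} is a new result of this paper, not prior work. The paper derives it as the $k=n$ case of \cref{cor:qlmntoparity}, i.e.\ of the very implication you are proving, so citing it is circular; had it been available, your Step~1 alone would finish. The only prior amplification, majority of parallel runs, needs $1/\polylog(n)$ correlation. Quantitatively, landing in $\wt{\Theta}(\gamma)$ also requires overhead linear, up to polylogs, in $1/W^{\geq k}[f_C]$. A $\poly(1/\epsilon)$ or $\poly(s)\cdot\gamma$ bound does not give that.

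The missing idea is to keep your Fourier-sampling state coherent instead of extracting one character:
\begin{itemize}
\item \cref{lem:qlmntothres} shows that $H^{\otimes n}C'\ket{+^n}\ket{0}$, with $C'$ the copy--uncompute circuit, has weight $\geq 1/2$ on $\ket{0^n}$. It also has weight $\geq W^{\geq k}[f_C]/2$ on strings of Hamming weight $\geq k$, certified by its overlap with the vector proportional to $\sum_{|S|\geq k}\hat f(S)\ket{S}$.
\item Fix a partition of $[n]$ into $\ell=k/\log^2 k$ blocks and OR each block onto a fresh qubit. This yields a skewed nekomata on $\ell$ qubits. Here averaging does work, because each individual string of weight $\geq k$ hits every block of a random partition with probability $\geq 9/10$.
\item One round of exact amplitude amplification (\cref{cl:hiampamp}), Rosenthal's parallel column-wise OR, and \cref{fact:constnekobranch} then give an exact nekomata in constant depth (\cref{lem:skewneko_to_exact}, \cref{lem:two_branch_dicke}).
\item This gives $\FANOUT_\ell$, and recursively $\PARITY_k$, with $O((m+n)\log^2 k/\epsilon)$ ancillae.
\end{itemize}
This route needs no single high-degree character and no $1/\epsilon$ sequential rounds.
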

Here $\Theta$ hides absolute constants, and $\tilde{\Theta}$ hides absolute constants and log factors. Note that the dependence on circuit size on the LHS is baked into the $\text{QLMN}$ definition. Restricting to the case of $\QACZ$, we have the following corollary.

\begin{corollary}
$\text{QLMN}(\poly_d(k^{\omega(1)}))$ holds for some $\poly_d(k^{\omega(1)})$ if and only if $\PARITY \notin \QACZ$.
\end{corollary}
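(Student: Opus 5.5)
The corollary follows from \Cref{thm:qlmnequivparity} once each direction is instantiated with the right decay function. Throughout, I use the convention that $\QACZ$ means depth $O(1)$ and size $\poly(n)$. I also use that a $\QAC$ circuit computing $\PARITY_k$ for $k\le n$ yields one for $\PARITY_n$ of proportional size and depth (and conversely by padding). The only care needed is to track the hidden $\Theta(\cdot)$ and $\tilde\Theta(\cdot)$, and to make sure ``super-polynomial'' survives the translation in both directions.

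\emph{($\Leftarrow$)} Assume $\PARITY\notin\QACZ$. I will use the second formulation of \Cref{thm:qlmnequivparity}. Let $S(d,n)$ be the largest size at which no depth-$d$ circuit computes $\PARITY_n$. For each fixed constant $d$, the assumption says no polynomial-size depth-$d$ family computes $\PARITY$, so $S(d,n)=n^{\omega(1)}$.

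\begin{itemize}
\item Strictly, the negation of membership gives only ``not bounded by any polynomial along all $n$,'' i.e.\ infinitely many bad $n$ for each polynomial. I handle this by working with $\sup$-type decay functions, or equivalently by reading the asymptotic statements along the relevant subsequence. This is the usual convention under which the corollary is stated.
\item The theorem then gives $\mathrm{QLMN}(\tilde\Theta(S(\Theta(d),k)))$.
\item Since $\tilde\Theta$ changes $S$ by at most constant and polylogarithmic factors, and $\Theta(d)$ is still a constant, the function $\gamma(d,k)=\tilde\Theta(S(\Theta(d),k))$ is $k^{\omega(1)}$ for each fixed $d$. That is precisely a $\poly_d(k^{\omega(1)})$ bound.
\end{itemize}

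\emph{($\Rightarrow$)} Assume $\mathrm{QLMN}(\gamma)$ with $\gamma(d,k)=k^{\omega(1)}$ for each fixed $d$. The theorem gives $\PARITY_n\notin\QAC(\Theta(d),\tilde\Theta(\gamma(d,n)))$. Because $\gamma(d,n)$ eventually exceeds every polynomial, no depth-$\Theta(d)$ polynomial-size circuit computes $\PARITY_n$. Ranging over all constants $d$ gives $\PARITY\notin\QACZ$.

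A direct check is also available here. If a $\QACZ$ circuit $C$ of size $s=\poly(n)$ computed $\PARITY_n$, then $f_C=\chi_{[n]}$. This gives $\mathbf{W}^{\ge n}[f_C]=1$, which contradicts QLMN at $k=n$ once $s/\gamma(d,n)<1$, and that holds for large $n$.

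\emph{Main obstacle.} The only real subtlety is the quantifier and constant bookkeeping. The depth blowup $\Theta(d)$ must stay constant, and the polylog losses in $\tilde\Theta$ must not destroy super-polynomiality. Both are harmless because $n^{\omega(1)}/\polylog(n)=n^{\omega(1)}$, and a constant multiple of a constant depth is again constant. I would state this explicitly and otherwise defer to \Cref{thm:qlmnequivparity}.
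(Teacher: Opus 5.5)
Your proposal is correct and follows the paper, which gets this corollary by specializing Theorem~\ref{thm:qlmnequivparity} to constant depth and polynomial size; your direct check for ($\Rightarrow$) is exactly the ``simpler direction'' of that theorem's proof, and your remark about infinitely-often versus almost-everywhere super-polynomial hardness is a subtlety the paper leaves implicit. One small correction: your aside that a circuit for $\PARITY_k$ with $k\le n$ yields one for $\PARITY_n$ of proportional size and depth holds, with constant depth and polynomial size blowup, only when $k=n^{\Omega(1)}$, but your argument never actually uses it.
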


Next, we provide that a variety of approximation problems and state construction tasks that are $\QACZF$-complete, or equivalently $\QTCZ$-complete, under $\QACZ$ reductions. Augmenting $\QACZ$ gates with a ``noisy" gate (and its inverse) for any one of the following tasks, provides the ability to implement any problem in $\QACZF$ \emph{without error}.

\begin{theorem}[$\QACZF$-complete problems]\label{thm:fanoutcomplete_intro}
The following tasks are equivalent to exact \FANOUT$_n$  under $\QACZ$ reductions.
\begin{enumerate}
    \item \textbf{Any non-negligible \PARITY$_n$ correlation} (\cref{thm:nonnegpar_to_exact}). Achieving non-negligible (i.e. $\geq 1/\poly(n)$) correlation with $\PARITY_n$ on a random input.
    \item \textbf{Weak nekomata-like states} (\cref{cor:felnhard}). For any non-negl $\eps \geq 0$, synthesizing a state $\rho$ satisfying $\braket{0^n | \rho |0^n} \geq \eps, \braket{1^n |\rho | 1^n} \geq \eps$.
    \item \textbf{Poly-weight Dicke states} (\cref{thm:dicketocat}). For any $\delta \in (0,1)$ and $k = \Omega(n^{\delta}) \leq n/2$, synthesizing a state with at least $(1-\eps)$ fidelity with $\dkn$. 
\item \textbf{High average case \MAJORITY$_n$ correlation} (\cref{thm:hardmajority}). For any arbitrarily small constant $\delta > 0$, achieving  at least $(1-1/n^{\delta})$  correlation with $\MAJORITY_n$ on a random input. 
    \item \textbf{Bounded worst case \MAJORITY$_n$ error} (\cref{cor:maj_worstcase}). For any $\eps= 1/\polylog(n)$, 
    achieving $\eps$-correlation with $\MAJORITY_n(\x)$ (correct wp $\geq 1/2 + \eps/2$) on every input $\x \in \bin^n$. 
\end{enumerate}
In other words, a $\QACZ$ circuit $C$, whose $m = \poly(n)$ ancillae start in $\ket{0^m}$, for any one of the above tasks implies $\qaczm = \qaczm_f$. 
\end{theorem}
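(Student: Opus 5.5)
The plan is to route every item through item~1: reduce each task to a $\QACZ$ circuit with $1/\poly(n)$ average-case correlation with $\PARITY_m$ for some $m=n^{\Omega(1)}$, and then amplify that weak correlation to exact $\PARITY$. For the reverse direction, it suffices that each task is solvable in $\QACZF$. This is known for cat/nekomata states \cite{moore1999qac0}, for Dicke states \cite{cleandicke2026}, and for $\MAJORITY$ \cite{hoyer2005fanout}. Correlation with $\PARITY$ is trivial given $\FANOUT$.

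\textbf{Step 1 (item 1, the core).} Let $C$ have $\E_x[f_C(x)(-1)^{x_1+\dots+x_n}]\ge\eps=1/\poly(n)$. Applying $\PARITY$'s self-reducibility, first remove the average-case requirement by random self-reduction. On input $x$, prepare uniformly random $r$ in superposition using Hadamards on ancillae, and feed $x\oplus r$ to $C$. Then XOR the answer with the parity of $r$. The difficulty is that computing the parity of $r$ itself needs $\PARITY$. The fix is to keep $r$ coherent and use the phase form: $C$ on input $x\oplus r$ implements a phase $(-1)^{|x\oplus r|}$ approximately, and $(-1)^{|x\oplus r|}=(-1)^{|x|}(-1)^{|r|}$ factorizes. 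Concretely, I would convert $C$ into a phase oracle by compute--phase--uncompute. This gives, on a state $\sum_r\ket{r}$, an operator whose overlap with $Z^{\otimes n}$ (after the $x$-shift) is $\ge\eps$ uniformly in $x$. The resulting circuit then prepares, from $\ket{+^n}$, a state whose overlap with $\ket{-^n}$-type parity states gives a $\ge\eps$-biased worst-case parity bit, i.e.\ correlation $\eps$ on every input. Next, boost the worst-case bias from $1/\poly$ to $1-\negl$. This is the main obstacle, since majority-voting over copies requires $\MAJORITY$, which is unavailable. Instead I would amplify via nekomata states: worst-case $\eps$-bias for parity yields a state with $\braket{0^m|\rho|0^m},\braket{1^m|\rho|1^m}\ge\eps'$ (item~2's form). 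I would then apply amplitude-amplification-like reflection steps, a constant number of rounds each squaring the overlap gap, possibly combined with the known $1/\polylog\to$ exact result \cite{joshi2026improvedlowerboundsqac0}. These reach the $\approx 1/\sqrt2$ fidelity regime, where \cite{rosenthal2021qac0,grier2026tc0} give exact $\FANOUT$. Depth stays constant because the number of rounds needed for $\eps=1/\poly$ is $O(\log(1/\eps)/\log n)=O(1)$ if each round applies parity on blocks of size $n^{\Omega(1)}$ using a tree of constant depth.

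\textbf{Step 2 (items 2--5 reduce to item 1).} For item~2, measuring $Z^{\otimes n}$-type observables of $\rho$ after Hadamards gives a felinity-based bound. Felinity (\Cref{def:felinity}) is $\gtrsim\eps$, and the state's preparation circuit, composed with input-controlled $Z$ phases and inverse preparation, yields $\Omega(\eps^2)$ parity correlation. For item~3, I would measure/project a Dicke state $\dkn$ to create a nekomata-like state on $k$ qubits with $1/\poly$ weight on both all-zero and all-one strings. For example, group coordinates into blocks and postselect-free apply local AND/OR gates so that blocks containing a $1$ map to $1^{\cdot}$. This uses $\FANOUT_k$-hardness and $k=n^{\delta}$, then item~2 applies. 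For item~4, $\MAJORITY$ with correlation $1-n^{-\delta}$ has Fourier mass $\Omega(n^{-\delta'})$ at high levels (since $\MAJORITY$'s level-$\ge\ell$ weight is $\Theta(\ell^{-1/2})$). I would restrict to a random subset/pad input so the high-level weight concentrates on a parity of a sub-block, giving $1/\poly$ correlation with some $\PARITY_{m}$; then item~1 applies. For item~5, worst-case $\eps$-correlation is amplified by $\polylog$-fold repetition and an approximate majority (available in $\QACZ$/$\ACZ$ for $1/\polylog$ gaps) to average-case $1-1/\poly$ correlation, reducing to item~4.

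The hardest step is the amplification in Step~1 from $1/\poly$ bias to the constant-fidelity regime in constant depth without $\MAJORITY$. If the nekomata-squaring approach fails, I would instead first use restrictions to raise the correlation to $1/\polylog$ on a smaller parity instance, and then invoke \cite{joshi2026improvedlowerboundsqac0}.
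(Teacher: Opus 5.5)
\textbf{The main gap is in Step~1, and the argument as structured is circular.} Step~1 reduces item~1 to a state of item~2's form, and Step~2 reduces item~2 back to item~1. The loop could only be broken by amplifying a state with merely $1/\poly$ weight on \emph{both} $\ket{0^m}$ and $\ket{1^m}$, and you never supply that step. This is exactly the regime where constant-depth tools fail:
\begin{itemize}
\item amplitude amplification from good-subspace weight $w$ needs $\Theta(w^{-1/2})$ rounds;
\item ``applying parity on blocks of size $n^{\Omega(1)}$'' presupposes the conclusion;
\item even halving the exponent of $\eps=n^{-c}$ each round would take $\Theta(\log\log n)$ rounds.
\end{itemize}
The restriction fallback also fails. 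For $f_C=\eps\chi_{[n]}+(1-\eps)\chi_{\{1\}}$, every restriction leaving $m\ge 2$ variables free has correlation exactly $\pm\eps$ with parity on them.

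The missing idea is \cref{lem:qlmntothres}, and it needs no worst-case conversion. Take Bennett's copy--uncompute circuit (\cref{fact:cleancomp}), whose clean branch on input $x$ is $p_0(x)\ket{0}+p_1(x)\ket{1}$. Run it on $\ket{+^n}$, then apply $H^{\otimes n}$ to the input register.
\begin{itemize}
\item Projecting the output qubit onto $\ket{+}$ (and the other ancillae onto $\ket{0\cdots0}$) gives amplitude $\mathbb{E}_x[p_0(x)+p_1(x)]/\sqrt2=1/\sqrt2$ on $\ket{0^n}$.
\item Projecting it onto $\ket{-}$ gives amplitude $\mathbb{E}_x[f_C(x)\chi_{[n]}(x)]/\sqrt2=\hat f_C([n])/\sqrt2$ on $\ket{1^n}$.
\end{itemize}
The result is a \emph{skewed} nekomata: weight $\ge 1/2$ on one branch and $\ge\eps^2/2$ on the other. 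The skew is what makes amplification cheap (\cref{lem:skewneko_to_exact}). Since the two branches carry constant total weight, one round of amplitude amplification (\cref{cl:hiampamp}) projects onto them. OR-ing each column across $O(1/\eps^2)$ parallel copies \cite{rosenthal2021qac0} then leaves $\approx 1/e$ on all-zeros and the rest on all-ones. \cref{fact:constnekobranch} finishes. Your instinct to feed in $\ket{+^n}$ and read off Hadamard-basis overlaps was right; what is missing is extracting this asymmetric state rather than a symmetric weak one.

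\textbf{Item~3 fails as proposed.} OR-ing blocks of $\ket{D^n_k}$ cannot put weight on both branches. Every string in the support has exactly $k\ge 1$ ones, so if the blocks cover $[n]$ the all-zero outcome has weight $0$. Shrinking the blocks enough to give the all-zero outcome non-negligible weight leaves only $O(\log n)$ expected ones inside them, far too few to hit $n^{\Omega(1)}$ blocks. Any classical gadget sees only the computational-basis distribution, i.e.\ the uniform mixture over weight-$k$ strings. The paper instead uses a quantum basis change (\cref{lem:dicketofel}). With $p=k/n$, each of $\ell=\lfloor k/\log k\rfloor$ blocks $B_i$ controls a NOT onto a fresh qubit, conditioned on $B_i$ lying in $\ket{p}^{\otimes m_i}$. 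This works because $\ket{D^n_k}$ has overlap $\Omega(k^{-1/4})$ with both $\ket{p}^{\otimes n}$ and $(Z\ket{p})^{\otimes n}$, while $\ket{p}^{\otimes m_i}$ and $(Z\ket{p})^{\otimes m_i}$ are nearly orthogonal. The targets then have felinity $\ge 1/(8k)$.

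\textbf{Item~4 also fails as proposed.} You correctly get level-$\ge k'$ weight $\Omega(n^{-2\delta})$ with $k'=\Theta(n^{2\delta})$. But in expectation over the fixed bits, a restriction with free set $J$ turns this into correlation with $\PARITY_J$ only through $\sum_{S\supseteq J}\hat f_C(S)^2$. A symmetric function with its weight on levels $j\le n^{1-\Omega(1)}$ retains only a $\binom{j}{m}/\binom{n}{m}\le (j/n)^m$ fraction, which is negligible. For example, $T_\rho\,\MAJORITY_n$ with $1-\rho=n^{-0.9}$ is $[-1,1]$-valued and has correlation $1-\Theta(n^{-0.45})$ with $\MAJORITY_n$, so it meets the hypothesis for $\delta<0.45$. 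Yet it has only $e^{-\Omega(n^{0.05})}$ weight above level $n^{0.95}$. The paper uses level-$\ge k$ weight directly. \cref{lem:qlmntothres} with general $k$ puts weight $\ge W^{\ge k}[f_C]/2$ on strings of Hamming weight $\ge k$. \cref{lem:two_branch_dicke} then ORs a random partition into $k/\log^2 k$ blocks, all of which such strings hit with high probability. This gives a skewed nekomata, hence $\FANOUT_{k/\log^2 k}$, and then $\FANOUT_n$ by recursion.

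Your item~2 matches the paper, with one correction: felinity is $\ge 4\eps^2$, not $\gtrsim\eps$, and it equals the parity correlation of $G\,C^\dagger L\,C$. Your item~5 and the converse directions also match the paper.
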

Notice that ``weak-nekomata-like states'' can have much lower fidelity and trace-distance to any nekomata compared to states trivially in $\QACZ$, such as $\ket{0^n}$.  Moreover for small enough $\eps_0, \eps_1$, they have much lower \emph{mutual information} between qubits than the $\ket{W_n}$ state (weight $1$ Dicke state), which can be synthesized exactly in $\QACZ$ \cite{grier2026tc0, cleandicke2026}. 
To reliably discern these hard states, we introduce a new measure, \emph{felinity} (\cref{sec:felinity}), and show that preparing any state with non-negligible felinity implies $\QACZ = \QACZF$ (\cref{cor:felnhard}).

Finally, we crucially distinguish between average and worst case regimes for $\MAJORITY$.
This is in contrast to $\PARITY_n$, where average case and worst case error regimes for $\PARITY$ are equivalent for $\QACZ$ as well as $\ACZ$ (with randomness). 
Both $\QACZ$ and $\ACZ$ can achieve $\Theta(1/\sqrt{n})$-correlation with $\MAJORITY$, i.e., $1/2 + \Theta(1/\sqrt{n})$ success probability, on a random input by outputting the first input bit. It is also known that $\ACZ$ cannot achieve better correlation than this for $\MAJORITY$. However, our next result shows that this is not true for $\QACZ$, thus establishing an average-case separation between the two classes.

\begin{theorem}[\cref{thm:ezmajority} informal]\label{thm:ezmaj_intro}
There exists a $\QACZ$ circuit that achieves any $(1-1/\polylog(n))$ correlation with $\MAJORITY$.
\end{theorem}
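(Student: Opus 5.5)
The plan is to build the circuit from two pieces: a $\QACZ$ primitive that approximately counts, and a classical-style amplification that turns a coarse threshold test into a sharp one. The guiding fact is that for uniform $\x$, the Hamming weight $|\x|$ is concentrated in a window of width $O(\sqrt n)$ around $n/2$. To get correlation $1-1/\polylog(n)$, we only need to be correct except on inputs with $\big||\x|-n/2\big| \le \sqrt n/\polylog(n)$, plus an error fraction of $1/\polylog(n)$. That set has probability $O(1/\polylog(n))$ by anti-concentration of the binomial. So the task reduces to deciding, with small error, whether $|\x| \ge n/2 + \Delta$ or $|\x| \le n/2-\Delta$, with gap $\Delta = \sqrt n/\polylog(n)$.

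\emph{Step 1 (quantum primitive).} I would use the known $\QACZ$ construction, credited to \cite{grier2026tc0} in the table, that achieves worst-case error $1/2-\Theta(1/n)$ for $\MAJORITY$. More generally, I would use a single-qubit gadget whose output bias is an affine, or at least monotone, function of $|\x|$. Concretely, I would apply a controlled rotation of angle $\theta$ on one target qubit conditioned on each input bit. This is implementable in constant depth only if the rotations commute and can be parallelized, which is exactly the obstacle. So instead I would take a product over $\polylog(n)$ random subsets, or simply use the existing primitive as a black box. The black box gives a bit $b(\x)$ with $\Pr[b=1]=\tfrac12+c\,(|\x|-n/2)/n$, up to lower-order terms. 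Across the gap the bias is therefore $\Theta(\Delta/n)=\Theta(1/(\sqrt n\,\polylog n))$.

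\emph{Step 2 (amplification).} Take $m$ independent copies of $b$, one per fresh ancilla block; the circuit has constant depth, so copies run in parallel. Then decide by a threshold on these $m$ bits. The naive approach needs $m\approx n\,\polylog(n)$ copies and a $\MAJORITY_m$ gate, which is circular. I would instead use the $\ACZ$ approximate-majority construction (Ajtai–Ben-Or). It decides majority correctly whenever the bias is a constant, with exponentially small error, in constant depth, and it is applied classically to measured copies, which is allowed in $\QACZ$ via $\AND$/$\OR$ gates. So I first need to boost the bias from $1/(\sqrt n\,\polylog n)$ to a constant. This is the main obstacle: classical $\ACZ$ cannot do this, or the separation would be false. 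The quantum gain must therefore come from a primitive with larger bias.

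My first attempt at a larger-bias primitive is coherent: use an $\AND$ of $\polylog$ qubits each rotated by angle $\propto|\x|$. Alternatively, I would use a phase-estimation-like gadget, Fourier sampling $|\x| \bmod$ a small modulus via $\QACZ$'s exact small-weight/$W$-state capabilities. The aim is a primitive with bias $\Omega(1/\polylog n)$ on the gap. Approximate majority over $\poly(n)$ copies then finishes the argument.

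I would close by computing the total error. It is the mass of the middle window, $O(\Delta/\sqrt n)=O(1/\polylog n)$, plus the amplified error, $\negl(n)$. This gives correlation $1-1/\polylog(n)$, with the $\polylog$ exponent tunable through the choice of $\Delta$ and the depth.
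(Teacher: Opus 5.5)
Your outer shell matches the paper's last step. You concede the window $\bigl||x|-n/2\bigr|\le\sqrt n/\polylog(n)$, which has mass $O(1/\polylog(n))$ by anti-concentration, and you finish with Ajtai--Ben-Or. But the step that carries all the quantum content is never supplied.

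\textbf{The gap.} Step 1 as written has no quantum content. A coin with $\Pr[b=1]=\tfrac12+(|x|-n/2)/n$ is exactly what a randomized $\ACZ$ circuit gets by outputting a uniformly random input bit. As you note yourself, its $\Theta(\Delta/n)$ bias across the gap cannot be amplified without an exact majority over $\approx n\,\polylog(n)$ samples. You then postulate a primitive with bias $\Omega(1/\polylog(n))$ on the gap, but constructing that primitive is essentially the whole theorem. Neither candidate you name is available:
\begin{itemize}
\item A coherent rotation of a target qubit by an angle proportional to $|x|$ is not a known $\QACZ$ operation. At angle $\pi/2$ per unit of weight it outputs $|x| \bmod 2$, i.e.\ $\PARITY$.
\item Sampling according to $|x| \bmod q$ requires $\mathrm{MOD}_q$-type counting. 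This is not known to be in $\QACZ$; exact $\mathrm{MOD}_q$ is equivalent to $\FANOUT$ there \cite{green2002acc}.
\end{itemize}
Separately, ``approximate majority over $\poly(n)$ copies'' presupposes $\poly(n)$ copies of the input, i.e.\ $\FANOUT_{\poly(n)}$. Only $\polylog(n)$ fanout is available.

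\textbf{The missing idea.} The paper does not try to build a biased coin at all.
\begin{itemize}
\item Poor man's fanout (\cref{lem:poormanfanout}) cleanly maps $\ket{1}\ket{0^n}\mapsto\ket{1}\ket{1/n}^{\otimes n}$ and fixes $\ket{0}\ket{0^n}$. So each input bit yields $n$ \emph{weak} copies. Combined with $\polylog(n)$ exact fanout, this gives $n\,\polylog(n)$ independent weak encodings $\ket{x_1/n}\cdots\ket{x_n/n}$.
\item On a single weak encoding, the paper reflects about a tuned superposition $a\ket{0^n}+b\ket{W_n}$ and then applies an $\OR$. This is a one-sided test that accepts with probability $\lambda_t(1+o(1))\bigl((|x|-t)/n\bigr)^2$, which is exactly zero at $|x|=t$ (\cref{lem:onecopytest}).
\item Acceptance is a rare event whose probability grows by a constant factor between $\bigl||x|-t\bigr|<\sqrt n/(2a)$ and $\bigl||x|-t\bigr|>\sqrt n/a$. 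So an $\OR$ over $\Theta(na^2)$ weak copies, not a majority, turns it into a constant gap. Approximate majority on $O(\log n)$ such bits then finishes (\cref{lem:approxtcircuit}).
\item The test is symmetric in $|x|-t$, so it detects closeness to $t$ rather than which side of $n/2$ you are on. The paper therefore tiles $\polylog(n)$ thresholds on each side of $n/2$ and combines the results classically.
\end{itemize}

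In short, you need to switch from amplifying an $O(1/\sqrt n)$ bias around $1/2$, which requires counting, to amplifying a rare event of probability $O(1/(na^2))$ whose rate changes by a constant factor across the gap. That needs only $\OR$ plus the $\approx na^2$ samples that poor man's fanout supplies.
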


This brings us back to Moore's original question of whether \PARITY$_n \in$ $\QACZ$. This question is indeed the key to determining the computational power of $\QACZ$, regardless of which notion of computation, exact/clean/approximate/average case/worst case, is used. Our results clear the path towards understanding $\QACZ$ and motivate further study into the Fourier spectrum of $\qaczm$. From a practical standpoint, it is sufficient to produce a gate for one of the much weaker approximation tasks from \cref{thm:fanoutcomplete_intro} to unlock the full power of $\QACZF$, which includes modular arithmetic, QFT, sorting, and arbitrary Dicke state preparation \cite{moore1999qac0, green2002acc, hoyer2005fanout, takahashi2012collapse, cleandicke2026}. 


\subsection{Prior work}\label{sec:prior_work}

A key distinction between $\QACZ$ and $\ACZ$ circuits, arising from reversibility, is the role of \FANOUT. In $\ACZ$, copying the outputs of gates is free. Quantumly, due to the no-cloning theorem, copying the output of gates is impossible. Instead, $\FANOUT$ refers to copying the classical information of a qubit. The \emph{reversible} $\FANOUT_n$ gate copies a qubit onto $n$ targets via XOR, and under Hadamard conjugation corresponds to a clean $\PARITY_n$ gate. 
Foundational work by \cite{moore1999qac0} established equivalences between $\FANOUT_n$, $\PARITY_n$, and cat/GHZ states $(\ket{\Cat_n} = \frac{1}{\sqrt{2}} \ket{0^n} + \frac{1}{\sqrt{2}}\ket{1^n})$. These can be thought of as complete problems under $\QACZ$ reductions for the class obtained by including fan-out, $\QACZF$. Early works also showed that $\QACZF$ can implement arbitrary threshold functions, modular arithmetic, and more \cite{moore1999qac0, green2002acc, hoyer2005fanout}. Subsequently, \cite{rosenthal2021qac0} discovered that preparing a ``nekomata", a weaker form of $\ket{\Cat_n}$ consisting of a \emph{mixture} rather than a superposition,  is also $\QACZF$-complete. They leveraged this new reduction to provide the first constant depth approximation of $\PARITY_n$ using exponential size. This allowed for approximate-$\FANOUT_{\polylog(n)}$ in $\QACZ$, which is limited to $\poly(n)$ size. Thereafter, 
\cite{grier2024threshold} showed that threshold gates also provide the ability to create approximate nekomata. Recently both constructions were made exact by \cite{grier2026tc0}, implying that $\QTCZ = \QACZF$ and $\FANOUT_{\polylog(n)}$ can be exactly implemented in $\QACZ$. Existing prior work on upper bounds and reductions is summarized in \cref{tab:prior}.

Prior work such as \cite{grier2024threshold, rosenthal2021qac0} has focused mainly on worst-case error regimes, where performance is measured by the \emph{error} guarantee for all inputs. \cite{rosenthal2021qac0} showed reductions between the worst-case approximations of $\QACZF$-complete problems. For high accuracy, such as $(1-o(1))$ fidelity with a nekomata or $o(1)$ worst-case error for parity, \cite{grier2026tc0}'s result implies $\QACZF$-completeness by reducing from exact nekomata. 

We also consider average-case regimes in \cref{tab:prior}, which are more natural from a Fourier-analytic standpoint. Average-case performance is measured by \emph{correlation} with the target function on a uniformly random input. For $\PARITY_n$, average-case error can be converted to worst-case error by mapping the input to a random string of the same parity, a transformation implementable in both $\QACZ$ \cite{watts2019separation} and $\ACZ$\footnote{provided access to random bits.}.  
Analogous to the amplification possible in $\ACZ$ via $\MAJORITY$ of parallel repetition, \cite{joshi2026improvedlowerboundsqac0} showed that $1/\polylog(n)$-correlation with $\PARITY_n$ can be amplified to exact $\PARITY_n$.

Regarding upper bounds, known capabilities of $\QACZ$ have been limited. Existing constructions include regimes of parity, thresholds and symmetric functions matching $\ACZ$ \cite{rosenthal2021qac0, grier2024threshold}. In addition, \cite{cleandicke2026} show that $\QACZ$ can prepare constant-weight $n$-qubit Dicke states. It is unknown whether $\QACZ$ contains all decision problems in $\ACZ$. The counterpart of this question, whether $\QACZ$ contains problems outside $\ACZ$ has been addressed to some degree.
Earlier works, \cite{Bravyi_2018, watts2019separation} established relational separations between the two classes. Recently, \cite{grier2026tc0} was the first to establish a worst-case decision problem separating $\QACZ$ and $\ACZ$.  
They showed that $\qaczm \circ \NC^0$ contains $\TCZ$, a strictly more powerful class than $\ACZ$, demonstrating that even limited classical fanout (via multiple copies) significantly increases the power of $\QACZ$. The one-sided error in \cite{grier2026tc0}'s construction is not achievable by $\ACZ$ and forms the basis for their separation. 

\section{Technical Overview}
We provide an overview of all our proof techniques organized as follows. In \cref{sec:ovpar}, we describe the equivalence to QLMN and the corollary leading to point (1) of \cref{thm:fanoutcomplete_intro}. Next, in \cref{sec:ovstate} we discuss the hardness of state preparations tasks which cover points (2,3) of \cref{thm:fanoutcomplete_intro}. Finally, in \cref{sec:ovmaj} we focus on $\MAJORITY$ and discuss points (4,5) as well as the upper bound in \cref{thm:ezmaj_intro}. The full proofs are split into two sections, reductions (\cref{sec:qlmn}) and upper bounds (\cref{sec:upper}).

\subsection{Characterizing \safeqaczf~Completeness}\label{sec:ovpar}
It is possible in $\QACZ$ to obtain a negligible but non-zero correlation with $\PARITY_n$  by hard-coding the answer to a small subset of the possible inputs. It is also possible to construct a state that has non-zero but negligible amplitude on the lower-weighted nekomata branch \cite{rosenthal2021qac0}. We show that any meaningful improvement over these approximations is as hard as exact parity.

\paragraph{Challenges and comparison to prior work.}
Existing approaches for reducing from exact $\PARITY$ to approximate parity/nekomata regimes require sufficiently strong approximations. The ``classical" approach involves first mapping the input to a uniformly random string of the same parity and amplifying by taking the $\MAJORITY$ of parallel runs. Since $\MAJORITY_{\polylog(n)}$ can be computed in $\poly(n)$ size, this applies when the correlation with $\PARITY_n$ is at least $1/\polylog(n)$ \cite{joshi2026improvedlowerboundsqac0}. In contrast, the techniques we use to prove points (1,2,3) of \cref{thm:fanoutcomplete_intro} do not have classical analogues. This is because, arbitrary $1/\poly(n)$ correlations with $\PARITY_n$ cannot be black-box amplified in $\ACZ$ since they require $\MAJORITY_n$. 

Prior inherently quantum approaches work by amplifying the weight on nekomata branches, but still require sufficient weight on both branches, $\braket{0^n | \rho | 0^n}$ and $\braket{1^n | \rho | 1^n}$. Rosenthal’s nekomata amplification \cite{rosenthal2021qac0} applies to states with most of their weight on one nekomata branch and small but non-negligible weight on the other. The amplitude amplification techniques of \cite{grier2026tc0}, require a total constant weight on the two branches to amplify to the ``good subspace" (supported only on $\ket{0^n}, \ket{1^n}$) in constant depth.
Consequently, these methods do not apply to the regimes we consider, where the only guarantee is non-negligible weight on two branches. This forces us to dig deeper to pinpoint the additional structure in these approximation tasks.

\subsubsection{Fourier concentration to Parity}
Any $n$-input quantum circuit $C$ with a designated output register has an associated real-valued Boolean function, $f_C : \bin^n \to [-1,1]$. This function is obtained by standard basis measurement distribution of the output qubit, or equivalently by the expected outcome on a $Z$ observable. 
For such functions $f$ and $0 \leq k \leq n$, the weight above level $k$, $\wgk[f]$ is defined as \cite{o2014analysis}, 
\begin{align}
    \wgk[f] := \sum_{\substack{S \subseteq [n] \\ |S| \geq k}} \wh{f}(S)^2.
\end{align}

\noindent This defines the Fourier mass of a $n$-input $\QAC$ circuit $C$ naturally through $\wh{f_C}$.  

Recall the definition of the QLMN proposition.

\qlmn*

\noindent Note that $\PARITY_n$ does not have this type of concentration for any reasonable $\gamma$, since all the mass is on the degree $n$ coefficient. Thus, the non-trivial direction is to show that a $\QACZ$ circuit $C$ for which $f_C$ is non-concentrated produces a $\QACZ$ circuit for exact $\PARITY_n$. Formally, we show the following. 

\qlmnequivparity*

This immediately provides the following consequences for approximate parity, thus proving point (1) of \cref{thm:fanoutcomplete_intro}.

\begin{restatable*}[Non-negligible correlation with parity implies fanout]{corollary}{approxtoexactpar}\label{thm:nonnegpar_to_exact}
Let $C$ be a depth $d$ $\QACZ$ circuit $C$ using $m$ ancillas whose designated output register achieves at least $\eps$ correlation with $\PARITY_n$. Then, there exists a $\QACZ$ circuit $C'$ of depth $d' = O(d)$ using $m' = \tilde{O}((m+n)/\eps)$ qubits that implements $\FANOUT_n$. 
\end{restatable*}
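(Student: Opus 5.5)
The plan is to obtain this corollary directly from \Cref{thm:qlmnequivparity}, or more precisely from its constructive direction: a circuit whose output function has non-negligible weight at high Fourier levels yields an exact $\PARITY_n$ circuit, hence $\FANOUT_n$ by Hadamard conjugation \cite{moore1999qac0}.

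\textbf{Step 1 (correlation gives top-level Fourier weight).} Let $f_C$ be the function computed by $C$, and write $\chi_{[n]}(x) = (-1)^{\sum_i x_i}$ for the parity character. The correlation of $f_C$ with $\PARITY_n$ on a uniform input is $\mathbb{E}_x[f_C(x)\chi_{[n]}(x)] = \wh{f_C}([n])$. The hypothesis therefore gives $|\wh{f_C}([n])| \ge \eps$. Consequently
\[ \mathbf{W}^{\ge n}[f_C] \;\ge\; \wh{f_C}([n])^2 \;\ge\; \eps^2 . \]
So $C$ violates a QLMN-type bound at level $k=n$. Concretely, taking $s$ to be the size of $C$ (which is $O(d(m+n))$ up to the convention for size), it violates $\mathrm{QLMN}(\gamma)$ for any $\gamma$ with $\gamma(d,n) > s/\eps^2$.

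\textbf{Step 2 (invoke the reduction).} The contrapositive of the "$\Leftarrow$" direction of \Cref{thm:qlmnequivparity} states that if QLMN fails at level $n$ for a depth-$d$ circuit, then $\PARITY_n \in \QAC(\Theta(d), \wt{\Theta}(\cdot))$. I would apply the explicit construction underlying that direction with this specific $C$. Since $k=n$, the weight sits on the single character $\chi_{[n]}$, so no restriction or random-projection step is needed to isolate a level. The construction should produce a depth-$O(d)$ circuit computing $\PARITY_n$ exactly, using $\wt O(1/\eps)$ copies of $C$ (and of $C^\dagger$) in parallel. That gives $\wt O((m+n)/\eps)$ qubits. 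Conjugating the clean $\PARITY_n$ gate by Hadamards then gives $\FANOUT_n$ with the same depth and qubit count.

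\textbf{Main obstacle.} The only delicate point is the resource accounting. The qubit bound must scale as $1/\eps$, not as $1/\eps^2$ (which is what the Fourier weight alone would suggest) and not as $s\cdot\gamma$. To get this, I would track the reduction through the signed coefficient $\wh{f_C}([n])$ rather than through $\mathbf{W}^{\ge n}$. The expected mechanism is that the construction first produces a weak nekomata-like state whose branch weights are proportional to $\eps$; an amplitude- or branch-amplification step then needs $\wt O(1/\eps)$ parallel copies, each of width $m+n$.

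I would first check that the construction in the proof of \Cref{thm:qlmnequivparity} is linear in the inverse correlation. If it is instead stated only in terms of weight, I would redo its final amplification step with the bound $|\wh{f_C}([n])|\ge\eps$ in place of the weight bound, which should save the square.
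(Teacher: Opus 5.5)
Your Steps 1--2 follow the paper's route. Its proof is one line: correlation with $\PARITY_n$ is the level-$n$ Fourier quantity, so apply \cref{cor:qlmntoparity}, i.e.\ \cref{lem:qlmntothres} at $k=n$ followed by skewed-nekomata amplification. You are also right that at $k=n$ the random block partition of \cref{lem:two_branch_dicke} can be skipped, since $\Pi^n_{\geq n}$ projects onto $\ket{1^n}$, so one can go straight to \cref{lem:skewneko_to_exact}.

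The gap is in how you resolve your ``main obstacle.'' You claim that tracking the signed coefficient gives a nekomata-like state whose branch weight is proportional to $\eps$, and hence that $\tilde O(1/\eps)$ copies suffice. That is false. At $k=n$ the gadget of \cref{lem:qlmntothres} gives exactly
\[ \bra{-^n}_X\bra{-}_t\bra{0^{m'-1}}_A\, C'\ket{+^n}_X\ket{0^{m'}} = \wh{f_C}([n])/\sqrt{2}. \]
So $\eps$ enters as an \emph{amplitude}, and the guaranteed weight on the $\ket{1^n}$ branch is only $\eps^2/2$. The sign of $\wh{f_C}([n])$ is just a phase and buys nothing.

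The amplification in \cref{lem:skewneko_to_exact} is a column-wise OR over $m_1$ parallel copies, and it acts on probabilities. Making $(1-\eps_1)^{m_1}$ a constant needs $m_1=\Theta(1/\eps_1)=\Theta(1/\eps^2)$ copies, each of width $O(m+n)$. Amplitude amplification (as in \cref{cl:hiampamp}) cannot replace this. Going from amplitude $\eps$ to constant weight takes $\Theta(1/\eps)$ sequential applications of the state-preparation circuit, which is not depth $O(d)$.

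So this route proves the corollary with $m'=\tilde O((m+n)/\eps^2)$, not $\tilde O((m+n)/\eps)$. That is also all the paper's one-line proof actually delivers. It treats the correlation $\eps$ as if it were $\W^n[f_C]$, whereas $\W^n[f_C]=\wh{f_C}([n])^2\ge \eps^2$, exactly as you wrote in Step 1, and \cref{cor:qlmntoparity} is stated in terms of weight.

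The qualitative conclusion, which is what the rest of the paper uses, is unaffected: non-negligible correlation yields $\FANOUT_n$ in $\QACZ$ with polynomial overhead. But the linear dependence on $1/\eps$ would need a genuinely different constant-depth amplification that works at the level of amplitudes. Neither your proposal nor the paper supplies one.
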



The key step for this proof is a gadget which leverages $C$ to construct a ``nekomata-like'' state that has constant amplitude on the $\ket{0^n}$ branch and non-negligible amplitude on the subspace of strings with hamming weight at least $k$. From there, the following lemma suffices:


\begin{restatable*}[Threshold-block nekomata is $\QACZF$-complete]{lemma}{twobranchdicke}\label{lem:two_branch_dicke}
Let $C$ be a depth $d$, $m$ ancilla circuit that constructs  $\ket{\psi}_{Q,A}$ with $|Q| = n$, satisfying,
$$\Vlr{\bra{0^n}_Q \cdot \ket{\psi}}^2 \geq 1/2 \quad \text{and} \quad \Vlr{{\pnk}_Q \cdot \ket{\psi}}^2 \geq \eps$$
where $\pnk$ is the projector onto $n$-bit strings of Hamming weight at least $k$,  for $0 < k \leq n$.  
Then, there exists a $\QACZ$ circuit of depth $d' = O(d)$ and $m' = O((m+n)/\eps)$ that exactly implements $\FANOUT_{\ell}$ for $\ell = k/\log^2 k$. 
\end{restatable*}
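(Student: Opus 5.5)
Write $\ket{\psi} = \alpha\ket{0^n}\ket{a_0} + \ket{\phi_{\ge k}} + \ket{\phi_{\mathrm{mid}}}$, where the three terms are supported on Hamming weight $0$, weight $\ge k$, and weight in $(0,k)$ respectively. The hypotheses give $|\alpha|^2 \ge 1/2$ and $\|\ket{\phi_{\ge k}}\|^2 \ge \eps$. The plan is to turn this into a genuine nekomata on $\ell = k/\log^2 k$ qubits with both branches of non-negligible weight, then amplify to exact $\FANOUT_\ell$.

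\textbf{Step 1: compress and concentrate.} Partition $Q$ into $\ell$ blocks of size $b = n/\ell$. Apply a clean $\OR_b$ to each block, writing the result into a fresh register $R$ of $\ell$ qubits; this is constant depth in $\QAC$ with uncomputation available. On the $\ket{0^n}$ branch, $R$ holds $\ket{0^\ell}$. On the weight-$\ge k$ branch, at least one block fires, but to get $\ket{1^\ell}$ we need every block to fire.

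First apply a random permutation. This can be implemented by controlled swaps on ancilla randomness, or the randomness can be absorbed into a mixed state, which is fine since a nekomata is itself mixed. A weight-$\ge k$ string is then spread nearly uniformly over the blocks. With $\ell = k/\log^2 k$, each block has expected weight $\log^2 k$. By Chernoff and a union bound over blocks, every block is nonzero except with probability $\ell e^{-\Omega(\log^2 k)} = k^{-\omega(1)}$.

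After the $\OR$s, therefore, $\braket{0^\ell|\rho_R|0^\ell} \ge 1/2$ and $\braket{1^\ell|\rho_R|1^\ell} \ge \eps - \negl$. The intermediate-weight strings only add mass on other $R$-strings, which is harmless for a nekomata-type criterion.

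\textbf{Step 2: amplify the weak nekomata to exact $\FANOUT_\ell$.} We now have a state on $\ell$ qubits with weight $\ge 1/2$ on $0^\ell$ and $\ge \eps/2$ on $1^\ell$. This is Rosenthal's regime: most weight on one branch, non-negligible weight on the other.

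\begin{enumerate}
\item Run $t = O(1/\eps)$ independent copies of $C$ and Step 1 in parallel. This gives size $O((m+n)/\eps)$.
\item Measure or flag each copy via $\AND_\ell$ and $\OR_\ell$ on $R$. With constant probability some copy lands on $1^\ell$, and a copy landing on $0^\ell$ is almost certain.
\item Use a clean $\AND/\OR$ test across the $t$ flags, plus controlled swaps in constant depth, to route one $0^\ell$ copy and one $1^\ell$ copy into a target register together with a random selector bit. This yields a mixture of $\ket{0^\ell}$ and $\ket{1^\ell}$ with constant weight on each, conditioned on a heralded good event of constant probability.
\end{enumerate}

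From there, invoke the known reduction that a nekomata with constant branch weights gives exact $\FANOUT_\ell$: Rosenthal's nekomata-to-parity, made exact by the amplitude amplification of \cite{grier2026tc0}. That technique requires constant total weight on the good subspace, which we now have. Depth remains $O(d)$.

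\textbf{Main obstacle.} The delicate step is Step 2. The heralded routing must be done coherently or cleanly enough that the \cite{grier2026tc0} amplification applies. In particular, the garbage ancilla $A$ and the junk from the other $t-1$ copies must not entangle with the branches in a way that destroys the nekomata structure. A nekomata only requires a mixture, so I expect it suffices to treat the discarded copies as traced out. I would first verify that the \cite{grier2026tc0} exactification tolerates a nekomata $\tfrac12(\kb{0^\ell}\otimes\sigma_0 + \kb{1^\ell}\otimes\sigma_1)$ with arbitrary junk states $\sigma_0,\sigma_1$.
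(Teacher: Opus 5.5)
Your Step 1 is essentially the paper's argument: blocks of size about $n/\ell$, an OR per block, and the union bound $\ell e^{-\Theta(\log^2 k)}$. One caveat: do not implement the random permutation inside the circuit, since doing that coherently in constant depth would itself require fanning out the random control bits. It is also unnecessary. Averaging over the random partition and over the distribution of the weight-$\ge k$ branch shows that some \emph{fixed} partition $S_1,\dots,S_\ell$ hits every block with constant probability, and the circuit simply uses that partition. After the ORs you correctly land in the regime of \cref{lem:skewneko_to_exact}: weight $\ge 1/2$ on $\ket{0^\ell}$, $\Omega(\eps)$ on $\ket{1^\ell}$, plus junk.

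The genuine gap is Step 2. Routing a selected $\ell$-qubit copy into a target register requires a single flag or selector bit to control $\ell$ swaps. Initialize the swapped pairs to $\ket{1}\ket{0}$ and that operation is itself a fanout of the control bit onto $2\ell$ qubits. So the step presupposes $\FANOUT_\ell$, the very thing you are constructing. In addition, each target qubit would have to interact with all $t=\Theta(1/\eps)$ candidate copies, which is not constant depth. Measuring the flags does not rescue this: classically controlled swaps still need the bit fanned out, and the final exact amplitude amplification needs a unitary preparation to reflect about.

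The missing idea is to remove the junk coherently \emph{before} parallel repetition; after that, no routing is needed.
\begin{itemize}
\item Mark the good subspace with two reflection gates $(I-2\kb{0^\ell}\otimes\kb{+}_a)(I-2\kb{1^\ell}\otimes\kb{+}_a)$ onto a fresh ancilla $a$.
\item The $\ket{0^\ell}$ branch alone already gives the good part weight $\ge 1/2$, so a single exact Grover round (\cref{cl:hiampamp}) produces $\sqrt{1-\eps_1}\ket{0^\ell}\ket{\alpha_0}+\sqrt{\eps_1}\ket{1^\ell}\ket{\alpha_1}$ with $\eps_1=\Omega(\eps)$.
\item Every copy is now constant across its $\ell$ positions, so Rosenthal's grid trick applies directly. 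Stack $\Theta(1/\eps_1)$ copies as rows and OR each column into a fresh qubit $q_i$; each qubit feeds exactly one gate. This gives weight about $1/e$ on $\ket{0^\ell}$ and the rest on $\ket{1^\ell}$.
\item \cref{fact:constnekobranch} then yields an exact nekomata, hence $\FANOUT_\ell$.
\end{itemize}
Your remark that intermediate-weight strings are ``harmless'' is exactly where the argument breaks. Without the projection, a column-OR over polynomially many copies almost surely picks up a junk string, and the $\ket{0^\ell}$ branch disappears. That is presumably why you reached for routing, which is not available in $\QACZ$.
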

To prove this lemma, the idea is to partition the qubits into $\ell$ blocks, and then apply an $\OR$ gate onto a fresh ancilla. This partitioning is chosen at random. Next, we use amplitude amplification to remove the ``bad" part which consists of the subspace orthogonal to $\kb{0^n} + \kb{1^n}$. This produces a state that has most of its weight on one nekomata branch and a small but non-negligible weight on the other and can be amplified to an exact nekomata through a combination of parallel and amplitude amplification. Through \cite{rosenthal2021qac0} this provides a circuit for fanout/parity. Furthermore, when $k = n^{\delta}$, this also provides a circuit for $\FANOUT_n$ with $1/\delta$ factor increase in depth.  

We now state our key step. 
\begin{restatable*}[Fourier mass to Hamming slice]{lemma}{threshbranchwt}\label{lem:qlmntothres}
For any depth $d$, $m$-ancilla $\QACZ$ circuit $C$ acting on $n$ inputs, there exists a depth $O(d)$, $m' = m+2n+1$-ancilla $\QACZ$ circuit $C^*$ that constructs a state $\ket{\psi^*}$ satisfying 
$$\Vlr{ \bra{0^n} \cdot \ket{\psi^*}}^2 \geq 1/2 \quad \text{and} \quad \Vlr{\pnk \cdot \ket{\psi^*}}^2 \geq \wgk[f_C]/2.$$
for any $0 < k \leq n$. Here $\pnk$ is the projector onto $n$-bit strings of Hamming weight at least $k$.  \end{restatable*}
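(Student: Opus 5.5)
The plan is a controlled Fourier-sampling (Hadamard-test) construction. The circuit $C$ itself gives only the bounded function $f_C(x)=\bra{0^m}C_x^\dagger Z_t C_x\ket{0^m}$, not a $\pm1$ phase. So I would coherently query the unitary $U_x := C_x^\dagger Z_t C_x$ and read off $f_C(x)$ as the amplitude of $U_x\ket{0^m}$ on $\ket{0^m}$. A fresh control qubit will separate a trivial branch, which after Hadamards lands on $\ket{0^n}$, from a branch that carries the Fourier spectrum of $f_C$.

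\textbf{Construction.} Use registers $Q$ ($n$ qubits), a copy register $X$ ($n$ qubits), a control qubit $b$, and the $m$ ancillas of $C$. This gives $m+2n+1$ ancillas, counting $Q$ among the fresh qubits, as in the statement.
\begin{enumerate}
\item Apply $H^{\otimes n}$ to $Q$ and $H$ to $b$.
\item Apply $n$ parallel CNOTs from $Q$ into $X$. This is depth $1$, with each bit copied to a single target, so no fanout is needed.
\item Run $C$ with input register $X$.
\item Apply $CZ$ between $b$ and the output qubit $t$.
\item Run $C^\dagger$.
\item Undo the CNOTs so that $X$ returns to $\ket{0^n}$.
\item Apply $H^{\otimes n}$ to $Q$.
\end{enumerate}
The depth is $2d+O(1)=O(d)$.

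\textbf{Analysis.} Before the final Hadamards the state is
$$\tfrac{1}{\sqrt2}\ket{0}_b\,\tfrac{1}{\sqrt{2^n}}\sum_x\ket{x}_Q\ket{0^n}_X\ket{0^m}\;+\;\tfrac{1}{\sqrt2}\ket{1}_b\,\tfrac{1}{\sqrt{2^n}}\sum_x\ket{x}_Q\ket{0^n}_X\,U_x\ket{0^m}.$$

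\emph{The $b=0$ branch.} After $H^{\otimes n}$ this branch becomes $\tfrac{1}{\sqrt2}\ket{0}_b\ket{0^n}_Q\ket{0\cdots0}$. It therefore contributes exactly $1/2$ to $\Vlr{\bra{0^n}_Q\ket{\psi^*}}^2$.

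\emph{The $b=1$ branch.} Decompose $U_x\ket{0^m}=f_C(x)\ket{0^m}+\ket{\perp_x}$, where $\ket{\perp_x}$ is orthogonal to $\ket{0^m}$. The $\ket{0^m}$-component becomes, after Hadamards, $\tfrac{1}{\sqrt2}\ket{1}_b\sum_y \wh{f_C}(y)\ket{y}_Q\ket{0^n}_X\ket{0^m}$. Here I identify $y$ with $S\subseteq[n]$, using $\frac{1}{2^n}\sum_x f(x)(-1)^{x\cdot y}=\wh f(y)$.

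\emph{Bounding the projected weight.} The projector $\pnk$ acts only on $Q$. The three pieces are the $b=0$ branch, the $\ket{0^m}$-part of the $b=1$ branch, and the $\ket{\perp_x}$-part of the $b=1$ branch. They are mutually orthogonal on registers other than $Q$: the first is separated by $b$, and the last two by the ancilla subspace. Hence the squared norms after applying $\pnk$ add, and I can discard the junk term as a nonnegative contribution. This gives
$$\Vlr{\pnk\ket{\psi^*}}^2\;\ge\;\tfrac12\sum_{|y|\ge k}\wh{f_C}(y)^2=\wgk[f_C]/2.$$
The $b=0$ branch contributes nothing to this projection, since $k>0$ and it sits on $\ket{0^n}_Q$. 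The same orthogonality also shows that the first bound is exactly the $b=0$ contribution plus nonnegative terms, so $\Vlr{\bra{0^n}_Q\ket{\psi^*}}^2\ge 1/2$.

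\textbf{Main obstacle.} The step needing care is the orthogonality of the junk $\ket{\perp_x}$. It must not interfere destructively with the Fourier term once $\pnk$ is applied. This requires projecting onto $Q$ only, while keeping the ancillas and $b$ as distinguishing registers. It also requires that $X$ is restored cleanly to $\ket{0^n}$, which holds because the CNOT copy is uncomputed exactly.
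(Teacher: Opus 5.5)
Your proof is correct and is essentially the paper's construction. The paper's copy--uncompute circuit $C_0^\dagger\,\mathrm{CNOT}(t,t')\,C_0$, run on $\ket{+^n}$, is exactly your Hadamard test with the fresh qubit $t'$ playing the role of $b$: since $\mathrm{CNOT}(t,t')\ket{0}_{t'} = H_{t'}\,\mathrm{CZ}(t,t')\ket{+}_{t'}$, the two states differ only by $H_{t'}$, which does not affect norms of projections on the input register. The analysis is also the same computation in different form: the paper lower-bounds the two norms by overlaps with the test vectors $\bra{+^{n+1}}\bra{0^{m'-1}}$ and $\bra{\wh{T_k}}\bra{-}\bra{0^{m'-1}}$, while you split off the clean component $\sum_y \wh{f_C}(y)\ket{y}$ and drop the orthogonal junk.

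One small imprecision: because $C$ may act nontrivially on its input qubits, in the $b=1$ branch the copy register $X$ returns to $\ket{0^n}$ only on the $f_C(x)$ component, not on the junk as your displayed state suggests. This does not hurt the argument. With $V := C^\dagger Z_t C$, the junk $V\ket{x}_X\ket{0^m} - f_C(x)\ket{x}_X\ket{0^m}$ is orthogonal to $\ket{x}_X\ket{0^m}$, so after the unitary uncopy it is still orthogonal to $\ket{0^n}_X\ket{0^m}$, which is all your orthogonality argument uses.
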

First, given $C$, we obtain an ``approximate clean computation" circuit $C'$ by using the standard copy-uncompute technique. This is done by first making a classical copy to preserve the input registers, then running $C$ and applying a $\cnot$ gate on the output register and a fresh ancilla, and finally, ``uncomputing" by applying all the previous gates in reverse ($C^\dag$ and the initial copy layer). When the output of $C$ is boolean valued, $f_C(\x) \in \bpm$, this transformation produces a clean computation of $f_C$, i.e,
\begin{align}\label{eq:here1}
    \ket{\x}_T \ket{0^{m'}} \mapsto_{C'} \ket{\x}_T \ket{f_C(\x)} \ket{0^{m'-1}}
\end{align}
(where we interpret $\ket{f_C(\x)}$ as $\in \clr{\ket{0}, \ket{1}}$ based on its $Z$ eigenvalue).
Although \cref{eq:here1} does not hold when the output qubit is non-deterministic, with $f_C(\x) \in [-1,1]$, we can still bound the probability of the ancillae being returned to $\ket{0^{m-1}}$ on a given input. We use this observation to prove the lemma for the state $\ket{\psi^*} := H^{\tens n}_T C' \ket{+^{n}}_T \ket{0^{m'}}$ by post-selecting on the subspace containing $\ket{0^{m-1}}$ on the non-output ancillae. 

We show that the claimed overlap with $\ket{0^n}_T$ follows simply by virtue of $C$ being a single-output circuit, where, the variance of the measured output is bounded by $1/2$. To argue about the amplitude in the subspace corresponding to $\pnk$, we construct a special $+1$ eigenvector of $\pnk$, $\ket{\T_k}$, based on the Fourier coefficients of $f_C$. This $\ket{\T_k(C)}$ is essentially a gadget we design to extract the Fourier mass of $f_C$ through the bilinear form, 
\begin{align} 
\lr{\bra{\T_k} \tens \bra{-} \tens \bra{0^{m'-1}}} \ket{\psi^*_C} = \wgk[f_C]/2.
\end{align}
Since the overlap of $\ket{\psi^*}$ with any $+1$ eigenvector of $\lr{\pnk \tens I}$ lower bounds the quantity $\Vlr{\pnk \cdot \ket{\psi^*}}$, this provides us the claimed bound.

\subsection{Felinity as a measure of state preparation complexity}\label{sec:ovstate}
Existing notions of state approximation are insufficient to characterize states that are as hard as cat states in $\QACZ$. Taking fidelity for example, $\braket{\Cat_n | 0^n}^2 = 1/2$, so it is not the case that all states of constant fidelity with $\ket{\Cat_n}$ are hard. Fidelity with cat states/nekomata is only meaningful in this context when it is bounded above $1/2$. 

Another natural notion is based on mutual information between pairs of qubits. 
A nekomata can be thought of as a mixture over $n$ perfectly correlated classical bits. Thus, both nekomatas and Cat states have mutual information $1$ between every pair of the $n$ targets. On the other hand, there are states easily preparable in $\QACZ$, such as $\ket{W_n}$, with pairwise mutual information $1/n$. For a mutual-information based state hardness metric, this suggests a high cutoff, eg $\Omega(1/\polylog(n))$. We show that states with much smaller pairwise mutual information, such as $\eps \ket{0^n} + \eps \ket{1^n} + \ket{\perp}$, and much lower fidelity with \emph{any} nekomata are just as hard to construct as a nekomata in $\QACZ$. 

The same issues arise when using other related metrics such as trace distance, \emph{phase-dependent-fidelity} \cite{rosenthal2021qac0}, or \emph{interaction information}. We propose a new metric to quantify the complexity of state preparation tasks, \emph{felinity}, parameterized by the number of targets $n$. 

\begin{restatable}[Felinity of a quantum state]{definition}{felinitydef}\label{def:felinity}
 For a $n$-qubit state $\rho$, the \emph{felinity} of $\rho$, $\feln(\rho) \in [0,1]$ is defined as
 $$\feln(\rho) := 2 \sum_{\y \in \bin^n} \braket{\y|\rho|\y} \cdot \braket{\y | X^{\tens n} \rho X^{\tens n} | \y}$$
\end{restatable}

\noindent  Felinity has several natural properties, such as \emph{monotonicity}, for any qubit $q$ of $\rho$, 
\begin{align}
\feln(\rho) \leq \fel{n-1}(\tr_q \rho), 
\end{align}
and \emph{Lipschitz continuity} (\cref{lem:feltd}),
\begin{align}
|\feln(\rho)-\feln(\sigma)| \le 8\,\TD(\rho,\sigma). 
\end{align}
Felinity behaves as expected on the states discussed so far, 
\begin{itemize}
\item Any perfect $n$-nekomata $\rho$ has $\feln(\rho) = 1$, the highest possible, and any $\rho'$ that is $\eps$-close in trace distance to some nekomata has $\feln(\rho') \geq 1-\Theta(\eps)$.  
\item For any $n$-qubit product state, $\rho = \rho_1 \tens \rho_2 \dots \rho_n$, $\feln(\rho) \leq 2^{-n+1} = \negl(n)$.
\item Any $n$-qubit state $\rho$ constructed by a $\QNC$ circuit has $\feln(\rho) = \negl(n)$ (light-cone argument). 
\item  $\feln(W_n) = 0$, and in a different basis, any $\ket{\psi} = U^{\tens n} \ket{W_n}$ has $\feln(\psi) \leq \negl(n)$ (See \cref{lem:felwstate}) 
\item Any state with non-negligible mass on a string $\ket{\y}$ and the complement string $\ket{\ov{\y}}$, such as point (2) of \cref{thm:fanoutcomplete_intro}, has non-negligible felinity. 
\end{itemize}
We demonstrate that \emph{felinity} captures $\QACZF$-completeness of state preparation tasks by showing that a $\QACZ$ circuit for constructing a $n$-qubit state with non-negligible felinity implies $\PARITY_n \in \QACZ$. The key step is the following lemma, 
\begin{restatable*}[Felinity gives advantage for parity]{lemma}{parfromfeln}\label{lem:felinity_to_par}
 Let $\ket{\psi} = C \ket{0^m}$ be a state on $m = \poly(n)$ qubits constructed by a depth $d$ $\QACZ$ circuit $C$. Suppose that there exists a subset $T$ of $|T| = n$ qubits for which the state $\rho_T = \tr_{\ov{T}} \kb{\psi}$ has $\feln(\rho_T) \geq \eps$. Then, there is a depth $d' = 2d+4$ circuit $C'$ on $m' = m+2$ ancillas whose output achieves $\geq |\eps|$ correlation with $\PARITY_n$.
\end{restatable*}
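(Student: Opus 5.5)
The plan is a ``Loschmidt echo'': imprint an input-dependent phase on $T$ between $C$ and $C^\dagger$, then check whether the ancillas returned to $\ket{0^m}$. The return probability is a squared Fourier-type sum of the $T$-marginal distribution. Taking its parity-weighted average over inputs isolates exactly the pairs $(\y,\ov{\y})$, and hence the felinity.

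\emph{Construction.} Write $p(\y) := \braket{\y|\rho_T|\y}$, so that $\feln(\rho_T) = 2\sum_{\y} p(\y)p(\ov{\y})$. On input $\x \in \bin^n$, the circuit $C'$ proceeds as follows.
\begin{itemize}
\item[(i)] Prepare $\ket{\psi} = C\ket{0^m}$, which takes depth $d$.
\item[(ii)] Apply a $CZ$ gate between input bit $x_i$ and the $i$-th qubit of $T$, for every $i$. These gates act on disjoint qubits, so this is depth $1$ and implements $Z^{\x} := \bigotimes_i Z^{x_i}$ on $T$.
\item[(iii)] Apply $C^\dagger$, which takes depth $d$.
\item[(iv)] Compute onto a fresh ancilla whether all $m$ work qubits are $0$, using $X$ layers around a single $\mathrm{OR}$/Toffoli gate and one more ancilla for the output. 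This takes depth at most $3$.
\end{itemize}
The total depth is at most $2d+4$ with $m+2$ qubits beyond the inputs. We measure the output in $Z$ and output $+1$ exactly when the work register is all-zero.

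\emph{Analysis.} The probability that the work register returns to $\ket{0^m}$ is
\[
P(\x) = \bigl|\bra{0^m}C^\dagger Z^{\x} C\ket{0^m}\bigr|^2 = \bigl|\bra{\psi}(Z^{\x}_T\otimes I)\ket{\psi}\bigr|^2 .
\]
Because $Z^{\x}_T$ is diagonal in the computational basis of $T$, we get $\bra{\psi}Z^{\x}_T\ket{\psi} = \sum_{\y} p(\y)(-1)^{\x\cdot\y}$, which is real. Hence
\[
P(\x) = \sum_{\y,\y'} p(\y)p(\y')(-1)^{\x\cdot(\y\oplus\y')}.
\]
The output has expectation $f_{C'}(\x) = 2P(\x)-1$. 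Its correlation with parity is $\mathbb{E}_{\x}\bigl[(-1)^{|\x|}(2P(\x)-1)\bigr]$. For $n\ge 1$ the constant term averages to $0$. For the rest, $\mathbb{E}_{\x}(-1)^{\x\cdot(\y\oplus\y'\oplus 1^n)}$ equals $1$ if $\y' = \ov{\y}$ and $0$ otherwise. The correlation is therefore
\[
2\sum_{\y}p(\y)p(\ov{\y}) = \feln(\rho_T) \ge \eps .
\]

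\emph{Main obstacle.} I do not expect a conceptual obstacle; the main care is bookkeeping. First, the phase must be applied only on $T$, so that the trace over $\ov{T}$ yields exactly the marginal $p$. The cross terms do not spoil this, precisely because $Z^{\x}_T$ is diagonal on $T$ and acts as the identity on $\ov{T}$. Second, the all-zero test must fit within the stated depth $2d+4$ and ancilla budget $m+2$. If a single $\mathrm{OR}$ gate is not allowed to act on all $m$ qubits, I would instead use one generalized Toffoli conjugated by $X$ layers, which still costs $O(1)$ depth.
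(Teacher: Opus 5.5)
Your proposal is correct and follows the paper's proof essentially step for step. The paper builds the same echo circuit $C' = G\,C^\dagger L\,C$: a layer $L$ of $CZ$ gates between each $x_i$ and $t_i$, followed by a reversible $\mathrm{NOR}$ of all $m$ qubits onto a fresh ancilla. It reduces the output observable to $2\,|\bra{\psi} Z^{x}_T \ket{\psi}|^2 - 1$, and expands $\ket{\psi}$ in the computational basis of $T$ to obtain $\corr(f_{C'},\PARITY_n) = 2\sum_{S}|\alpha_S|^2|\alpha_{\overline{S}}|^2 = \feln(\rho_T)$.

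Your explicit convention that the output reads $+1$ exactly on the all-zero outcome (adding a free single-qubit $X$ on the output if needed) is slightly more careful than the paper. For the paper's own $G$, its identity $\bra{0}_a G^\dagger Z_a G \ket{0}_a = 2\kb{0^m} - I$ is off by a sign; the correct value is $I - 2\kb{0^m}$. This is a harmless slip, absorbed by that same $X$ gate.
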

\noindent We first recall the idea behind the Moore's original reduction to Cat state \cite{moore1999qac0, rosenthal2021qac0}. For any nekomata $\ket{\nu_n}_{T,A}$ with $n$ targets $T$ and ancillae $A$, we have the following property. Let $L$ be a layer of $n$ $\cz$ gates each acting on an input qubit and a corresponding target from $T$. On any input $\x$, the state $L \ket{\nu_n}_{T,A} \ket{\x}$ is orthogonal to $\ket{\nu_n}_{T,A}$ iff $\x$ has odd parity. To extract this information, we apply the reflection $R = (I - 2\kb{\nu_n}_{T,A} \tens \kb{+}_q)$ on a fresh output register $q$, which acts as a controlled-$X_q$ operator, controlled on $T,A$ being $\ket{\nu_n}$. $R$ can be implemented using the circuit $C$ for $\ket{\nu_n}$, though, 
\begin{align} 
R = C^\dag \cdot (I - 2\kb{0^m} \tens \kb{+}_q) \cdot C
\end{align}
This provides a single-output circuit $C' = R \cdot L$ with an output register $q$ for exactly computing $\PARITY_n$. Back to the general case of \cref{lem:felinity_to_par}, where $C$ constructs an arbitrary state $\ket{\psi}$, we start by defining $C'$ the same way as above. Then, the focus of our proof is to relate the Fourier coefficients of $f_{C'}$ to $\ket{\psi}$. 

Previous works generally treat the output of the circuit as a qubit or through the probability of measuring $\ket{1}$ on the output register. We adopt a different but equivalent \emph{observables} based viewpoint.
Specifically, $f_{C'}(\x)$ as we defined in \cref{def:booleanckt} captures the expected $\pm 1$ outcome of measuring the $Z$ observable on the output register $q$. Additionally, any Hermitian matrix $\O$ is a valid observable corresponding to a measurement with possible outcomes among the eigenvalues of $\O$. Thus, we can associate the following $2^n \times 2^n$ observable with $C'$ by contracting the ancillae, 
 \begin{align}
 \O_{C'} := \bra{0^m}_A \tens I \cdot \lr{C^\dag \cdot Z_t \tens I \cdot  C} \cdot \ket{0^m}_A \tens I
 \end{align}
This is essentially the Heisenberg evolution of the output measurement with eigenvalues in $[-1,1]$. Therefore, $\braket{\x | \O_{C'} | \x} = f_{C'}(\x)$. This re-interpretation of $f_{C'}$ allows us to cleanly expand its Fourier coefficients in terms of the quantities involved in \cref{def:felinity} for $\rho_T$, revealing that, 
\begin{align}
\wh{f_{C'}}([n]) = \feln(\rho_T).
\end{align}
\noindent and hence $\corr(f_{C'}, \PARITY_n) \geq \eps$. Then, \cref{thm:nonnegpar_to_exact} completes the reduction from exact $\PARITY_n$, providing the following corollary.  
\begin{restatable*}[Non-negl felinity states are $\QACZF$-complete]{corollary}{felntofanout}\label{cor:felnhard}
Let $\rho$ be any $n$-qubit state such that $\feln(\rho) = \eps$. If $\rho$ can be constructed by a depth $d$, $m = \poly(n)$ ancilla circuit $C$, then there is a depth $d' = O(d)$, $m' = \tilde{O}((m+n)/\eps)$ ancilla circuit to implement $\FANOUT_n$.
\end{restatable*}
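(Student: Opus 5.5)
The plan is to compose \cref{lem:felinity_to_par} with \cref{thm:nonnegpar_to_exact}. Let $C$ be the depth $d$, $m$-ancilla circuit preparing a purification $\ket{\psi} = C\ket{0^m}$ whose reduced state on the designated target set $T$ with $|T| = n$ is $\rho$. If $C$ prepares $\rho$ with some qubits traced out or discarded, we simply keep them as part of the ancilla register $\ov{T}$. The hypothesis $\feln(\rho_T) = \eps$ is then exactly the premise of \cref{lem:felinity_to_par}.

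First, we apply \cref{lem:felinity_to_par}. This gives a circuit $C'$ of depth $2d+4$ on $m+2$ ancillas, built from a layer $L$ of $n$ $\cz$ gates coupling the inputs to $T$, followed by the reflection $R = C^\dag(I - 2\kb{0^m}\tens\kb{+}_q)C$. Its output satisfies $\wh{f_{C'}}([n]) = \feln(\rho_T) = \eps$. Since the correlation with $\PARITY_n$ is precisely the top Fourier coefficient, $\corr(f_{C'},\PARITY_n) = \eps$. We should check the sign convention: if the relevant coefficient were $-\eps$, we would append an $X$ on the output qubit, so only $|\eps|$ matters. We also need that the reflection about $\ket{0^m}\tens\ket{+}$ is a generalized Toffoli conjugated by $X$ and $H$ gates, so it lies in $\QACZ$ with constant overhead; this is what makes the depth $O(d)$ and the ancilla count $m+O(1)$.

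Second, we feed $C'$ into \cref{thm:nonnegpar_to_exact} with depth $O(d)$, $m+2$ ancillas, and correlation $\eps$. This yields a $\QACZ$ circuit implementing $\FANOUT_n$ of depth $O(O(d)) = O(d)$ on $\tilde{O}((m+2+n)/\eps) = \tilde{O}((m+n)/\eps)$ qubits, which is the claimed bound.

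The substance lies in the two cited results, so the main obstacle here is bookkeeping. The one point to be careful about is that $\eps$ must be non-negligible, i.e.\ $\eps \ge 1/\poly(n)$, for $m'$ to remain polynomial. The statement allows arbitrary $\eps$ with the blow-up $1/\eps$ recorded explicitly, so this is only relevant for the interpretation as $\QACZF$-completeness. The other point to check is that the state prepared by $C$ really has a pure purification on polynomially many qubits with all ancillae starting in $\ket{0}$. This holds by the definition of $\QACZ$ state synthesis used in \cref{thm:fanoutcomplete_intro}.
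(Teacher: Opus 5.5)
Your proposal is correct and follows the paper's own argument, which is simply the composition of \cref{lem:felinity_to_par} (a depth-$O(d)$, $m+O(1)$-ancilla circuit with $|\eps|$ correlation with $\PARITY_n$) and \cref{thm:nonnegpar_to_exact}. Your remark that only the magnitude of $\wh{f_{C'}}([n])$ matters is well placed: the downstream reduction only uses the top-level Fourier weight, so the sign convention of the output measurement is irrelevant.
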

\noindent As a consequence, preparing any non-negligible felinity state on $n$ qubits in $\QACZ$, such as states in point (2) of \cref{thm:fanoutcomplete_intro}, implies $\qaczm = \QACZF$. 

\subsubsection{Dicke states imply parity}
To demonstrate an application of \emph{felinity}, we show that high-weight Dicke states can be turned into high-felinity states through a single layer of $\QACZ$ gates. When discussing Dicke states, $\ket{D^n_k}$, we without loss of generality consider the case when the weight $k \leq n/2$. This is because any $\ket{D^{n}_{n-k}}$ can be obtained by negating the bits of $\ket{D^n_k}$.
This allows us to quantify the hardness of Dicke states as a function of $k$, and the $\QACZ$-complete states as those with $k = n^{\delta}$ for any constant $\delta \in (0,1)$.  Then, we state the full version of point (3) of \cref{thm:fanoutcomplete_intro} below. 

\begin{restatable*}[Dicke states imply Fanout]{theorem}{dicketocat}\label{thm:dicketocat}
Suppose there exists a depth $d$ $\QAC$ circuit $C$ using $m$ ancillas that, for some $k \leq n/2$, constructs a $n$-qubit state $\rho$ satisfying, 
$$\TD\left(\rho,\dkn\right) \leq \frac{1}{80k},$$
 Then, there is a depth $d' = O(d)$, $m' = m \cdot \poly(n)$-ancilla $\QAC$ circuit $C'$ that implements $\FANOUT_k$. 
 Additionally, for $\delta = \log k/\log n$, there is a depth-$O(d'/\delta)$, $O(m' n^{1-\delta})$ ancillae $\QAC$ circuit for $\FANOUT_n$. 
\end{restatable*}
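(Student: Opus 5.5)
Starting from a circuit for an approximation of $\dkn$, I will use felinity and then \cref{cor:felnhard}. Since we only have $\QACZ$ circuits for Dicke states of weight $k\le n/2$, the state has no felinity on all $n$ qubits: strings of weight $k$ and their complements, of weight $n-k$, never appear together. The idea is to spend one layer of $\QACZ$ gates collapsing the $n$ qubits onto $\ell=\Theta(k)$ qubits so that the Dicke state lands on a state with constant felinity.

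\textbf{The layer.} Split the $n$ qubits into $k$ blocks at random, or into $2k$ blocks, of size roughly $n/k$. For each block, compute the $\OR$ of the block onto a fresh qubit using one generalized Toffoli per block.

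\textbf{The image of the Dicke state.} A random weight-$k$ string is essentially a balls-into-bins configuration with $k$ balls and $\Theta(k)$ bins. The resulting $\OR$-pattern $\y\in\bin^{\ell}$ therefore has a distribution that is symmetric under block permutations. We need $\y$ and its complement $\ov{\y}$ to both carry non-negligible probability.

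I do not think literal complementation works. With $k$ balls in $\ell$ bins, the occupied fraction is about $1-e^{-k/\ell}$, and this is far from $1/2$ unless $\ell\approx k/\ln 2$. Even then, the probability of hitting a specific pattern and its complement is exponentially small in $\ell$.

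So I would instead follow the route the paper sets up with \cref{lem:two_branch_dicke}. The target output state should have:
\begin{itemize}
\item constant weight on $\ket{0^\ell}$, and
\item non-negligible weight on high Hamming weight.
\end{itemize}
To get constant weight on $\ket{0^\ell}$, superpose a control qubit $\ket{+}$ that decides whether the Dicke preparation is applied at all. Concretely, prepare $\frac{1}{\sqrt2}(\ket{0}\ket{0^n}+\ket{1}\ket{D^n_k})$ using a controlled version of $C$, which is available in $\QACZ$ by conditioning every gate. The $\ket{0}$-branch then gives $\ket{0^\ell}$ after the $\OR$ layer. The $\ket{1}$-branch puts weight $\ge 1-e^{-\Omega(1)}$ on $\OR$-patterns of weight $\ge \ell/2$, by standard occupancy concentration.

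\textbf{Finishing.} Invoking \cref{lem:two_branch_dicke} with threshold $\ell/2$ gives $\FANOUT_{\Theta(k/\log^2 k)}$. I then amplify to $\FANOUT_k$ by composing with $\FANOUT_{\polylog}$ layers, which are in $\QACZ$ by \cite{grier2026tc0}.

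For the second claim, composing $\FANOUT_k$ as a tree of depth $1/\delta$ gives $\FANOUT_n$, using $n^{1-\delta}$ copies, since the Dicke-based circuit is reusable.

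\textbf{Error and the main obstacle.} Trace distance $1/(80k)$ changes all the relevant projector weights by at most $1/(80k)$, via Lipschitz continuity as in \cref{lem:feltd}. The bounds needed are only constants, so this is harmless.

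The main obstacle is implementing the controlled preparation cleanly while keeping the ancillas of $C$ entangled only in the $\ket{1}$-branch. \cref{lem:two_branch_dicke} measures the weight on the target register only, so this entanglement should be acceptable, but I would check it carefully.

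If that step fails, the fallback is to use felinity directly with a $\cz$ trick. Pair blocks and XOR them so that the pattern distribution becomes roughly uniform on a subcube. Then show $\feln$ is $\Omega(1)$ after postselecting via amplitude amplification.
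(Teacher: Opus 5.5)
\paragraph{Gap: the controlled preparation is not in $\QACZ$.} The proposal breaks at the step that supplies the $\ket{0^\ell}$ branch. The controlled preparation $\frac{1}{\sqrt2}(\ket{0}\ket{0^n}+\ket{1}\ket{D^n_k})$ is \emph{not} available in $\QACZ$ ``by conditioning every gate.'' The control qubit would have to enter every gate of every layer of $C$, including the free single-qubit gates, which become two-qubit controlled gates. So it would have to be fanned out to $\poly(n)$ gates per layer, and that is $\FANOUT$ itself.

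For example, apply your recipe to the depth-one circuit $X^{\otimes n}$, which prepares $\ket{1^n}$, with control $\ket{+}$. The result is $\ket{\Cat_n}$, so the step presupposes the conclusion. The ancilla entanglement you flagged is not the real obstacle. A symptom of the problem is that your error analysis needs only constant precision, whereas the theorem requires trace distance $1/(80k)$.

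The fallback does not repair this. OR and XOR gates in the computational basis act only on the computational-basis distribution of $\ket{D^n_k}$, so they cannot use its coherence. Felinity depends only on the diagonal of $\rho_T$, and a near-uniform pattern distribution on $\ell/2$ bits has felinity $2^{-\Omega(\ell)}$. ``Postselecting via amplitude amplification'' names no good subspace to amplify.

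\paragraph{Missing idea: a change of basis.} Both branches can be obtained from the Dicke state itself, with no $\ket{0^n}$ branch and no controlled copy of $C$. The steps are:
\begin{itemize}
\item Set $p=k/n$. Then $\ket{D^n_k}$ has overlap $\sqrt{\Pr[\mathrm{Bin}(n,p)=k]}\ge e^{-1/2}k^{-1/4}$ with $\ket{p}^{\otimes n}$. Up to the sign $(-1)^k$, it has the same overlap with $\ket{p_-}^{\otimes n}=Z^{\otimes n}\ket{p}^{\otimes n}$.
\item These product states are far from orthogonal qubit by qubit. On blocks of size $m_i\ge n/\ell$ with $\ell=\lfloor k/\log k\rfloor$, however, $|\braket{p_-|p}|^{2m_i}=(1-2p)^{2m_i}\le k^{-4}$.
\item Apply the single layer $\bigotimes_i\bigl(I-2\kb{p}^{\otimes m_i}\otimes\kb{+}_{t_i}\bigr)$. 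This is an OR in a rotated basis: it flips $t_i$ iff block $i$ lies in $\ket{p}^{\otimes m_i}$.
\item This layer puts weight $\ge e^{-1}/\sqrt{k}$ on $\ket{1^\ell}_T$. Since $\ket{p_-}^{\otimes n}$ lies almost entirely in the range of $\bigotimes_i(I-\kb{p}^{\otimes m_i})$, it also puts weight $\gtrsim 1/\sqrt{k}$ on $\ket{0^\ell}_T$. Hence $\fel{\ell}\ge 1/(8k)$ (\cref{lem:dicketofel}).
\item Lipschitz continuity (\cref{lem:feltd}) costs at most $8\cdot\frac{1}{80k}=\frac{1}{10k}$, leaving felinity $\Omega(1/k)$.
\item \cref{cor:felnhard} then gives $\FANOUT_\ell$. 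The recursive structure of fanout yields $\FANOUT_k$ and $\FANOUT_n$.
\end{itemize}
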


To prove \cref{thm:dicketocat}, the reduction of \cref{lem:two_branch_dicke}, which only uses classical gates does not suffice. Our key observation here is that $\dkn$ has non-negligible overlap with the states $\ket{p}^{\tens n}$ and $(Z\ket{p})^{\tens n}$, for $p = k/n$. Although $\ket{p}$ and $Z \ket{p}$ are far from orthogonal, for $t \gg n/k$, $\ket{p}^{\tens t}$ and $(Z\ket{p})^{\tens t}$, have negligible overlap. Then, we show that $\ket{D^n_k}$ can be transformed into a state $\rho$ with non-negligible $\fel{\ell}(\rho)$, by applying a layer of $\ell \approx k/\log k$ reflection gates, each acting as a controlled-not onto a fresh ancilla, with controls on $\ket{\rho}^{\tens t}$. This reduces us to \cref{cor:felnhard}. Then, when $k = \poly(n)$, $\delta$ is a constant, implying the following. 

\begin{restatable*}[Preparing poly-weight Dicke state is $\QACZF$-complete]{corollary}{polydickeisfanout}
For any $k = n^{\Theta(1)}$, $k \leq n/2$, if there is a $\QACZ$ circuit that prepares a $n$-qubit state $\rho$ within $1/k$ trace distance of either $\ket{D^n_k}$ or $\ket{D^n_{n-k}}$, then $\QACZ = \QACZF$.
\end{restatable*}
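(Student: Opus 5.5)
The statement to prove is the corollary that poly-weight Dicke states are $\QACZF$-complete. The plan is to reduce it directly to \cref{thm:dicketocat}, handling the two cases $\ket{D^n_k}$ and $\ket{D^n_{n-k}}$ separately, and then to invoke the equivalence between $\FANOUT_n$ and $\QACZF$.

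\textbf{Reducing the complementary case.} If the circuit $C$ prepares $\rho$ within $1/k$ of $\ket{D^n_{n-k}}$, append a layer of $X$ gates on the $n$ target qubits, giving $\rho' = X^{\tens n}\rho X^{\tens n}$. Trace distance is unitarily invariant, so $\rho'$ is within the same distance of $X^{\tens n}\ket{D^n_{n-k}} = \ket{D^n_k}$. This costs one extra layer and no ancillae. We may therefore assume $\rho$ approximates $\ket{D^n_k}$ with $k \le n/2$.

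\textbf{Matching the error parameter.} \cref{thm:dicketocat} requires trace distance at most $\frac{1}{80k}$, while we only have $1/k$. I see two options.
\begin{itemize}
\item Reading the corollary's $1/k$ as "$O(1/k)$ with a suitable constant", which is the informal $(1-\eps)$-fidelity regime of \cref{thm:fanoutcomplete_intro}, lets us apply the theorem directly.
\item Otherwise, change the instance size. Preparing $\ket{D^n_k}$ within $1/k$ and tracing out qubits does not give a smaller Dicke state, so instead I would appeal to the proof of \cref{thm:dicketocat}. Its robustness is governed by \cref{lem:feltd}: felinity is $8$-Lipschitz in trace distance. So it suffices that the state obtained after the reflection layer has felinity bounded away from $0$ by a constant multiple of $1/k$ beyond $8\cdot\TD$. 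Tuning the parameter $t$ in that construction should absorb the constant $80$.
\end{itemize}
I expect this constant-matching to be the only delicate point. I would first try the cleanest route: state that constants in the corollary are absorbed, in line with the informal statement.

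\textbf{Getting $\FANOUT_n$ and concluding.} Applying \cref{thm:dicketocat} gives a depth $O(d)$, $m\cdot\poly(n)$-ancilla circuit for $\FANOUT_k$. Since $k = n^{\Theta(1)}$, we have $\delta = \log k/\log n = \Theta(1)$, a constant bounded away from $0$. The "additionally" clause of \cref{thm:dicketocat} then yields a $\FANOUT_n$ circuit of depth $O(d/\delta) = O(d)$, which is constant, using $O(m' n^{1-\delta}) = \poly(n)$ ancillae. This is a $\QACZ$ circuit. The argument works for every $n$, since the construction is uniform in the given family of Dicke-preparing circuits. Hence $\FANOUT_n \in \QACZ$, and by \cite{moore1999qac0} (fanout and parity are equivalent under Hadamard conjugation, and $\QACZF$ is $\QACZ$ augmented with fanout gates) we get $\QACZ = \QACZF$.
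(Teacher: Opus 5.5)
Your argument is the paper's proof: flip the $\ket{D^n_{n-k}}$ case with $X^{\otimes n}$, apply \cref{thm:dicketocat}, and use $\log k/\log n=\Theta(1)$ to get constant-depth $\FANOUT_n$; the paper also silently treats ``within $1/k$'' as if it met the theorem's $\frac{1}{80k}$ hypothesis, which is your first option. Your second option would not repair this as described. Any route through \cref{lem:feltd} needs certified felinity above $8/k$, but the argument of \cref{lem:dicketofel} bounds both branches via overlaps with a single product state, so it can never certify more than $4\Pr[\mathrm{Bin}(n,k/n)=k]^2\le\frac{4}{\pi k}$, whatever the block size. The easy fix is to bypass the Lipschitz bound and perturb the two branch weights on the register $T$ separately: $\langle 0^\ell|\sigma|0^\ell\rangle$ and $\langle 1^\ell|\sigma|1^\ell\rangle$ are each at least $c/\sqrt{k}$ for the ideal state (by the proof of \cref{lem:dicketofel}), and each moves by at most $\mathrm{TD}(\rho,\ket{D^n_k})\le 1/k$. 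So the felinity stays at least $4\bigl(c\,k^{-1/2}-k^{-1}\bigr)^2=\Omega(1/k)$, and \cref{cor:felnhard} applies as before.
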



Contrary to the equivalence between cat states and nekomatas, this reduction does not apply to an entangled version of a Dicke state, i.e a uniform \emph{mixture} over strings of hamming weight $k$, rather than a superposition. Perhaps a more appropriate analogue of such a state is the ``Poor-man's cat state" which is a \emph{mixture} over strings of odd parity and can be prepared in $\QACZ$, and in fact in $\QNC^0$ \cite{watts2019separation}. As expected, the poor man's cat state has $\negl(n)$ felinity. The state $\ket{\Cat_n}$ on the other hand, is a \emph{superposition} over even parity strings (in the Hadamard basis), which is crucial to making its felinity $1$.

In conclusion, \emph{felinity} provides a concrete characterization of states that are hard to construct in $\QACZ$ and may be of interest to subsequent work on understanding the power of $\QACZ$ through the ancilla-only state synthesis setting.

\subsection{Hardness classification of Majority regimes}\label{sec:ovmaj}
$\MAJORITY$ is the canonical example used to illustrate the insufficiency of Fourier concentration lower bounds for $\ACZ$. In contrast, we show the feasible regimes of approximate  $\MAJORITY$ in $\QACZ$ are fully determined by their Fourier concentration. First we obtain the following two corollaries as a consequence of \cref{thm:qlmnequivparity}, that show the remaining two points (4,5) of \cref{thm:fanoutcomplete_intro}. 
\begin{restatable*}[High correlation with $\MAJORITY$ implies $\FANOUT$]{corollary}{hardmajority}\label{thm:hardmajority}
  If there is a depth $d$, $m$-ancilla $\QACZ$ whose output achieves at least $(1-\frac{1}{n^{\delta}})$-correlation with $\MAJORITY_n$ for any constant $\delta > 0$, then, there exists a depth $d' = O(d)$, $m' = \poly(m+n)$-ancilla circuit that exactly implements $\FANOUT_n$.
\end{restatable*}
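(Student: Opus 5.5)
The plan is to show that high correlation with $\MAJORITY_n$ forces non-negligible Fourier weight above a polynomial level $k$. We then combine \cref{lem:qlmntothres} with \cref{lem:two_branch_dicke} (equivalently, the non-trivial direction of \cref{thm:qlmnequivparity}) to obtain exact $\FANOUT$.

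\textbf{Step 1: high-level weight of $f_C$.} Let $f = f_C$ and $g = \MAJORITY_n$. The hypothesis gives $\corr(f,g) = \langle f, g\rangle \ge 1 - n^{-\delta}$. Since $\|f\|_2 \le 1$ and $\|g\|_2 = 1$, we have
\[ \|f-g\|_2^2 = \|f\|_2^2 + 1 - 2\langle f,g\rangle \le 2n^{-\delta}. \]
We use the standard fact that $\MAJORITY_n$ has tail weight $\mathbf{W}^{\ge k}[g] = \Theta(1/\sqrt{k})$, with this weight spread over odd levels up to $n$. Define $h^{\ge k}$ as the projection of a function onto levels $\ge k$. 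The triangle inequality gives
\[ \sqrt{\wgk[f]} = \|f^{\ge k}\|_2 \ge \|g^{\ge k}\|_2 - \|(f-g)^{\ge k}\|_2 \ge c\,k^{-1/4} - \sqrt{2}\,n^{-\delta/2}. \]
Now choose $k = n^{\delta'}$ for a constant $\delta' < 2\delta$, say $k = n^{\delta}$. Then $c\,k^{-1/4} = c\,n^{-\delta/4}$ dominates $n^{-\delta/2}$, so
\[ \wgk[f] \ge \Omega(n^{-\delta/2}), \]
which is $1/\poly(n)$. This is why $1 - n^{-\delta}$ correlation suffices while $1 - 1/\polylog(n)$ does not: the deficit must be smaller than $\MAJORITY$'s own tail weight at a polynomial level.

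\textbf{Step 2: feeding into the reductions.} Apply \cref{lem:qlmntothres} to $C$. This gives a depth $O(d)$ circuit with $m + 2n + 1$ ancillae preparing $\ket{\psi^*}$ with $\|\bra{0^n}\ket{\psi^*}\|^2 \ge 1/2$ and $\|\pnk\ket{\psi^*}\|^2 \ge \wgk[f]/2 = \Omega(n^{-\delta/2})$. Then \cref{lem:two_branch_dicke} with $\eps = \Omega(n^{-\delta/2})$ yields exact $\FANOUT_\ell$ for $\ell = k/\log^2 k = \tilde\Omega(n^{\delta})$. The cost is depth $O(d)$ and $O((m+n)n^{\delta/2})$ ancillae.

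\textbf{Step 3: boosting to $\FANOUT_n$.} As remarked after \cref{lem:two_branch_dicke}, we compose $\FANOUT_\ell$ gadgets in a tree of fan-in $\ell$. This takes $O(\log n/\log \ell) = O(1/\delta)$ levels, using the standard parity-tree / fanout-tree reduction of \cite{moore1999qac0}. The result is exact $\FANOUT_n$ in depth $O(d/\delta) = O(d)$ for constant $\delta$ with $\poly(m+n)$ ancillae.

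\textbf{Main obstacle.} The delicate point is the quantitative Fourier tail bound for $\MAJORITY_n$: we need $\mathbf{W}^{\ge k}[\MAJORITY_n] \ge c/\sqrt{k}$ uniformly for $k \le n$. I would derive it from the explicit formula $\widehat{\MAJORITY_n}(S)^2$ for $|S| = j$ odd, which gives level weight $\Theta(j^{-3/2})$ for $j \le n$, as in \cite{o2014analysis}. Summing over odd $j \ge k$ then gives $\Theta(k^{-1/2} - n^{-1/2})$, which is $\Omega(k^{-1/2})$ for $k \le n/4$. I would also check that the correlation-to-$L_2$ step is valid when $f$ is $[-1,1]$-valued; it is, because $\|f\|_2 \le 1$.
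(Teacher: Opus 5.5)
Your proposal is correct and follows the paper's route. The $\Omega(k^{-1/2})$ Fourier tail of $\MAJORITY_n$ forces $1/\poly(n)$ weight of $f_C$ above a polynomial level $k$, and then \cref{lem:qlmntothres}, \cref{lem:two_branch_dicke} and the fanout-tree composition (which the paper packages as \cref{cor:qlmntoparity}) give exact $\FANOUT_n$. The only difference is that you extract the high-level weight via the $L_2$ triangle inequality with $\|f-g\|_2 \le \sqrt{2}\,n^{-\delta/2}$, whereas the paper splits the correlation into low- and high-degree parts and applies Cauchy--Schwarz; both yield an equally usable $1/\poly(n)$ bound.
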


Then, the result follows directly from \Cref{cor:qlmntoparity}.
Next, since a worst case error of up to $1/\poly(n)$ for any function can be boosted to arbitrary $(1-1/\poly(n))$-correlation through parallel repetitions, we also have the following.

\begin{restatable*}[Worst-case approx $\MAJORITY$ implies $\FANOUT$]{corollary}{hardworstmajority}\label{cor:maj_worstcase}
Suppose there is a $n$-input $\qaczm$ circuit $C$ such that for all inputs $\x \in \bin^n$, measuring the output register of $C$ results in $\MAJ_n(\x)$ with probability at least $1/2 + \eps$, for some $\eps \geq 1/\polylog(n)$. 
Then, $\PARITY_n \in \qaczm$. 
\end{restatable*}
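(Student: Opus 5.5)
The plan is to reduce to \cref{thm:hardmajority} by amplifying the worst-case advantage through parallel repetition and a small majority vote. Let $C$ be the given circuit, with $\Pr[\text{output} = \MAJ_n(\x)] \geq 1/2+\eps$ for every $\x$ and $\eps \geq 1/\polylog(n)$. Set $r = \Theta(\log n/\eps^2) = \polylog(n)$.

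\textbf{Step 1: parallel copies.} Copy the input $\x$ classically onto $r$ fresh input registers. Only $r = \polylog(n)$ copies of each input bit are needed, which is exactly $\FANOUT_{\polylog(n)}$. This is available exactly in $\QACZ$ by \cite{grier2026tc0}, or simply as a depth-$O(\log r/\log\log r)$ tree of CNOTs using $\poly(n)$ ancillae. Then run $r$ independent copies of $C$, each with its own ancillae, and measure (or coherently keep) the $r$ output bits.

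\textbf{Step 2: concentration.} Because the copies act on disjoint ancillae, the output bits are independent for each fixed $\x$. Each equals $\MAJ_n(\x)$ with probability at least $1/2+\eps$. By Hoeffding's inequality, the majority of the $r$ bits is wrong with probability at most $e^{-2\eps^2 r} \leq n^{-2}$, for a suitable constant in $r$.

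\textbf{Step 3: the final vote.} The remaining task is computing $\MAJORITY_r$ on $r = \polylog(n)$ bits in $\QACZ$. This is the step I expect to need the most care. Any Boolean function on $r$ bits has a DNF of size $2^r$. For $\MAJ_r$ that is $\binom{r}{r/2} \le 2^{\polylog(n)}$, which is too big. Instead I would use that $\ACZ$ computes $\MAJORITY$ on $\polylog(n)$ bits in polynomial size and constant depth (a standard fact). I would then simulate that $\ACZ$ circuit reversibly in $\QACZ$. The fanout it requires is only $\poly(n)$ copies of $\polylog$-many wires, handled by CNOT trees of constant depth. Here I would use the paper's convention that polynomial size with constant-depth fanout of $\polylog$ width is fine; if needed, invoke \cite{grier2026tc0}'s exact $\FANOUT_{\polylog(n)}$.

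\textbf{Step 4: conclude.} The resulting circuit has constant depth and $\poly(n)$ size, and agrees with $\MAJ_n(\x)$ with probability at least $1-n^{-2}$ on every input. Hence it agrees with probability at least $1-n^{-2}$ on a uniformly random input, i.e.\ it achieves correlation at least $1-2n^{-2} \ge 1-1/n^{\delta}$ for $\delta=1$. Applying \cref{thm:hardmajority} yields a $\QACZ$ circuit for $\FANOUT_n$, and by Hadamard conjugation $\PARITY_n \in \qaczm$.
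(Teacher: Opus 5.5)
Your skeleton matches the paper's proof: copy $\x$ with $\FANOUT_{\polylog(n)}$ from \cref{fact:polylogfan}, run $C$ in parallel, apply Hoeffding, take a majority vote, then invoke \cref{thm:hardmajority}. Steps 1, 2 and 4 are fine, with one caveat. Your fallback of a CNOT tree of depth $O(\log r/\log\log r)$ is not constant depth when $r=\polylog(n)$, so only \cref{fact:polylogfan} works there.

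\textbf{The gap is Step 3.} Reversibly simulating a constant-depth $\ACZ$ circuit for $\MAJORITY_r$ requires copying some wires superpolylogarithmically many times. Neither tool you name does this. A CNOT tree producing $N$ copies has depth $\Omega(\log N)$. A constant-depth composition of $\FANOUT_{\polylog(n)}$ gates yields only $\polylog(n)$ copies.

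No cleverer $\ACZ$ circuit avoids this. If every wire of a depth-$d$ circuit on $r$ inputs has fanout at most $F$, the circuit has at most $r(F+1)^d$ input-dependent gates, which is $\polylog(n)$ when $F,r=\polylog(n)$. But for $r=\log^c n$, \cite{hastad1986switch} forces size $2^{\Omega(r^{1/(d-1)})}$, which is superpolylogarithmic. Adding parity gates does not help either, by \cite{razborov1987lower, smolensky1987algebraic}. This is exactly why the paper only feeds $\ACZ$ circuits on $O(\log n)$ bits (Ajtai--Ben-Or \cite{ajtai1984probabilistic}) into \cref{cl:ac0postprocessing}.

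The missing ingredient is a genuinely quantum circuit for $\MAJORITY_r$. Use the exact constant-depth threshold circuits built from fanout gates of \cite{takahashi2012collapse}, building on \cite{hoyer2005fanout}. These have size $\poly(r)$, so every fanout gate in them has arity $\poly(r)=\polylog(n)$. Replacing each such gate by the $\QACZ$ circuit of \cref{fact:polylogfan} gives a constant-depth, $\poly(n)$-size $\QACZ$ circuit computing $\MAJORITY_r$ exactly.

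Exactness, or at least $1/\poly(n)$ error in the vote, matters. A vote that errs with probability $1/\polylog(n)$ only gives correlation $1-1/\polylog(n)$, which is below what \cref{thm:hardmajority} requires. The paper's proof simply asserts that $\MAJORITY_k$ on $k=\polylog(n)$ bits is in $\QACZ$. The argument above is the justification for that assertion, and it should replace your Step 3.
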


\noindent As a consequence, if $\MAJ_n \in \bqac^0$, i.e, can be computed with worst case error $1/3$,  then, $\qaczm = \qaczm_f$.

Analogous $\ACZ$ lower bounds to both these corollaries follow from \cite{lmn1993ac0}. However, much stronger $\ACZ$ lower bounds for $\MAJORITY$ hold and are obtained from other techniques \cite{razborov1987lower, smolensky1987algebraic}. We demonstrate next that these stronger bounds and techniques do not apply to $\QACZ$. 
\subsubsection{Average case quantum advantage.} 
We show that $\QACZ$ circuits can achieve $(1-1/\polylog(n))$-correlation with $\MAJORITY$ on average. Classically,  any randomized $\ACZ$ circuit can be de-randomized while preserving its average case correlation \cite{arora2009computational}. However, randomness in quantum algorithms cannot be distilled in this manner and crucially, for inputs $\x$ whose hamming weight is far from $n/2$, the output of our circuit is not deterministic.

\cite{grier2026tc0} showed that the state $\ket{W_n}$ can be constructed in $\QACZ$. We observe that any superposition of the states $\ket{0^n}$ and $\ket{W_n}$ can be constructed in $\ACZ$. Furthermore, we can implement the map $\ket{0} \mapsto \ket{0^n}$ and $\ket{1} \mapsto \ket{W_n}$ in $\QACZ$ (\cref{lem:ctrlW}). 
This allows us to implement the following map in $\QACZ$, which serves as a ``weak fanout" gadget. 

\begin{restatable*}[Poor man's $n$-fanout in $\QACZ$]{corollary}{poormansfanoutmap}\label{lem:poormanfanout}
There exists a $\QACZ$ circuit acting on an input qubit 
$x$, that cleanly achieves the following mapping,
$$\ket{0}_x \ket{0^n} \mapsto \ket{0}_x \ket{0^n}, \quad \ket{1}_x \ket{0^n} \mapsto \ket{1}_x \ket{1/n}^{\tens n}.$$
where $\ket{\eps} := \sqrt{\eps}\ket{1} + \sqrt{1-\eps}\ket{0}$ for $\eps \in [0,1]$.
\end{restatable*}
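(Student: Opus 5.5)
The plan is to obtain the poor man's fanout by composing the controlled-$W$ map of \cref{lem:ctrlW}, namely $\ket{0}_x\ket{0^n}\mapsto\ket{0}_x\ket{0^n}$ and $\ket{1}_x\ket{0^n}\mapsto\ket{1}_x\ket{W_n}$, with a layer of gates that turns $\ket{W_n}$ into $\ket{1/n}^{\tens n}$ while fixing $\ket{0^n}$. Write $\ket{1/n}^{\tens n}=\sum_{j}\alpha_j\ket{D^n_j}$ with $\alpha_j=\sqrt{\binom{n}{j}}\,n^{-j/2}(1-1/n)^{(n-j)/2}$. Since $\alpha_0$ and $\alpha_1$ are both $\Theta(1)$ but $\alpha_j$ for $j\ge 2$ is nonzero, the target state is not a superposition of $\ket{0^n}$ and $\ket{W_n}$ alone. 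So a single controlled-$W$ does not suffice, and I would instead build the product state directly, qubit by qubit, conditioned on $x$.

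The most direct route I would try first is this. Conditioned on $x=1$, each target qubit should be rotated by $R_y(\theta)$ with $\sin^2(\theta/2)=1/n$. This is a controlled-$R_y$ from $x$ to all $n$ targets, which is a fanout of $x$ and so is not available. To avoid fanout, I would use the fact that the needed correlation is weak. Prepare, conditioned on $x$, the state $\ket{W_n}$ on $n$ fresh ancilla registers $B$ via \cref{lem:ctrlW}. Under $x=1$, each qubit $b_i$ is then $\ket{1}$ with amplitude $1/\sqrt n$, and exactly one $b_i$ is set. The targets cannot simply copy $B$, since that would leave garbage entangled with the targets.

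The second idea, which I expect to be the actual proof, runs in the other direction. Apply a layer of $R_y(\theta)$ to all targets unconditionally, producing $\ket{1/n}^{\tens n}$. Then use a clean controlled-$(\ket{0^n}\leftrightarrow\ket{1/n}^{\tens n})$ reflection, implemented as $U\,(\text{controlled } X)\,U^\dagger$ with $U=R_y(\theta)^{\tens n}$ and an $\OR/\AND$ gate that tests for all-zero. This lets $x$ control the swap between the two product states. Concretely, $V=I-2\kb{\phi}$ with $\ket{\phi}=(\ket{0^n}-\ket{1/n}^{\tens n})/\|\cdot\|$ maps $\ket{0^n}$ exactly to $\ket{1/n}^{\tens n}$ when both states have equal norm. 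Here $V$ is a reflection about the difference of two states whose overlap $(1-1/n)^{n/2}\approx e^{-1/2}$ is a constant. Implementing a controlled reflection about a state preparable in constant depth is standard: conjugate a multi-controlled $Z$ (one $\TOF$/$\AND$ gate with $x$ as an extra control) by the preparation unitary. The obstacle is that $\ket{\phi}$ is a superposition of two easy states, not itself a product state. I would handle this with an extra ancilla $a$ and the LCU-style preparation $\ket{-}_a$ followed by controlled-$U$. The controlled-$U$ from $a$ to the $n$ targets is itself a fanout, and here \cref{lem:ctrlW}'s technique of a constant-depth controlled state preparation is exactly what would be needed, with $W_n$ replaced by $\ket{1/n}^{\tens n}$.

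The main obstacle is therefore producing $\ket{1/n}^{\tens n}$ controlled on one qubit. My plan is to first build, via \cref{lem:ctrlW}, $c_0\ket{0^n}+c_1\ket{W_n}$ conditioned on $x$. I would then correct the remaining $\ge 2$-weight Dicke components by noting that the exact product structure must come from a construction that handles them. If this fails, I would read the statement's $\ket{1/n}^{\tens n}$ as what the construction in \cref{lem:ctrlW} actually produces before its final postselection/cleanup step, and verify cleanliness by checking that all ancillas used there return to $\ket{0}$ in both branches.
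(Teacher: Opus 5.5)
Your proposal never reaches a constant-depth circuit, so there is a genuine gap. Each route you consider stalls, and the fallbacks are circular. The LCU preparation of $\ket{\phi}\propto\ket{0^n}-\ket{1/n}^{\tens n}$ needs a controlled-$R_y(\theta)^{\tens n}$ from a single qubit, which is itself a (poor man's) fanout. Likewise, ``\cref{lem:ctrlW}'s technique with $W_n$ replaced by $\ket{1/n}^{\tens n}$'' is precisely the statement being proved. The last paragraph is not an argument: it either corrects the weight-$\ge 2$ Dicke components by an unspecified construction or reinterprets the target state. Note also that your opening plan, a layer turning $\ket{W_n}$ into $\ket{1/n}^{\tens n}$ while fixing $\ket{0^n}$, cannot be a unitary on the $n$ targets alone. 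We have $\braket{0^n|W_n}=0$, yet $\ket{0^n}$ has overlap $(1-1/n)^{n/2}\neq 0$ with $\ket{1/n}^{\tens n}$. The conversion must therefore be controlled on $x$, and that control is what makes it possible.

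The ingredient you are missing, used in the paper, is to run this conversion backwards as an uncomputation (\cref{lem:uncomputeW}). Start from $\ket{1/n}^{\tens n}\ket{1}_q$.
\begin{itemize}
\item Apply $\EXACT_1$ onto an ancilla $a_0$. The success probability $\Pr[\mathrm{Bin}(n,1/n)=1]\approx e^{-1}$ is a known constant.
\item Apply a rotation on $a_1$, controlled on $q$ and $a_0$, so that the good amplitude is exactly $1/2$.
\item Do one round of exact amplitude amplification, using the reflection about this intermediate state. That reflection is in $\QACZ$ by \cref{fact:reflstate}, because the state is prepared in constant depth from a product state.
\end{itemize}
After clearing $a_0,a_1$ with CNOTs from $q$, this yields $\ket{W_n}\ket{1}_q$. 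Every step acts trivially on $\ket{0^n}\ket{0}_q$, since the intermediate state has $q=1$. The poor man's fanout is then \cref{lem:ctrlW} followed by the inverse of this circuit with $q=x$.

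Alternatively, your own second route can be completed without any fanout. The state $\ket{\phi}$ is just $\ket{1/n}^{\tens n}=\alpha_0\ket{0^n}+\beta\ket{R}$ with its $\ket{0^n}$ component reweighted, so it can be prepared as follows.
\begin{itemize}
\item Prepare $\ket{1/n}^{\tens n}$ and compute $\OR$ into an ancilla $c$.
\item Apply $c$-controlled rotations to a flag $d$ so that the $d=1$ branch is $(1-\alpha_0)\ket{0^n}-\beta\ket{R}=\ket{0^n}-\ket{1/n}^{\tens n}$. This is possible since $\alpha_0=(1-1/n)^{n/2}\ge 1/2$ for $n\ge 2$.
\item Uncompute $c$ with the same $\OR$.
\item Amplify with \cref{cl:hiampamp}; the good weight is $2(1-\alpha_0)\ge 0.78$.
\end{itemize}
The reflection about $\ket{1}_x\ket{\phi}$ (with clean ancillas) then maps $\ket{1}_x\ket{0^n}$ to $\ket{1}_x\ket{1/n}^{\tens n}$ and fixes $\ket{0}_x\ket{0^n}$. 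It is in $\QACZ$ by \cref{fact:reflstate}.
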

\noindent Implementing Poor man's $n^{1.1}$-fanout in $\QACZ$
would imply $\PARITY \in \QACZ$ because the state $\frac{1}{\sqrt{2}} \ket{0} \ket{0^{n^{1.1}}} + \frac{1}{\sqrt{2}} \ket{1} \ket{1/n}^{n^{1.1}}$ can be made to have high felinity on $\approx n^{0.1}$ targets after a change of basis, similar to our Dicke states reduction. Poor man's $n$-fanout is sufficient for our purposes to amplify correlations with threshold functions. 
We note that \cite{grier2024threshold} were the first to show the existence of a gadget with these properties in $\QACZ$ (refer to their Appendix C). They do so by mapping an input $\ket{\x}$ to the state $\ket{W_\x}$ defined as, 
\begin{align}
    \ket{W_{\x}} := \sum_{i \in [n]} (-1)^{x_i} \ket{0^{i-1} 1 0^{n-i}}.
\end{align}
We remark that a similar construction is likely possible using their gadget.

A uniformly random input has its hamming weight reasonably concentrated about $n/2$. Thus, at a high level our strategy is to approximate $\MAJORITY$ well on these strings. On input $\x$, we can apply the Poor man's $n$-fanout gadget to each qubit to obtain $n$ independent weak copies of the input, each of the form $\ket{x_1/n} \ket{x_2/n} \dots \ket{x_n/n}$. We show that the following test can be performed in $\QACZ$ on a weak copy of this form.
\begin{restatable*}[Weak copy test]{lemma}{oneweakcopytest}\label{lem:onecopytest}
For any threshold $t \in [n]$, there exists a $\QACZ$ circuit $C$  such that, for every $\x \in \bin^n$ with $\big||\x|-t\big|=O(\sqrt n\,\polylog(n))$, the state $\rho(\x)$ on the designated register ``$\out$" of $C \ket{x_1/n}\cdots\ket{x_n/n} \ket{0}_{\out}$ satisfies, 
$$\braket{0 | \rho(\x) | 0}= 1 - \lambda_t(1+o(1))\left(\frac{|\x|-t}{n}\right)^2$$
where $\lambda_t = \frac{e^{-t/n}}{1+t^2/n^2} = \Theta(1)$.
\end{restatable*}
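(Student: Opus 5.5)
The plan is to build $C$ as an interference, or Hadamard-test style, gadget. The probability of reading $1$ on $\out$ should be proportional to $|a(\x)-a_t|^2$, where $a(\x)$ is an overlap amplitude depending only on $|\x|$ and $a_t$ is its value at $|\x|=t$. A first-order Taylor expansion of $a$ around $t$ then gives a quadratic deficit $\propto ((|\x|-t)/n)^2$, and $\lambda_t$ comes out of the derivative of $a$ together with the normalization of the gadget.

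\textbf{Step 1 (the amplitude).} Let $\ket{\psi_\x}=\ket{x_1/n}\cdots\ket{x_n/n}$. Apply a fixed single-qubit unitary $V$ to every data qubit, and let $P=\kb{0^n}$, which is implementable, together with its controlled version, by a generalized Toffoli/$\OR$ gate. Set $a(\x):=\bra{0^n}V^{\tens n}\ket{\psi_\x}=\alpha_0^{\,n-|\x|}\alpha_1^{\,|\x|}$, with $\alpha_b=\bra{0}V\ket{b/n}$. I would choose $V$ to be a phase-shifted rotation, so that $\alpha_1/\alpha_0=1-\frac{1-ic}{2n}+O(n^{-2})$ for a constant $c$ (tied to $t/n$). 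Then
\[
a(\x)\approx \alpha_0^n\, e^{-(1-ic)|\x|/(2n)} .
\]
A complex phase is necessary: with real amplitudes every such test probability is monotone in $a^2$ and cannot have an interior optimum at $|\x|=t$.

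\textbf{Step 2 (the gadget).} Prepare $\out$ in $\gamma\ket0+\delta\ket1$. Controlled on $\out$, apply the reflection $I-2V^{\dagger\tens n}PV^{\tens n}$ to the data, then rotate $\out$ by a fixed $2\times 2$ unitary $W$. The amplitude of $\ket1_{\out}$ is a fixed linear combination $\mu\ket{\psi_\x}+\nu\, V^{\dagger\tens n}\ket{0^n}a(\x)$. I would choose $\gamma,\delta,W$ so that this equals $\beta\,(a(\x)-a_t)\,V^{\dagger\tens n}\ket{0^n}$ plus a component that cancels identically. Only $a(\x)$ depends on the input, so this reduces to a $2\times2$ linear system in the gadget parameters.

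\textbf{Step 3 (expansion).} Write $\Delta=(|\x|-t)/n$. Then $a(\x)-a_t=a_t\bigl(e^{-(1-ic)\Delta/2}-1\bigr)=-a_t\frac{1-ic}{2}\Delta\,(1+O(\Delta))$. Since $|\Delta|=O(n^{-1/2}\polylog n)$, the error is a $(1+o(1))$ factor, and
\[
\braket{1|\rho(\x)|1}=|\beta|^2|a_t|^2\tfrac{1+c^2}{4}\,\Delta^2\,(1+o(1)).
\]
The factor $|a_t|^2\approx e^{-t/n}$ is where the $e^{-t/n}$ in $\lambda_t$ comes from. The constraint on $|\beta|^2$ imposed by unitarity of $W$, after optimizing $c$, should produce the $1/(1+t^2/n^2)$ factor. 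I would verify this by direct computation and, if necessary, adjust $V$ to match $\lambda_t$ exactly. The circuit uses single-qubit gates plus one controlled $\OR$-reflection, so it has constant depth.

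\textbf{Main obstacle.} The hard part is Step 2. One needs the $\ket1_{\out}$ branch to depend on the input only through $a(\x)-a_t$. The residual term $\mu\ket{\psi_\x}$ must be cancelled exactly, not just to first order, because any leftover $\Delta$-independent probability would destroy the $1-\Theta(\Delta^2)$ form. If exact cancellation is impossible with one reflection, I would first try an alternative: a second controlled reflection about $V^{\dagger\tens n}\ket{0^n}$ (a Grover-type composition), so that the non-reference component is returned to the $\ket0_{\out}$ branch. After that, the remaining work is bookkeeping of the $(1+o(1))$ error terms uniformly over the allowed window $\bigl||\x|-t\bigr|=O(\sqrt n\,\polylog n)$.
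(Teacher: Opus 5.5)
There is a genuine gap in Step 2. It is structural, so no choice of gadget parameters fixes it.

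Put $\ket{v}:=V^{\dagger\otimes n}\ket{0^n}$, so that $a(\x)=\braket{v|\psi_\x}$ and your controlled gate is $I-2\kb{v}$. Since $(I-2\kb{v})\ket{\psi_\x}=\ket{\psi_\x}-2a(\x)\ket{v}$ and $(I-2\kb{v})\ket{v}=-\ket{v}$, every state your gadget produces has the form $\sum_b \ket{b}_{\mathrm{out}}\bigl(c_b\ket{\psi_\x}+d_b\,a(\x)\ket{v}\bigr)$ with input-independent $c_b,d_b$. This holds with one reflection or with several reflections about $\ket{v}$ interleaved with single-qubit gates on the output. The data register is traced out, so
\[
\bra{1}\rho(\x)\ket{1}=|c_1|^2+\bigl(|d_1|^2+2\,\mathrm{Re}(\overline{c_1}d_1)\bigr)|a(\x)|^2 .
\]
Equivalently, your circuit is a Hadamard test of $I-2\kb{v}$, and $\bra{\psi_\x}(I-2\kb{v})\ket{\psi_\x}=1-2|a(\x)|^2$ is real.

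The output therefore depends on $\x$ only through $|a(\x)|^2=|\alpha_0|^{2n}|\alpha_1/\alpha_0|^{2|\x|}$, which is monotone in $|\x|$. The phase $c$ from Step 1 drops out entirely. An affine function of a monotone quantity cannot be a probability that vanishes at an interior $t$ and grows like $\Delta^2$ on both sides.

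Your ``$2\times 2$ linear system'' misses two things. The residual $\mu\ket{\psi_\x}$ itself depends on $\x$. And the output amplitude is linear in $\ket{\psi_\x}$, so nothing can supply an input-independent offset $-a_t$. The Grover-type fallback does not escape this, because it reflects about the same $\ket{v}$.

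The missing idea is to project the data onto a fixed reference state rather than trace it out. The $\ket{1}_{\mathrm{out}}$ amplitude then becomes a scalar: a coherent difference of two \emph{different} linear functionals of $\ket{\psi_\x}$ whose ratio varies with $|\x|$.

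The paper reflects about $\ket{\phi}=a\ket{0^n}+b\ket{W_n}$ and then applies an $\OR$ gate from the data into $\mathrm{out}$. This gives $\bra{1}\rho(\x)\ket{1}=|\bra{0^n}(I-2\kb{\phi})\ket{\psi_\x}|^2$. The amplitude is proportional to $\gamma_t\alpha_\ell-\alpha_t\gamma_\ell$, where $\ell=|\x|$, $r=\sqrt{1-1/n}$, $\alpha_\ell=\braket{0^n|\psi_\x}=r^\ell$ and $\gamma_\ell=\braket{W_n|\psi_\x}=\frac{\ell}{n}r^{\ell-1}$. The choice of $a,b$ makes it exactly $\frac{t-\ell}{n}\cdot\frac{r^\ell}{\sqrt{r^2+t^2/n^2}}$, and this is the source of $\lambda_t$.

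All amplitudes there are real, so your claim that a complex phase is necessary is also wrong. The interior zero comes from cancellation between two functionals, not from a phase.

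Your product functional can be repaired in the same spirit. Reflect about $\ket{v}$, then apply $\OR$ into $\mathrm{out}$; the amplitude is $\braket{0^n|\psi_\x}-2\braket{0^n|v}\,a(\x)$, which can be tuned to vanish at $|\x|=t$ and is then linear in $\Delta$. That gives in general a different $\Theta(1)$ constant. It would suffice for the downstream amplification, but it does not give the specific $\lambda_t$ in the statement.
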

\noindent Then, amplifying using the $n$ weak copies and $a = \polylog(n)$ fanout, we obtain a circuit that outputs $1$ with probability at least $1-1/n$ if $\big||\x|-t\big|<\frac{\sqrt n}{2a}$, and outputs $0$ with probability at least $1-1/n$ if $\big||\x|-t\big|>\frac{\sqrt n}{a}$ (\cref{lem:approxtcircuit}). Finally, we perform this test for $\polylog(n)$ values of $t$ spaced out to cover the interval concentrated about $n/2 \pm \sqrt{n}/\polylog(n)$. Choosing these values of $t$ carefully gives the desired correlation with $\MAJORITY$ stated below.
\begin{restatable*}[$(1-1/\polylog(n))$-correlation with Majority in $\QACZ$]{theorem}{ezmajority}\label{thm:ezmajority}
For any constant $c$, there exists a $d:=d(c)$ such that there is a family of $\QAC$ circuits of depth $d$ and size $\poly_c(n)$ that achieve $(1-1/\log^c(n))$ correlation with $\MAJORITY$.
\end{restatable*}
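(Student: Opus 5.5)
The circuit will approximate $\MAJORITY_n$ well on inputs whose Hamming weight lies in a window around $n/2$. Inputs outside the window carry little mass under the uniform distribution, so we can afford to be wrong on them.

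\textbf{Choice of window and grid.} Set $W=\sqrt n\log^{c'}(n)$ with $c'$ a small constant. By Chernoff, $\Pr[||\x|-n/2|>W]$ is negligible. Set $a=\log^{c+1}(n)$. The central band $||\x|-n/2|\le \sqrt n/a$ has probability $O(1/a)=O(1/\log^{c+1} n)$; we charge it entirely as error.

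\textbf{Threshold tests.} Place thresholds $t_1<\dots<t_r$ on the regions $[n/2+\sqrt n/a,\,n/2+W]$ and $[n/2-W,\,n/2-\sqrt n/a]$. Use a geometric-free spacing of $\sqrt n/(2a)$, so every $\x$ in these regions satisfies $||\x|-t_j|<\sqrt n/(2a)$ for some $j$. This needs $r=O(Wa/\sqrt n)=\polylog(n)$ tests. Each test $j$ is the circuit of \cref{lem:approxtcircuit}, which runs \cref{lem:onecopytest} on weak copies. Its output bit is $1$ with probability $\ge 1-1/n$ when $||\x|-t_j|<\sqrt n/(2a)$, and $0$ with probability $\ge 1-1/n$ when $||\x|-t_j|>\sqrt n/a$.

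\textbf{Supplying weak copies.} Each test consumes weak copies $\ket{x_1/n}\cdots\ket{x_n/n}$. To feed $r\cdot\polylog(n)$ tests, first fan out each input bit classically to $\polylog(n)$ copies; this is exact $\FANOUT_{\polylog(n)}$, which is in $\QACZ$ by \cite{grier2026tc0}. Then apply \cref{lem:poormanfanout} to each copy. The total size is $\poly(n)$ and the depth is constant, depending on $c$ through the amplification inside \cref{lem:approxtcircuit}.

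\textbf{Combining the tests.} Output $1$ iff some test with $t_j>n/2$ fires, computed by an $\OR$ gate over those outputs. If no test fires, output a fresh random bit. Consider $\x$ in the right region, $|\x|-n/2\in[\sqrt n/a,W]$. Some right-side test fires with probability $\ge 1-1/n$. Every left-side threshold is more than $2\sqrt n/a\ge \sqrt n/a$ away from $|\x|$, so those tests stay silent except with probability $r/n$. The symmetric argument covers the left region, where we output $0$. The combining gates are classical ($\OR$ and a $\cnot$ onto the output), so we just measure.

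\textbf{Accounting.} The total error probability is $O(1/a)+r/n+\negl(n)$. The correlation is therefore $\ge 1-O(1/\log^{c+1}n)\ge 1-1/\log^c n$ for large $n$.

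\textbf{Main obstacle.} The hard part is guaranteeing that \cref{lem:approxtcircuit} delivers these two-sided guarantees uniformly across the whole window. \cref{lem:onecopytest} is only valid for $||\x|-t|=O(\sqrt n\polylog n)$, and inputs far from a given $t$ must still produce $0$ reliably. This has to come from monotonicity of the acceptance probability in $||\x|-t|$, or from choosing $W$ within that lemma's validity range, which is why $W$ is taken polylogarithmically larger than $\sqrt n$. Beyond that, the amplification parameter $a=\polylog(n)$ must be realizable by $\FANOUT_{\polylog n}$ and $\polylog$-arity gates in depth $d(c)$.
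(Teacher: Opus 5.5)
Your proposal is correct and follows essentially the same route as the paper. Both use a $\polylog(n)$-size grid of thresholds with spacing $\sqrt n/(2a)$ over a window of width $\sqrt n\,\polylog(n)$ around $n/2$, test each threshold with \cref{lem:approxtcircuit}, combine with an $\OR$, and charge the central band and the tail as error; the paper takes $L=c'\sqrt n\log n$ and adjusts $a$ at the end, whereas you set $a=\log^{c+1}n$ directly. You make explicit a point the paper leaves implicit: keeping both $|\x|$ and $t$ inside the window keeps $||\x|-t|$ within the validity range of \cref{lem:onecopytest}. One small correction: for $c'<1/2$ the tail $2e^{-2\log^{2c'}n}$ is not negligible in the technical sense, but it is still $o(\log^{-K}n)$ for every $K$, which is all your accounting needs.
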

\noindent Combining with existing lower bounds of \cite{razborov1987lower, smolensky1987algebraic} for $\ACZ$ gives the following separation.  
\begin{corollary}[Average-case constant-depth quantum advantage]
There exist Boolean functions $f:\bin^n \to \{-1,1\}$ that have $o(1)$ correlation with any $\ACZ$ circuit, yet have $1 - o(1)$ correlation with some $\QACZ$ circuit.
\end{corollary}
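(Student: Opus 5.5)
The plan is to take $f=\MAJORITY_n$ (with $\pm1$ outputs) and combine two facts: the upper bound of \cref{thm:ezmajority} on the quantum side, and the Razborov--Smolensky average-case lower bound on the classical side. No new construction should be needed.

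\textbf{Quantum side.} Apply \cref{thm:ezmajority} with any fixed constant, say $c=1$. This gives a family of $\QAC$ circuits of constant depth $d(1)$ and polynomial size whose designated output achieves correlation at least $1-1/\log n = 1-o(1)$ with $\MAJORITY_n$. Here correlation means $\E_{\x}[f_C(\x)\MAJORITY_n(\x)]$ over uniform $\x\in\bin^n$, as in \cref{def:booleanckt}. This family is therefore a $\QACZ$ family.

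\textbf{Classical side.} I would invoke the average-case form of the Razborov--Smolensky lower bound \cite{razborov1987lower, smolensky1987algebraic}. Every depth-$d$, size-$\poly(n)$ circuit with $\AND/\OR/\mathrm{NOT}$ (even with $\mathrm{MOD}_p$) gates agrees with $\MAJORITY_n$ on at most a $\frac12 + O\bigl(\log^{O(d)}(n)/\sqrt n\bigr)$ fraction of inputs. The argument is via Smolensky's approximation of such circuits by degree-$\log^{O(d)} n$ polynomials over $\mathbb{F}_p$, together with the fact that such polynomials agree with $\MAJORITY_n$ on at most that fraction. Since correlation equals $2\Pr[\text{agree}]-1$, this gives correlation at most $O(\log^{O(d)}(n)/\sqrt n)$, which is $o(1)$ for every fixed $d$.

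Randomized $\ACZ$ circuits are no stronger here, because averaging over the random bits lets us fix the best setting without lowering the average-case correlation \cite{arora2009computational}. Thus the bound holds for any $\ACZ$ circuit.

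\textbf{Main obstacle.} The only real subtlety is citing the lower bound in its correct quantitative average-case form. One must make sure the agreement bound is $\frac12+o(1)$ rather than just a worst-case bound, and that it applies to arbitrary constant depth and polynomial size. Otherwise the corollary follows immediately by combining the two bounds.
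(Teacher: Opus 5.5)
Your proposal is correct and matches the paper's argument: take $f=\MAJORITY_n$, use \cref{thm:ezmajority} (with any fixed $c$) for the $1-o(1)$ quantum correlation, and invoke the Razborov--Smolensky average-case bound for the $o(1)$ correlation with $\ACZ$. Your explicit $\frac12+O\bigl(\log^{O(d)}(n)/\sqrt n\bigr)$ agreement bound and the remark on derandomizing randomized circuits simply spell out details the paper leaves implicit.
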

\section{Preliminaries}

A depth-$d$ $\qac$ circuit on $n$ input qubits and $m$ ancilla qubits initialized to $\ket{0^m}$ consists of $d$ layers of multi-qubit gates, in which each qubit participates in at most one gate, and arbitrary single-qubit unitaries taken to be free. We adopt the formulation of \cite{rosenthal2021qac0}, where each multi-qubit gate is a reflection about a product state. Formally, a gate $G(S)$ on subset $S$ of qubits has the form $G(S) = (I - 2\kb{\vth}_S)$ for some product state $\ket{\vth}_S$. The single qubit unitaries are arbitrary and do not contribute to the depth. This is equivalent to \cite{moore1999qac0}'s original definition where each multi-qubit gate is a Toffoli gate \cite{rosenthal2021qac0}.

In most prior works on $\QACZ$, the output of a circuit on input $\x$ is given by the probability of measuring $\ket{1}$ on a designed register $t$. We consider an equivalent observables-based definition, which is more natural in the context of boolean analysis. 
\begin{definition}[Single output circuit]\label{def:booleanckt}
Let $C$ be a $n$-input $\QACZ$ circuit acting on $m$ ancillae, with a designated target register $t$. The output of $C$ on input $\x \in \bin^n$ is a randomized variable $\in \bpm$, given by the outcome of measuring the observable $Z_t \tens I$ on $C \ket{\x} \ket{0^m}$. The associated real-valued boolean \emph{function computed by $C$}, $f_C : \bin^n \to \bpm$,  describes the expected output on input $\x$,  
$$f_C(\x) := \braket{\x, 0^m | C^\dag \cdot Z_t \tens I \cdot C | \x, 0^m}$$
\end{definition}

We also consider $\QACZ$ state preparation circuits that do not act on any input qubits---instead starting with $\ket{0^m}$ on all registers. In this case, the parameter $n$ refers to the number of qubits in the target state and $m$ is required to be $\poly(n)$.
\begin{definition}[State preparation circuit]
 A circuit $C$ prepares an $n$-qubit state $\ket{\psi}_X$ using $m$ qubits if, 
 $$C \ket{0^m} = \ket{\psi}_X \ket{\varphi}$$
 for some state $\ket{\varphi}$, unentangled with $\ket{\psi}_X$ on the remaining qubits.  Furthermore $C$ \emph{cleanly} prepares $\ket{\psi}_X$ if $\ket{\varphi} = \ket{0^{m-n}}$.
\end{definition}
\noindent All our state construction upper bounds are clean preparations. When proving reductions, we don't enforce clean preparations. 

\subsection{Notation}\label{sec:gatenotation}
We use contractions such as $\bra{\nu}_T$ on states acting on larger sub-systems, $\ket{\psi}_{T,A}$. The notation $\bra{\nu}_T \cdot \ket{\psi}_{T,A}$ refers to the un-normalized vector obtained by tracing out the $T$ sub-system in $\kb{\nu}_T \tens I_A \cdot \ket{\psi}$. We also use the shorthand $\bra{\psi} \cdot M \cdot \ket{\psi}$ for the operator given by $(\bra{\psi} \tens I) M (\ket{\psi} \tens I)$ when it is clear that the matrix $M$ acts on space larger than $\ket{\psi}$. Then, \cref{def:booleanckt} can be rewritten, $f_C(\x) := \braket{\x,0^m | C^\dag Z_t C| \x,0^m}$

For $0 \leq \eps \leq 1$, we define $\ket{\eps}$ to be the single-qubit state given by $\ket{\eps} = \sqrt{\eps} \ket{1} + \sqrt{1-\eps} \ket{0}$. This is consistent with the usual definitions of $\ket{0}, \ket{1}$. For $0 \leq \gamma \leq 1$, define $\Rot_{\gamma}$ to be the single-qubit unitary
\begin{align}
    \Rot_\gamma := \begin{bmatrix}
        \sqrt{\gamma} & \sqrt{1-\gamma} \\ 
        \sqrt{1-\gamma} & -\sqrt{\gamma}
    \end{bmatrix}.
\end{align}

\subsubsection{\safeqaczf-completeness} $\QACZF$ can be equivalently defined as the quantum analog of $\TCZ$, consisting of $\QACZ$ and arbitrary \THRESHOLD$^n_k$ gates, i.e, $\QACZF= \QTCZ$  \cite{hoyer2005fanout, grier2024threshold, grier2026tc0}.  The notion of $\QACZF$-completeness can be viewed as the quantum analog of $\TCZ$-completeness \cite{agrawal2001reducing}. Augmenting $\ACZ$ gates with a gate for any $\TCZ$-complete problem provides the ability to implement any other problem in $\TCZ$. Canonical $\TCZ$-complete problems (under $\ACZ$ reductions) consist of sorting, integer division and iterated multiplication \cite{chandra1984constant, hesse2002uniform}, but notably do not include $\PARITY_n$ ($\ACZ$$[2] \subsetneq$ $\TCZ$) \cite{razborov1987lower, smolensky1987algebraic}. On the other hand, $\QACZF$-complete problems (under $\QACZ$ reductions) include $\PARITY$, $\MAJORITY$ and state synthesis of any nekomata (uniform mixture of $\ket{0^n}$ and $\ket{1^n}$) \cite{moore1999qac0, rosenthal2021qac0, grier2024threshold}. 

\subsection{Analysis of boolean functions}
In boolean analysis it is convenient to have $f(\x) \in \bpm$.
To accommodate standard basis inputs of the form $\ket{\x}$ where $\x \in \bin^n$, we define boolean functions such as $\PARITY$, $\MAJORITY$ as $\bin^n \to \bpm$ instead of $\bpm^n \to \bpm$.
For a string $\x \in \bin^n$, we use $\vlr{\x}$ to denote the hamming weight of $\x$. Then, $\PARITY_n(\x) := (-1)^{|\x|}$ and we use the following convention for $\MAJORITY$, 
\[
\MAJORITY_n(\x) :=
\begin{cases}
1 & \text{if } |\x| \leq  n/2, \\
-1 & \text{if } |\x| > n/2.
\end{cases}
\]
A powerful technique for analyzing boolean functions is their Fourier decomposition, for a reference we point the reader to \cite{o2014analysis}. The characters $\chi_S : \bin^n \to \mathbb{R}$ defined by $\chi_S(x) := (-1)^{\oplus_{i \in S} x_i}$ form an orthonormal basis for the vector space of real-valued boolean functions. Projecting $f$ onto this basis reveals the Fourier decomposition of $f$,
\begin{equation}
    f(x) = \sum_{S \subseteq [n]} \hat{f}(S)\chi_S(x).
\end{equation}
By Parseval's, $\sum_{S \subseteq [n]} \hat{f}(S)^2 = 1$. The correlation of real valued boolean functions is defined by:

\begin{definition}[Correlation of real valued boolean functions ] 
   The correlation of two boolean functions $f,g :  \bin^n \to [-1,1]$ is defined as, 
    $$\corr(f,g) := \Ex_{\x \sim \bin^n} f(\x) g(\x).$$
\end{definition}

\noindent By Parseval's again, $\corr(f,g) = \sum_{S \subseteq [n]} \wh{f}(S) \cdot \wh{g}(S)$.
Recall the real-valued boolean function  $f_C : \bin^n \to [-1,1]$ associated with the output of $C$,  from  \cref{def:booleanckt}.  

\begin{definition}[$\QACZ$ circuit correlation with boolean function]\label{def:cktcorr}
Given a circuit $C$ on $n$ input qubits and $m$ ancilla $A$, with a designated output register $t$, the correlation of $C$ with a boolean function $g(\x) : \bin^n \to \bpm$,  $\corr(C,g)$ is defined to the correlation between $f_C$ and $g$, given by, 
\begin{align*}
\corr(f_C,g) &:= \E_{\x \sim \bin^n} f_C(\x) g(\x)\\
&= \E_{\x \sim \bin^n} \braket{\x,0^m|C^\dag Z_t C| \x,0^m} g(\x)
\end{align*}
\end{definition}



\subsection{Procedures in \safeqacz}

\noindent Note the following form of the standard un-computing ancillae technique for quantum circuits due to \cite{bennett1973logical}. We include a proof of this specific formulation in \cref{sec:addproofs}. 
\begin{restatable}[Approximate clean computation \cite{bennett1973logical}]{claim}{approxcleancomp} \label{fact:cleancomp}
  Let $C$ be a $n$-input depth $d$ single-output $\QAC$ circuit using $m$ ancillae such that, on input $\x \in \bin^n$, the output register measures to $\ket{0}$ with \emph{probability} $p_0(\x)$ and $1$ with probability $p_1(\x) = 1-p_0(\x)$. 
  Then there exists a $n$-input $\QAC$ circuit $C'$ of depth $O(d)$ with $m' = m+n+1$ ancillae such that for all $\x$, conditioning on the non-output registers being clean produces the (unnormalized) vector,  
  $$\bra{0^{m'-1}}_A \cdot C'(\x) \ket{\x} \ket{0^{m'}}_{A,t} = p_0(\x) \ket{\x} \ket{0}_t + p_1(\x) \ket{\x} \ket{1}_t$$
 In other words, $C' \cdot \ket{\x} \ket{0^{m'}}$ has \emph{amplitude} exactly $p_b(\x)$ on $\ket{\x} \ket{b} \ket{0^{m'-1}}$, and all the non-output registers are returned to their starting state with probability $p_0(\x)^2 + p_1(\x)^2$. 
\end{restatable}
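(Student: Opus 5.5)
The plan is the standard Bennett copy--compute--uncompute construction, analysed at the level of amplitudes. The circuit $C'$ has four stages.
\begin{enumerate}
\item Copy the input $\x$ into $n$ fresh ancillae $B$ with one layer of $\cnot$ gates.
\item Run $C$ on $B$ together with its $m$ ancillae $A_0$.
\item Apply a $\cnot$ from $C$'s output register $o$ onto a fresh qubit $t$.
\item Apply $C^\dagger$, then undo the copy layer.
\end{enumerate}
This gives depth $2d+2 = O(d)$ and $m+n+1$ ancillae. A $\cnot$ is a Toffoli with one control, so it is a valid $\QAC$ gate. Copying the input first is really a convenience: it lets us treat $\ket{\x}$ as a fixed classical control that is untouched, so that the remaining analysis is on a fixed basis input.

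For the analysis, fix $\x$ and write
$$C\ket{\x}\ket{0^m} = \ket{\phi_0}\ket{0}_o + \ket{\phi_1}\ket{1}_o,$$
with $\Vlr{\phi_b}^2 = p_b(\x)$; here $\ket{\phi_b}$ ranges over the input and other ancilla registers. After the $\cnot$ onto $t$ the state is
$$\sum_b \ket{\phi_b}\ket{b}_o\ket{b}_t = \sum_b \bigl(\Pi_b\, C\ket{\x}\ket{0^m}\bigr)\otimes \ket{b}_t,$$
where $\Pi_b = \kb{b}_o$. Applying $C^\dagger$ gives $\sum_b C^\dagger \Pi_b C \ket{\x,0^m}\otimes\ket{b}_t$.

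Next I would project the non-output registers onto $\bra{\x}\bra{0^m}$. Undoing the copy layer maps $\ket{\x}_{\mathrm{in}}\ket{\x}_B$ back to $\ket{\x}\ket{0^n}_B$, so this projection matches "clean ancillae, input $\x$". The coefficient on $\ket{b}_t$ becomes
$$\braket{\x,0^m | C^\dagger \Pi_b C|\x,0^m} = p_b(\x),$$
which is exactly the claimed amplitude. Components where the input register is some $\y\neq \x$ do not matter. The copy layer fixes the content of $B$ relative to the input, and the claim conditions on $\ket{\x}$ being returned. More precisely, I would check that the uncopy step sends $\ket{\x}\ket{\y}_B$ to $\ket{\x}\ket{\y\oplus\x}_B$, so $B=0$ forces $\y=\x$. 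Hence projecting $A$ (which includes $B$ and $A_0$) onto zeros yields exactly $p_0\ket{\x}\ket0_t + p_1\ket{\x}\ket1_t$.

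The return probability is then the squared norm of that vector, $p_0^2+p_1^2$. The only delicate point is this bookkeeping: making sure the input register itself is never modified, since $C$ acts on the copy $B$. Because of that, $\ket{\x}$ factors out exactly and no cross terms with other inputs arise.
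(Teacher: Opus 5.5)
Your proposal is correct and matches the paper's proof. The paper also takes $C_0$ to be the copy layer followed by $C$ on the copy, sets $C' = C_0^\dagger \cdot \mathrm{CNOT}(t,t') \cdot C_0$, and computes the amplitude on $\ket{\x}\ket{0^{m'-1}}\ket{b}$ as $\sqrt{p_b(\x)}\cdot\sqrt{p_b(\x)} = p_b(\x)$; this is the same quantity as your $\bra{\x,0^m}C^\dagger \Pi_b C\ket{\x,0^m}$.
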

\noindent We will also need the following primitives.
\begin{fact}\label{fact:reflstate}
    Let $\R_{\ket{\psi}} := I - 2 \ketbra{\psi}{\psi}$.
    If state $\ket{\psi} \ket{0^n}$ can be constructed by a $\QACZ(d, s)$ circuit acting on $\ket{0^m}$, then $\R_{\ket{\psi}}$ can be implemented by a depth $2d+1$ $\QACZ(2d+1,2s+1)$ circuit with the same number of ancillae. 
\end{fact}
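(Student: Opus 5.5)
The plan is to use the state-preparation circuit for $\ket{\psi}\ket{0^n}$ in the standard ``uncompute--reflect--recompute'' way. Let $C$ be the $\QACZ(d,s)$ circuit with $C\ket{0^m} = \ket{\psi}\ket{0^{m'}}$, where the register holding $\ket{\psi}$ together with the clean ancillae make up all $m$ qubits. Set
\[
\R' := C \cdot \lr{I - 2\kb{0^m}} \cdot C^\dag .
\]
Since $C$ is unitary, $C\kb{0^m}C^\dag = \kb{\psi}\tens\kb{0^{m'}}$. Hence $\R' = I - 2\kb{\psi}\tens\kb{0^{m'}}$, which is a reflection about $\ket{\psi}\ket{0^{m'}}$ on the whole register.

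The middle operator $I-2\kb{0^m}$ is itself a single $\QAC$ gate: it is a reflection about the product state $\ket{0}^{\tens m}$, exactly the gate form $G(S) = I - 2\kb{\vth}_S$ allowed in the preliminaries. So $\R'$ has depth $d + 1 + d = 2d+1$ and at most $2s+1$ gates. It uses no qubits beyond the $m$ that $C$ already uses. The inverse $C^\dag$ has the same depth and size as $C$, because each layer of reflection gates is self-inverse and single-qubit unitaries invert freely.

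The main subtlety, and the step I expect to need care, is reconciling $\R'$ with the target $\R_{\ket{\psi}}$, which acts only on the $\ket{\psi}$ register. On inputs where the ancillae are in $\ket{0^{m'}}$ (the ``clean ancillae'' convention under which such statements are used), we have
\[
\R'\lr{\ket{\phi}\ket{0^{m'}}} = \lr{\R_{\ket{\psi}}\ket{\phi}}\ket{0^{m'}}
\]
for every $\ket{\phi}$, because $\bra{\psi}\phi\rangle\braket{0^{m'}|0^{m'}}$ reproduces exactly the projection term. The ancillae therefore return clean, and $\R'$ implements $\R_{\ket{\psi}}$ with the same ancillae. I would state this as the intended meaning of ``implemented with the same number of ancillae,'' noting that the ancillae start and end in $\ket{0^{m'}}$. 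One could even drop the reflection's dependence on the ancilla part, but this is unnecessary given the convention. The sign and global phase conventions also match, since $I-2\kb{0^m}$ is exactly the reflection gate, so no further correction is needed.
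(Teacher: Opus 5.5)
Your proposal is correct. Conjugating the single gate $I-2\kb{0^m}$ (a reflection about a product state, so one $\QAC$ gate) by the preparation circuit gives $C(I-2\kb{0^m})C^\dag = I-2\kb{\psi}\otimes\kb{0^{m'}}$, which acts as $\R_{\ket{\psi}}$ whenever the ancillae start clean, and it uses depth $2d+1$, $2s+1$ gates, and no extra qubits. The paper states this as a Fact without proof; your uncompute--reflect--recompute argument is the standard one it relies on, and the same conjugation pattern appears in its construction of the reflection $R$ in the technical overview.
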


\begin{fact}[Poly-logarithmic Fanout in $\QACZ$ \cite{rosenthal2021qac0, grier2026tc0}]\label{fact:polylogfan}
For any $k = \polylog(n)$, \[\FANOUT_k \in \QAC(\Theta(1),\poly(n)).\]
\end{fact}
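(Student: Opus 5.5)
The last statement is Fact (polylog fanout): for any $k=\polylog(n)$, $\FANOUT_k \in \QAC(\Theta(1),\poly(n))$. It is attributed to prior work, so the proof is essentially a citation assembly. I would structure it as follows.

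The reduction goes through nekomata. By the equivalences of \cite{moore1999qac0, rosenthal2021qac0}, a constant-depth circuit that cleanly or non-cleanly prepares a $k$-nekomata gives a constant-depth circuit for $\PARITY_k$. Concretely, take a layer $L$ of $\cz$ gates between the input qubits and the nekomata targets. Then apply the reflection about the nekomata, which \cref{fact:reflstate} implements in depth $2d+1$, controlled onto a fresh output qubit. This computes the parity exactly. Hadamard conjugation then turns a clean $\PARITY_k$ into $\FANOUT_k$.

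So it suffices to prepare an exact $k$-nekomata, with $k=\polylog(n)$, in constant depth and $\poly(n)$ size. I would do this in three steps.

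\textbf{Step 1: approximate nekomata.} Start from Rosenthal's exponential-size construction for $\FANOUT_k$, i.e.\ an approximate nekomata of size $2^{O(k^{1/d})}$ or similar. For $k=\polylog(n)$ this size is $\poly(n)$ once the depth constant $d$ is chosen appropriately, and it gives a state with $1-\eps$ fidelity to a nekomata.

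\textbf{Step 2: exactification.} Convert the approximate nekomata to an exact one using the exact amplitude-amplification technique of \cite{grier2026tc0}. The key point is that the overlap with the good subspace spanned by $\ket{0^k}$ and $\ket{1^k}$ is a known constant. One or two Grover-type iterations, with a tunable rotation, therefore land exactly in that subspace. After that, a balancing step equalizes the two branches.

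\textbf{Step 3: cleanliness.} Make the construction clean by uncomputing the ancillae.

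The main obstacle is Step 2, the exactness. It requires the branch weights to be known exactly, independent of any input. This holds because the state is input-free and fixed, so the rotation angles can be hard-coded.
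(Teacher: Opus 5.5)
Your proposal is correct and follows the route the paper cites for this fact. Rosenthal's constant-depth approximate construction has size $\exp(\poly(k^{1/d})\log(k/\eps))$, which is $\poly(n)$ for $k=\polylog(n)$ at a suitable constant depth; \cite{grier2026tc0}'s exact amplitude amplification (cf.\ \cref{fact:constnekobranch}) makes it exact, and the nekomata-to-$\PARITY_k$ reduction followed by Hadamard conjugation gives $\FANOUT_k$. Your Step 3 is unnecessary: the nekomata-to-parity reduction itself uncomputes the nekomata, so a non-clean exact nekomata already yields a clean $\PARITY_k$ gate.
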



%
\noindent  \cite{grier2024threshold} also showed that the exact amplitude amplification technique of \cite{Grover, Brassard2000QuantumAA} can be performed in $\QAC$. We will use the following sub-routines that are equivalent in the single-round special case.  
\begin{fact} [Claim 2.4 of \cite{joshi2026improvedlowerboundsqac0}]\label{fact:constnekobranch}
Suppose there exists a depth-$d$ $m$-qubit $\QACZ$ circuit to construct $\ket{\psi}_{T,A} = C \ket{0^m}$ satisfying, 
$$\Vlr{\bra{0^n}_T \cdot \ket{\psi}}^2 \geq 1/4, \quad \text{and} \quad \Vlr{\bra{1^n}_T \cdot \ket{\psi}}^2 \geq 1/4.$$
Then, there is a circuit $C'$ of depth $d' \leq 3(d+2)$ using $m' = m+2$ ancillae that exactly synthesizes an $n$-nekomata.
\end{fact}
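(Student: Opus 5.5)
The plan is a single round of exact amplitude amplification (Grover / Brassard et al.), after first rescaling the good amplitude to exactly $1/2$, i.e.\ good weight exactly $1/4$.

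\textbf{Setup.} Let $w_0 := \Vlr{\bra{0^n}_T \cdot \ket{\psi}}^2 \geq 1/4$ and $w_1 := \Vlr{\bra{1^n}_T \cdot \ket{\psi}}^2 \geq 1/4$. For a fixed circuit $C$ these are fixed numbers, so they can be hard-coded into single-qubit rotations; this non-uniformity is allowed in $\QACZ$.

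\textbf{Balancing step.} Introduce a fresh ancilla $b$ in $\ket{0}$. Apply two multi-controlled single-qubit rotations on $b$:
\begin{itemize}
\item controlled on $T = 0^n$, rotate $b$ to $\ket{1/(8w_0)}$;
\item controlled on $T = 1^n$, rotate $b$ to $\ket{1/(8w_1)}$.
\end{itemize}
This needs $1/(8w_j) \le 1$, which holds since $w_j \ge 1/4$. The two controls are on disjoint basis strings, so the operations commute. Each is a reflection about a product state on $T \cup \{b\}$ conjugated by single-qubit gates, so this costs $O(1)$ depth and at most one extra scratch ancilla.

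Call the resulting preparation unitary $A$, of depth $d+O(1)$. Define the good subspace as $b=1$ together with $T \in \{0^n, 1^n\}$. Then $A\ket{0^{m+1}}$ has exactly weight $1/8$ on the $0^n$ part of the good subspace and exactly $1/8$ on the $1^n$ part. The total good amplitude is therefore $\sin\theta = 1/2$, i.e.\ $\theta = \pi/6$.

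\textbf{One Grover round.} Apply $Q = -A\,\R_{\ket{0}}\,A^\dagger\,S_{\mathrm{good}}$ once.
\begin{itemize}
\item $S_{\mathrm{good}}$ is the phase flip on the good subspace. It is the product of two commuting reflections, one about $\ket{0^n}_T\ket{1}_b$ and one about $\ket{1^n}_T\ket{1}_b$. Each is a product-state reflection, hence a single $\QAC$ gate.
\item $A\,\R_{\ket{0}}\,A^\dagger$ is implemented by \cref{fact:reflstate} in depth $2(d+O(1))+1$.
\end{itemize}
Since $\sin(3\theta) = 1$, the output lies entirely in the good subspace. Amplitude amplification preserves the direction of the good component, so the two branches stay at equal weight. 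The output therefore has the form
$$\tfrac{1}{\sqrt{2}}\ket{0^n}_T\ket{a_0} + \tfrac{1}{\sqrt{2}}\ket{1^n}_T\ket{a_1},$$
tensored with $\ket{1}_b$, which is exactly an $n$-nekomata. Counting gives depth at most $3(d+2)$ and $m+2$ ancillae: $b$ plus one target qubit for the multi-controlled gates, which is reused.

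\textbf{Main obstacle.} The delicate point is exactness. The single Grover round is exact only if the good amplitude is exactly $1/2$ and the two branch weights are exactly equal, and both must be achieved without enlarging the depth. The branch-dependent rotations handle both conditions simultaneously. What needs checking is that the multi-controlled rotations and $S_{\mathrm{good}}$ fit the reflection-about-product-state gate model, and that the scratch target qubit is returned clean so that \cref{fact:reflstate} applies.
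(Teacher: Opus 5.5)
Your proposal is correct and follows essentially the route the paper relies on. The paper only cites this fact, but its own analogous argument (\cref{cl:hiampamp}, applied after flagging the two branches in \cref{lem:skewneko_to_exact}) is a controlled rescaling of the flagged weight to $1/4$ followed by one exact round of amplitude amplification via \cref{fact:reflstate}; your branch-dependent rotations are the natural refinement that also forces the two branches to equal weight $1/8$ each. One minor point: since $b$ is only rotated on the $0^n$ and $1^n$ branches, the phase oracle can simply be the free gate $Z_b$ (as in \cref{cl:hiampamp}). Also, your layer count as written, with $S_{\mathrm{good}}$ costing two layers, gives $3d+9$ rather than $3(d+2)$; this additive-constant slip is irrelevant for the $O(d)$ bounds used later.
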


\begin{claim}\label{cl:hiampamp}
Let $\ket{\psi}$ be any $n$ qubit state. Suppose there exists a depth-$d$ $m$-ancilla $\QAC$ circuit $C$ that cleanly prepares the following $n+1$-qubit state $\ket{\varphi}$,
$$\ket{\varphi}_{T,t} = \sqrt{\alpha} \ket{\psi}_T \ket{1}_t + \sqrt{1-\alpha} \ket{\bad}_T \ket{0}_t,$$ for some $\alpha \geq 1/5$.
Then, there exists a depth-$3d+2$ $\QAC$ circuit $C'$ that cleanly constructs $\ket{\psi}$, i.e, $C' \ket{0^{m+n+2}} = \ket{\psi} \ket{0^{m+2}}$.
\end{claim}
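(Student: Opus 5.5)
We prove the claim with a single round of exact amplitude amplification, using a controlled rotation on an extra qubit to make the success amplitude exactly right. Let $\ket{\Phi}$ be the state prepared by $C$ together with its $m$ clean ancillae. Denote the "good" part by $\ket{\psi}_T\ket{1}_t$ and the bad part by $\ket{\bad}_T\ket{0}_t$.

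One Grover iterate maps success amplitude $\sin\theta$ to $\sin 3\theta$, where $\sin^2\theta=\alpha$. This equals $1$ only when $\sin^2\theta = 1/4$. Since $\alpha\ge 1/5 < 1/4$ is possible, and $\alpha$ may also exceed $1/4$, we first dilute the amplitude to exactly $1/4$. We add a fresh ancilla $r$, on which we apply the single-qubit rotation $\Rot$ (or any rotation) preparing $\sqrt{\beta}\ket{1}_r+\sqrt{1-\beta}\ket{0}_r$ with $\beta = 1/(4\alpha)$. This choice requires $\beta\le 1$, i.e.\ $\alpha\ge 1/4$.

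When $\alpha<1/4$, we instead use the standard trick of running the diluted one-round version with target amplitude $\sin^2\theta=1/4$ anyway. We pick $\beta = \frac{1}{4\alpha}\cdot\frac{1}{k}$ for an appropriate $k$ so that the good amplitude is exactly $1/4$. The constraint $\alpha\ge1/5$ guarantees that some fixed constant-round scheme works. More precisely, $\alpha\ge 1/5$ can always be diluted down to the exact value $\sin^2(\pi/10)\approx 0.095$, for which two iterations suffice since $5\theta = \pi/2$. I expect the paper's depth bound $3d+2$ to reflect a single iterate, so we simply dilute to $1/4$, which is possible whenever $\alpha\ge1/4$. The main obstacle is the regime $1/5\le\alpha<1/4$, where one iterate from exactly $\alpha$ will not land on $1$. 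The first fix to try is a phase-modified (Long/Høyer) single-round amplification with arbitrary phases in place of the reflections' $-1$ phases. Implementing the phased reflection $I-(1-e^{i\phi})\kb{0}$ is still a single multi-qubit gate up to single-qubit conjugations, which is available in $\QAC$.

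Concretely, let the good indicator be "$t=1$ and $r=1$". The state after preparation is $U\ket{0}$ with $U = C\otimes \Rot_r$ of depth $d$. Amplification is then $U\,S_0(\phi)\,U^\dagger\,S_{\mathrm{good}}(\varphi)\,U$. Here $S_{\mathrm{good}}$ is a two-qubit phase gate on $t,r$, and $S_0$ is a reflection about $\ket{0^{\cdot}}$. The reflection about $\ket{0}$ is a single $\QAC$ gate on all qubits conjugated by $X$'s, as in Fact~\ref{fact:reflstate}. The total depth is $d+1+d+1+d = 3d+2$. Choosing the phases appropriately (solvable whenever the initial good probability is at least $1/9$, hence for $\alpha\ge 1/5$ even without dilution) yields good amplitude exactly $1$.

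The output is then $\ket{\psi}_T\ket{1}_t\ket{1}_r\ket{0^m}$. Finally, single-qubit $X$ gates on $t$ and $r$ reset them to $\ket{0}$. This gives $C'\ket{0^{m+n+2}}=\ket{\psi}\ket{0^{m+2}}$, clean as required, since $C$ cleanly prepares $\ket{\varphi}$ and so its ancillae are already zero in both branches.
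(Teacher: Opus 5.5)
Your construction is correct for $\alpha\ge 1/4$ and is essentially the paper's. You dilute on a fresh qubit so the flagged (good) probability is exactly $1/4$, perform one Grover iterate $U\,S_0\,U^\dagger\,S_{\mathrm{good}}\,U$ (depth $3d+2$), and reset the two flag qubits with $X$ gates. The paper uses a controlled rotation instead of your uncontrolled one on $r$, and packages $U S_0 U^\dagger$ as the reflection $I-2\kb{\psi_0}$ via Fact~\ref{fact:reflstate}. Your bookkeeping actually matches the stated depth $3d+2$ more cleanly.

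The gap is the regime $1/5\le\alpha<1/4$: your phased single round does not work there, and the threshold $1/9$ is false. Let $a$ be the initial good probability. The round $U S_0(\phi) U^\dagger S_{\mathrm{good}}(\varphi) U$ keeps the state in the span of the normalized good and bad components, and its bad amplitude is $\sqrt{1-a}\,\bigl[1-(1-e^{i\phi})(1-a+ae^{i\varphi})\bigr]$. This vanishes only if $1-a+ae^{i\varphi}=1/(1-e^{i\phi})$. But $\mathrm{Re}\,\frac{1}{1-e^{i\phi}}=\frac12$ for every $\phi$, while $\mathrm{Re}(1-a+ae^{i\varphi})\ge 1-2a$, so exactness forces $a\ge 1/4$. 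Phases buy nothing below the unphased threshold $\sin^2(\pi/6)$.

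Dilution only lowers $a$. Your two-iteration alternative (dilute to $\sin^2(\pi/10)$) is valid but costs depth $5d+4$. The paragraph with $\beta=\frac{1}{4\alpha}\cdot\frac1k$ is incoherent (it gives $\beta>1$ or good probability $1/(4k)$) and should be dropped. So nothing in your proposal achieves depth $3d+2$ for $\alpha<1/4$.

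In fairness, the paper's proof has the same limitation. Its dilution factor must be $1/(4\alpha)$ (the text writes $1/(2\alpha)$) and must be at most $1$, so it too only covers $\alpha\ge 1/4$. Every application in the paper has $\alpha\ge 1/4$: Claim~\ref{cl:zerow} with $p_1\approx e^{-1}$, Lemma~\ref{lem:skewneko_to_exact} with good weight at least $1/4$, and the inline use in Lemma~\ref{lem:uncomputeW}. The right repair is to state the claim with $\alpha\ge 1/4$, not to try to rescue $[1/5,1/4)$ with phases.
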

\begin{proof}
Let $\gamma := 1/(2\alpha)$, and let $\ctr{\R_{\gamma}}$ denote the controlled $\R_\gamma$ gate.
Using a fresh ancilla $a$, we can prepare the following state cleanly in depth $d+1$, 
\begin{align}
    \ket{\psi_0} &:= \ctr{\R_{\gamma}}(t,a) \cdot \ket{\varphi}_{T,t} \ket{0}_a \\
&=  \sqrt{\alpha} \ket{\psi}_{T} \ket{1}_a \R_{\gamma} \ket{0}_t + \sqrt{1-\alpha} \ket{\bad}_T \ket{0}_a \ket{0}_t \\
    &= 1/2 \ket{\psi}_{T} \ket{1}_a \ket{1}_t + \sqrt{3}/2 \ket{\bad'}_{T,t} \ket{0}_a
\end{align}
Applying a $Z$ gate on $a$, we obtain,
\begin{align}
    \ket{\psi_1} &:= - 1/2 \ket{\psi}_{T} \ket{1}_a \ket{1}_t + \sqrt{3}/2 \ket{\bad'}_{T,t} \ket{0}_a.
\end{align}
Then, applying $\R_{\ket{\psi_0}}$ which can be implemented in depth $2d+1$ by \cref{fact:reflstate}, we obtain
\begin{align}
    \ket{\psi_2} &:= (I - 2\kb{\psi_0}) \ket{\psi_1} \\
    &= \ket{\psi}\ket{1}_{t,a}.
\end{align}
Finally to clean up, we can apply $X$ gates to $t,a$ without increasing depth. 
\end{proof}

\section{Fourier spectrum of \safeqacz}\label{sec:qlmn}
We will first show a key connection between the Fourier mass at levels $\geq k$ of a circuit and synthesis of nekomata-like states.
\threshbranchwt
\begin{proof}
Let $C'$ be the approximate clean computation circuit obtained from $C$ through \cref{fact:cleancomp}, such that under the following register labeling, for any $\x \in \bin^n$, 
\begin{align}
 \bra{0^{m' - 1}}_A \cdot C' \ket{\x}_X \ket{0^{m'}}_{A,t} &= \ket{\x}_X \lr{p_0(\x) \ket{0} + p_1(\x) \ket{1}}_t.
\end{align}
Note this vector is not necessarily a normalized state. Now consider the resulting state on input $\ket{+^n}$,
\begin{align}\label{eq:cleaninp}
    \ket{\psi} := C' \ket{+^n}_X \ket{0^{a+1}}_{A,t}
\end{align}
We will show $\ket{\psi^*} = H^{\tens}_X \ket{\psi}$ has the required amplitudes on the $X$ register. 
\paragraph{Required amplitude on $\ket{0^{n}}$ branch.}
From \cref{eq:cleaninp},
\begin{align}
\Vlr{\bra{+^n}_X \cdot \ket{\psi}} &\geq \Vlr{\bra{+^{n+1}}_{X,t} \cdot \ket{\psi}} \\
&\geq \bra{+^{n+1}}_{X,t} \bra{0^{m'-1}}_A \cdot \ket{\psi} \\
&= \bra{+^{n+1}}_{X,t} \cdot \lr{ \frac{1}{\sqrt{2^n}} \sum_{\x \in \bin^{n}} p_0(\x) \ket{\x}_X \ket{0}_t + p_1(\x) \ket{\x}_X \ket{1}_t} \\ 
&= \frac{1}{2^n} \sum_{\x,\y \in \bin^{n}} \braket{\y|\x} \cdot \bra{+}_t \cdot \lr{p_0(\x)  \ket{0} + p_1(\x) \ket{1}}_t \\ 
&= \Ex_{\x \sim \bin^n} \blr{ \frac{p_0(\x) + p_1(\x)}{\sqrt{2}}} \qquad \qquad \qquad \qquad \text{(use $\braket{\x|\y} = \delta_{\x,\y}$)}\\ 
&= \frac{1}{\sqrt{2}}.
\end{align}
Thus $\Vlr{\bra{0^n} \cdot \ket{\psi^*}}^2 \geq \frac{1}{2}$.
\paragraph{Required amplitude on threshold branch.}
For any $+1$ eigenstate $\ket{\T_k}$ of $\Pi^n_{\geq k}$, 
\begin{align}\label{eq:eigbnd}
\Vlr{\pnk \cdot \ket{\psi^*}} \geq \Vlr{\bra{\T_k} \cdot \ket{\psi^*}}.
\end{align}
We will construct a $\ket{\T_k}$ that serves as a gadget to extract the Fourier mass. Re-label the standard basis vectors $\ket{\y}$ for $\y \in \bin^n$ to $\ket{S}$ for  $S \subseteq [n]$, by mapping each string to its corresponding subset. Consider the following eigenstate, 
\begin{align}
    \ket{\T_k} &= \frac{1}{\sqrt{\gamma_k}} \sum_{\substack{S \subseteq [n] \\ |S| \geq k}} \wh{f_C}(S) \ket{S} 
\end{align}
where, $\gamma_k = \wgk[f_C]$, ensuring $\ket{\T_k}$ is normalized. In the Hadamard basis, 
\begin{align}
\ket{\wh{T_k}} &:=  H^{\tens n} \ket{\T_k}  \\
&= \frac{1}{\sqrt{\gam_k}}  \sum_{\substack{S \subseteq [n] \\ |S| \geq k}} \wh{f_C}(S) \ket{\vec{-}}_S \ket{\vec{+}}_{\ov{S}}.
\end{align}
Using \cref{eq:cleaninp},
\begin{align}
\Vlr{\bra{\T_k} \cdot \ket{\psi}} &\geq \vlr{\bra{\wh{T_k}}_T \bra{-}_t \bra{0^{m'-1}}_A \cdot \ket{\psi}} \\
&\geq \bra{\wh{T_k}}_T \bra{-}_t \bra{0^{m'-1}}_A \cdot \ket{\psi} \\ 
&= \bra{\wh{T_k}}_T \bra{-}_t \cdot \lr{  \frac{1}{\sqrt{2^n}} \sum_{\x \in \bin^n} \ket{\x} \lr{p_0(\x) \ket{0} + p_1(\x) \ket{1}}} \\
&= \frac{1}{\sqrt{2^n}} \sum_{\x \in \bin^n} \braket{\wh{\T_k}| \x} \cdot \underbrace{\lr{p_0(\x) - p_1(\x)}}_{f_C(\x)/\sqrt{2}} \\ 
&= \frac{1}{\sqrt{2 \cdot 2^n \cdot \gam_k}} \sum_{\substack{\x \in \bin^n \\  S \subseteq [n], |S| \geq k}} \wh{f_C}(S) \cdot \underbrace{\lr{\bra{\vec{-}}_S \bra{\vec{+}}_{\ov{S}}} \cdot \ket{\x}}_{2^{-n/2} \cdot \chi_S(\x)} \cdot f_C(\x) \\ 
&= \frac{1}{\sqrt{2\gam_k}} \sum_{\substack{S \subseteq [n] \\ |S| \geq k}} \wh{f_C}(S) \cdot \Ex_{\x \sim \bin^n} \chi_{S}(\x) \cdot f_C(\x) \\ 
&= \frac{1}{\sqrt{2\gam_k}} \sum_{\substack{S \subseteq [n] \\ |S| \geq k}} \wh{f_C}(S)^2 \\
&=  \sqrt{\wgk[f_C]/2}.
\end{align}
Therefore, from \cref{eq:eigbnd}, $\Vlr{\pnk \cdot \ket{\psi^*}}^2 \geq \wgk[f_C]/2$.
\end{proof}

To prove the equivalence between \cref{conj:qlmn} and $\PARITY_n$, we will first reduce $\PARITY_n$ to the following weak-nekomata state which we show is $\QACZF$-complete. 

\begin{lemma}[Skewed nekomata to exact parity]\label{lem:skewneko_to_exact}
Let $C$ be a depth-$d$ $\QACZ$ circuit acting on $m$ qubits that constructs the state $\ket{\psi}_{T,A} = C \ket{0^m}$ satisfying $\Vlr{\kb{0^n}_T \cdot \ket{\psi}}^2 \geq 1/4$ and $\Vlr{\kb{1^n}_T \cdot \ket{\psi}}^2 \geq \eps$ for some non-negligible $\eps$.  
Then, there exists a depth $d' = O(d)$ circuit that, using $m' = m \cdot \poly(n)$ ancillae prepares an exact nekomata state.  
\end{lemma}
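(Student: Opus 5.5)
Our plan is to boost the weight on the $\ket{1^n}$ branch from $\eps$ up to a constant, while keeping constant weight on the $\ket{0^n}$ branch, and then finish with \cref{fact:constnekobranch}. We proceed in three stages.

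\textbf{Stage 1: take many parallel copies.} Run $r = \Theta(\log(n)/\eps)$ independent copies of $C$, producing $\ket{\psi}^{\tens r}$ with target registers $T_1,\dots,T_r$. For each copy $j$, we compute two flag bits onto fresh ancillae:
\begin{itemize}
\item $a_j = \AND(T_j)$, realized as a single reflection gate on $\ket{1^n}$;
\item $b_j = \OR(T_j)$, so that $\lnot b_j$ indicates that $T_j = 0^n$.
\end{itemize}
Both are single $\QAC$ gates. With probability $\geq 1-(1-\eps)^r \geq 1-1/\poly(n)$, some copy has $a_j=1$, and every copy has $T_j=0^n$ with probability at least $1/4$. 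We do not want to select a random index in a way that leaves the ancillae entangled. So instead we aim at the following \emph{clean} statement. Using an $\OR$ of the $a_j$'s together with a fresh target block, we produce a branch on a new $n$-qubit register $T^*$. We fan out the bit $\OR_j a_j$ onto $T^*$; since $n$ is not polylog, this needs a different approach. Copying is the crux.

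\textbf{Stage 2: avoid fanning out a single bit.} Instead of copying, we choose an index $j$ in superposition and swap $T_j$ into $T^*$ controlled on $j$ being the first index with $a_j=1$. Computing ``first'' requires prefix-$\OR$s, which are $\ACZ$ and hence available in $\QACZ$ with ancillae. Controlled swaps are also fine, since each qubit of $T^*$ receives $\OR_j(c_j \land T_{j,i})$, which is a depth-$O(1)$ operation with $r$-fan-in $\OR$ gates. The cost is $r\cdot n$ qubits, so $m'=m\cdot\poly(n)$ when $\eps\geq 1/\poly(n)$. The resulting state on $T^*$ has weight $\geq 1-1/\poly(n)$ on $\ket{1^n}$, and $\ket{0^n}$ otherwise. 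This inverts the skew, so we interleave as follows. With a fresh ancilla qubit $c$ in $\ket{+}$, we control whether $T^*$ receives the ``first $a_j$'' copy or copy $1$'s $T_1$ conditioned on $T_1=0^n$. Concretely, $c=0$ yields $T^*=0^n$ trivially (leave it alone), and $c=1$ routes the selected $1^n$ copy. Then $\Vlr{\bra{0^n}_{T^*}\ket{\cdot}}^2\geq 1/2$ and $\Vlr{\bra{1^n}_{T^*}\ket{\cdot}}^2\geq 1/2-1/\poly(n)\geq 1/4$.

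\textbf{Stage 3: finish and identify the obstacle.} The garbage (the copies, flags, and $c$) may remain entangled with $T^*$. This is permitted because \cref{fact:constnekobranch} only requires branch weights of $\ket{\psi}_{T,A}$ with arbitrary ancilla $A$. Applying \cref{fact:constnekobranch} then yields an exact $n$-nekomata in depth $O(d)$. The main obstacle is making sure the ``first index'' routing and the controlled copy are genuinely constant depth, and that the failure event (no $a_j=1$) costs only $1/\poly(n)$ weight. We would verify the prefix-$\OR$ selection with explicit Toffoli layers. If trouble arises, we would fall back to the simpler rule ``$T^*_i=\OR_j(a_j\land T_{j,i})$''. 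This rule is still all-ones whenever some $a_j=1$, and all-zeros whenever no $a_j=1$ and we ignore the other copies. That makes it correct without any selection at all, and it is what we would try first.
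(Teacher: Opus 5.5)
\textbf{There is a genuine gap: Stage 2 and your fallback rule both need fanout, so the argument is circular.} In a $\QACZ$ layer each qubit enters at most one gate. A bit that must influence $n$ different gates therefore has to be copied $n$ times first.
\begin{itemize}
\item Your routing sets $T^*_i=\OR_j(c_j\land T_{j,i})$ for every $i\in[n]$. Each selector $c_j$ is then a control in $n$ separate $\mathrm{AND}$ terms.
\item Likewise, the balancing qubit $c$ must control all $n$ qubits of $T^*$.
\item Both of these require $\FANOUT_n$, which is exactly what the lemma is supposed to deliver.
\item The prefix-$\OR$s are not free either. $\ACZ$ is not known to lie in $\QACZ$. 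With $r=\Theta(\log n/\eps)=\poly(n)$ flags, each $a_j$ would have to be copied up to $r$ times, while only $\polylog(n)$ fanout is available (\cref{fact:polylogfan}).
\item The fallback $T^*_i=\OR_j(a_j\land T_{j,i})$ has the same defect. On the branch $a_j=1$ you have $T_{j,i}=1$, so the rule literally fans $\OR_j a_j$ out to $n$ targets.
\item You also cannot drop the flag and use $\OR_j T_{j,i}$ directly. Each copy may carry weight up to $3/4$ on strings outside $\{0^n,1^n\}$. Among $r\gg 1$ copies, some garbage copy would almost surely set different columns inconsistently.
\end{itemize}

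\textbf{The missing idea is to remove the garbage before copying, using exact amplitude amplification instead of per-copy flags.} This is what the paper does.
\begin{itemize}
\item Mark the span of $\ket{0^n},\ket{1^n}$ on $T$ onto a single fresh ancilla with two gates, one controlled on $T=0^n$ and one on $T=1^n$. The marked weight is at least $1/4$.
\item Apply \cref{cl:hiampamp}. It only reflects about the whole prepared state via $C$ and $C^\dagger$, so it needs no fanout. It produces $\sqrt{1-\eps_1}\ket{0^n}\ket{\alpha_0}+\sqrt{\eps_1}\ket{1^n}\ket{\alpha_1}$ with $\eps_1\ge\eps$, supported exactly on the two branches.
\item Now every copy's target register is literally $0^n$ or $1^n$. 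With $r\approx 1/\eps_1$ copies, the plain column $\OR$s $q_i=\OR_j T_{j,i}$ use each qubit in exactly one gate. They give $q=1^n$ if some copy is $1^n$ and $q=0^n$ otherwise.
\item Tuning $r$ so that $(1-\eps_1)^r$ is a constant balances the two branches, with no control qubit $c$ needed.
\item \cref{fact:constnekobranch} then finishes the proof.
\end{itemize}
Your Stage 3 remark, that entangled leftover registers are harmless for \cref{fact:constnekobranch}, is correct.
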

\begin{proof}
WLOG, for a real and non-negative $\gamma \geq 1/2$ the state $\ket{\psi}$ can be written as 
\begin{align}
   \ket{\psi} &= \sqrt{\gamma} \ket{0^n}_T \ket{\alpha_0}_A + \sqrt{\eps} \ket{1^n}_T \ket{\alpha_1}_A + \sqrt{1-\gamma - \eps} \ket{\perp}_{T,A}.
\end{align}
Now, using a fresh ancilla $a$ and two gates controlled on $\ket{0^n}$ and $\ket{1^n}$ to mark the nekomata branches, we obtain
\begin{align}
    \ket{\psi_1} &= (I - 2\kb{0^n} \tens \kb{+}_a) \cdot (I - 2\kb{1^n} \tens \kb{+}_a) \cdot \ket{\psi} \ket{0}_a\\
    &= \sqrt{\gamma} \ket{0^n}_T \ket{\alpha_0}_A \ket{1}_a + \sqrt{\eps} \ket{1^n}_T \ket{\alpha_1}_A \ket{1}_a +  \sqrt{1-\gamma - \eps}  \ket{\perp}_{T,A} \ket{0}_a.
\end{align}
Then we apply \cref{cl:hiampamp} to amplify the ``good'' amplitude and discard the cleaned up $a$ register to produce the state
\begin{align}
    \ket{\psi_1} &= \sqrt{1-\eps_1} \ket{0^n}_T \ket{\alpha_0}_A + \sqrt{\eps_1} \ket{1^n}_T \ket{\alpha_1}_A.
\end{align}
where $\eps_1 = \eps / \gamma \geq \eps$ is still non-negligible. 

Next observe that the parallel amplification technique of \cite{rosenthal2021qac0} can now be used to obtain a high fidelity nekomata using $m_1 := \Theta(1/\eps_1^2)$ copies and one layer.
Imagine arranging these copies into a grid, each copy in a row. Then for each column $i \in [n]$, apply an OR of all qubits with output unto a fresh register $q_i$. The registers $Q = \clr{q_1 \dots q_n}$ now make up the new nekomata targets. To see this, the probability of this measuring to all $0$ is
\begin{align}
   \Vlr{\bra{0^n}_Q \cdot \ket{\psi}}^2 &= \lr{\sqrt{1-\eps_1}}^{2m_1} \approx 1/e.
\end{align}
The remaining probability is all $1$s. 
This produces a high-fidelity approximation of a nekomata, allowing us to apply \cref{fact:constnekobranch} to obtain an exact nekomata and thus $\FANOUT_n$ or $\PARITY_n$. 
\end{proof}

We now show that a generalized ``threshold-block'' nekomata is $\QACZF$-complete.

\twobranchdicke
\begin{proof}
Let $D$ be the distribution over $\{0,1\}^n$ supported on strings of Hamming weight at least $k$ given by the squared amplitudes of $\frac{\pnk \ket{\psi}}{\Vlr{\pnk \cdot \ket{\psi}}}$ in standard basis.
Then we will show there exist $\ell = \Theta\!\left(\frac{k}{\log^2 k}\right)$ disjoint nonempty subsets $S_1,\dots,S_\ell \subseteq [n]$ such that
\begin{align}
\Pr_{\x \sim D}\!\left[\forall j \in [\ell],\; \x|_{S_j} \not\equiv 0 \right] \ge c
\end{align}
for some absolute constant $c>0$.
Let $\ell = \left\lfloor \frac{k}{\log^2 k} \right\rfloor.$ Partition $[n]$ uniformly at random into $\ell$ nonempty disjoint sets $S_1,\dots,S_\ell$ of (almost) equal size. Then for every $j$,
\begin{align}
|S_j| \ge \frac{n}{2\ell}.
\end{align}
Note the standard bound for any $t \geq k$,
\[
\frac{\binom{n-|S_j|}{t}}{\binom{n}{t}} \le \left(1-\frac{|S_j|}{n}\right)^t  
\le e^{-t|S_j|/n} \le e^{-k/(2\ell)}
=  e^{-\Theta(\log^2 k)}
.\]
Fix any $x \in \{0,1\}^n$ with Hamming weight $t \ge k$, and let $T = \mathrm{supp}(x)$. For a fixed $j$,
\begin{align}
\Pr[T \cap S_j = \varnothing] &= \frac{\binom{n-|S_j|}{t}}{\binom{n}{t}} \leq e^{-\Theta(\log^2 k)}.
\end{align}
By a union bound over all $\ell$ blocks,
\begin{align}
\Pr\big[\exists j : T \cap S_j = \varnothing\big] \le \ell \cdot e^{-\Theta(\log^2 k)}.
\end{align}
Since $\ell \le k$, the right-hand side is $o(1)$, and in particular for large enough $k$,
\begin{equation}
\Pr\big[\forall j,\; T \cap S_j \neq \varnothing\big]
\ge \frac{9}{10}.
\end{equation}
Thus, for a random partition, every fixed $x$ of weight at least $k$ hits all $\ell$ blocks with $\Omega(1)$ probability. Averaging over $x \sim D$ implies that there exists a fixed choice of $S_1,\dots,S_\ell$ achieving the desired bound.

Apply an $\OR$ gate for each of these subsets onto a fresh ancilla. Then, \cref{lem:skewneko_to_exact} completes the reduction.
\end{proof}
Using this reduction, we can now prove the equivalence between $\PARITY$ and $\text{QLMN}$.

\subsection{Parity equivalence of Quantum LMN}

\begin{corollary}[High Degree Fourier mass to $\PARITY$ computation]\label{cor:qlmntoparity}
Suppose there exists a depth $d$, $m$-ancilla single-output $\QACZ$ circuit $C$ acting on $n$ inputs, such that for some $k \geq n^\delta$,  
$$\wgk[f_{C}] \geq \eps,$$
then, there exists a depth $O(d)$, $O((m+n)\cdot \log^2(k)/\eps)$-ancilla $\QACZ$ circuit $C'$ for $\PARITY_k$ and a depth $O(d/\delta), \tilde{O}((m+n) \cdot n^{1 - \delta}/\eps)$-ancilla $\QACZ$ circuit for $\PARITY_n$.
\end{corollary}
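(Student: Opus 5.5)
The plan is to chain \cref{lem:qlmntothres} with \cref{lem:two_branch_dicke}, and then, for the $\PARITY_n$ statement, recursively compose the small-fanout circuits.

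\textbf{Step 1: from Fourier mass to a threshold-block nekomata.} Apply \cref{lem:qlmntothres} to $C$. This gives a depth $O(d)$ circuit $C^*$ on $m+2n+1$ ancillae. It prepares $\ket{\psi^*}$ with $\Vlr{\bra{0^n}\cdot\ket{\psi^*}}^2\ge 1/2$ and $\Vlr{\pnk\cdot\ket{\psi^*}}^2\ge \wgk[f_C]/2\ge \eps/2$.

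\textbf{Step 2: from the threshold-block nekomata to $\FANOUT_\ell$.} These are exactly the hypotheses of \cref{lem:two_branch_dicke}, with $\eps$ replaced by $\eps/2$. That lemma yields a depth $O(d)$ circuit implementing $\FANOUT_\ell$ exactly, with $\ell=k/\log^2 k$, using $O((m+n)/\eps)$ ancillae. Conjugating by Hadamards gives $\PARITY_\ell$.

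\textbf{Step 3: parity on all $k$ bits.} To obtain $\PARITY_k$ itself, split the $k$ inputs into $\log^2 k$ blocks of size $\ell$. Compute the parity of each block in parallel with a copy of the $\PARITY_\ell$ circuit. Combine the $\log^2 k$ partial parities using $\PARITY_{\polylog}$, which lies in $\QACZ$ by \cref{fact:polylogfan}, and then uncompute. This costs constant extra depth and a $\log^2 k$ factor in ancillae, matching the claimed $O((m+n)\log^2(k)/\eps)$.

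\textbf{Step 4: parity on all $n$ bits.} Since $k\ge n^\delta$, we now have exact $\PARITY_{k'}$ gates with $k'\ge n^{\delta}$, each of depth $O(d)$. Build a tree of fan-in $k'$ and depth $\lceil \log n/\log k'\rceil = O(1/\delta)$. Each internal node computes the parity of its children cleanly onto a fresh ancilla, and the tree is uncomputed at the end. The total number of gate copies is $O(n/k')=O(n^{1-\delta})$, giving depth $O(d/\delta)$ and $\tilde O((m+n)n^{1-\delta}/\eps)$ ancillae.

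\textbf{Main obstacle.} I expect the main difficulty to be bookkeeping rather than conceptual. Each copy of the $\PARITY_{k'}$ gadget is used clean, so garbage does not accumulate across levels. Beyond that, one must check that the exactness guaranteed by \cref{lem:two_branch_dicke} (via \cref{lem:skewneko_to_exact} and \cref{fact:constnekobranch}) is preserved under composition, and that the $\eps$-dependence of the ancilla count is only linear, not $1/\eps^2$. The latter is the point I would verify most carefully against the parallel amplification step in \cref{lem:skewneko_to_exact}.
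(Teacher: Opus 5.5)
Your proposal is correct and follows the paper's proof step for step: \cref{lem:qlmntothres} gives the threshold-block nekomata with $\eps/2$ mass, \cref{lem:two_branch_dicke} turns it into $\PARITY_{k/\log^2 k}$ with $O((m+n)/\eps)$ ancillae, and the same divide-and-conquer composition gives $\PARITY_k$ and then a depth-$O(1/\delta)$ tree for $\PARITY_n$. Two small bookkeeping remarks: in Step 3, combine the $\log^2 k$ partial parities with one more copy of your $\PARITY_\ell$ gadget (valid since $\log^2 k \le \ell$ for large $k$) rather than \cref{fact:polylogfan}, whose unspecified $\poly(n)$ size need not fit the $O((m+n)\log^2(k)/\eps)$ budget; and your suspicion about the $1/\eps^2$ in \cref{lem:skewneko_to_exact} is justified but harmless, since that proof's own computation $(1-\eps_1)^{m_1}\approx 1/e$ needs only $m_1=\Theta(1/\eps_1)$ copies, so the ancilla dependence on $\eps$ is indeed linear.
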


\begin{proof}
 By \cref{lem:qlmntothres} we can construct a $\QACZ$ circuit $C'$ that prepares a state $\ket{\psi}$ satisfying
$\Vlr{ \bra{0^n} \cdot \ket{\psi}}^2 \geq 1/2$ and $\Vlr{\pnk \cdot \ket{\psi}}^2 \geq \eps/2$. 
Therefore, if $\eps$ is non-negligible, \cref{lem:two_branch_dicke} implies a $\QACZ$ circuit for for $\PARITY_{k/\log^2 k}$ with $O(d)$ depth, and $O((m+n)/\eps)$ ancilla. By the classical divide and conquer argument, there exists a $O(d)$ depth, $O((m+n)\log^2(n)/\eps)$ ancilla $\QACZ$ circuit for $\PARITY_k$, and a circuit for $\PARITY_n$ with a $O(1/\delta)$ multiplicative blowup in depth and $n^{1-\delta}$ blowup in ancillae.
\end{proof}

We now have all the pieces to complete the proof of \cref{thm:qlmnequivparity}.
\qlmnequivparity*
\begin{proof}
    Let $S(d,n)$ denote the maximum size such that no depth $d$ $\QAC$ circuit of this size can compute $\PARITY_n$.

    We first prove the RHS implies the LHS. Let $C$ be any single-output, $n$-input $\QAC(d,s)$ circuit. 
    By Corollary~\ref{cor:qlmntoparity}, there exists a $\QAC\left(\Theta(d), \Theta\left(\frac{s \cdot \log^2(k)} {W^{\geq k}[f_\mathcal{C}]}\right)\right)$ circuit for computing $\PARITY_k$ (as the ancilla blowup is proportional to the size blowup). Therefore $\frac{s}{W^{\geq k}[f_\mathcal{C}]} > \tilde{\Theta}(S(\Theta(d), k))$ (as clearly $S(d,k) \geq k$), and hence $s \cdot \tilde{\Theta}\left(S(\Theta(d),k)\right)^{-1} > W^{\geq k}[f_\mathcal{C}]$, which implies $\text{QLMN}\left(\tilde{\Theta}\left(S( \Theta(d), k)\right)\right)$. Note that for any $\gamma$ such that the RHS holds $S \geq \gamma$, and so does $\text{QLMN}\left(\tilde{\Theta}\left(\gamma(\Theta(d),k)\right)\right)$.

    The other direction is simpler. Suppose $\text{QLMN}(\gamma)$ holds. Then any $\QAC(d,s)$ circuit for computing $\PARITY_k$ must have $s > \gamma(d,k)$ and so $\PARITY_k \not \in \QAC(d, \gamma(d,k))$. Appropriate reparameterizations give the stated form.
\end{proof}


\subsubsection{Consequences for \safeqaczf-complete problems}

We now present some consequences of the equivalence of \text{QLMN} and $\PARITY \not \in \QACZ$ for $\QACZF$-complete problems. The first of which is the uniquely quantum phenomenon that any non-negligible correlation with $\PARITY_n$ can be boosted to exact-$\PARITY_n$ in $\QACZ$.

\approxtoexactpar
\begin{proof}
The correlation of $f_C$ with $\PARITY_n$ is equivalent to $\W^n[f_C]$. Therefore the statement follows from \cref{cor:qlmntoparity}.
\end{proof}

We also show that any circuit that achieves high correlation with $\MAJORITY$ suffices to compute $\FANOUT$. 

\hardmajority
\begin{proof}
Assume WLOG that $n$ is odd, otherwise obtain a similar correlation for $\MAJ_{n-1}$ by using an extra ancilla.
It is a known fact \cite{o2014analysis}, that there exists a constant $\alpha$ for any $k$, $\wgk[\MAJ_n] \geq \alpha k^{-1/2}$.
  
For any $k$, let $\MAJ^{\ge k}_n$ be the real-valued Boolean function obtained from $\MAJORITY_n$ by setting all Fourier coefficients of degree at most $k$ to zero. Formally, 
\begin{align}
    \MAJ^{\ge k}_n(x) := \sum_{\substack{S \subseteq [n] \\ |S| \geq k}} \widehat{\MAJ_n}(S)\,\chi_S(x).
\end{align}
Similarly define $\MAJ^{< k}_n$. Then, it holds that
\begin{align}
    \corr(f_C, \MAJ_n) &= \corr(f_C, \MAJ^{\geq k}_n) + \corr(f_C, \MAJ^{< k}_n) \\
    &\leq \W^{<k}[f_C]^{1/2}\cdot \W^{< k}[\MAJ_n]^{1/2} \\
    & \qquad + \W^{\geq k}[f_C]^{1/2}\cdot \W^{\geq k}[\MAJ_n]^{1/2} \tag{Cauchy-Schwarz} \\
    &\leq \W^{< k}[\MAJ_n]^{1/2} + \W^{\geq k}[f_C]^{1/2} \\
    &\leq 1-\frac{\alpha}{2} k^{-1/2} + \W^{\geq k}[f_C]^{1/2}.
\end{align}

Also, $\corr(f_C, \MAJ_n) \geq 1 - n^{-\delta}$. Hence, for $k' = 0.01\alpha^{2} n^{2\delta}$,   
\begin{align}
   \W^{\geq k'}[f_C]^{1/2} &\geq 0.5 \alpha k'^{-1/2} - n^{-\delta} \\
                 &=  5 \cdot n^{-\delta} - n^{-\delta}\\  &= \Omega(n^{-\delta}).
\end{align}
Thus, $\wg{k'}[f_C]=\Omega(n^{-2\delta})$, and applying \cref{cor:qlmntoparity}, this implies a $O(1/\delta) = O(1)$ depth $\QACZ$ circuit for $\PARITY_n$.
\end{proof}
\noindent This result also implies that a worst-case, approximate $\MAJORITY$ circuit can be made exact (due to $\FANOUT$).

\hardworstmajority
\begin{proof}
For each input, make $k$ copies using $\FANOUT_n$ and run $C$ in parallel. Then, compute $\MAJORITY_k$ of all the runs to output the final answer. From a standard Chernoff bound, $k$ can be chosen to be $\poly(1/\eps) = \polylog(n)$ to achieve at most $1/n$ worst case error, while implementing the $\FANOUT_k, \MAJORITY_k$ in $\QACZ$. The resulting $\QACZ$ circuit has $\geq (1-1/n)$-correlation with $\MAJORITY$, and therefore the rest of the argument follows from \cref{thm:hardmajority}.
\end{proof}

\subsection{Felinity and \safeqaczf-complete states}\label{sec:felinity}
Recall the notion of Felinity, restated below. 
\felinitydef*
\noindent We will first show the two main properties of felinity that will be used in subsequent reductions. These are: (1) felinity behaves naturally under trace distance (\cref{lem:feltd}) (2) a circuit for preparing an $n$-qubit $\rho$ produces a circuit for $\feln(\rho)$ correlation with parity $\PARITY_n$ (\cref{lem:felinity_to_par}).

\begin{lemma}[Felinity is Lipschitz-continuous]\label{lem:feltd}
Let $\rho,\sigma$ be $n$-qubit states. Then
$$|\feln(\rho)-\feln(\sigma)| \le 8\,\TD(\rho,\sigma),$$
where
$\TD(\rho,\sigma) := \frac12 \|\rho-\sigma\|_1.$
\end{lemma}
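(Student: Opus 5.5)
Felinity depends on $\rho$ only through the diagonals of $\rho$ and $X^{\otimes n}\rho X^{\otimes n}$ in the standard basis, so the plan is to reduce to classical distributions. Define $p(\y) := \braket{\y|\rho|\y}$ and $q(\y) := \braket{\y|\sigma|\y}$. Since $X^{\otimes n}\ket{\y} = \ket{\ov{\y}}$, we have $\braket{\y|X^{\otimes n}\rho X^{\otimes n}|\y} = p(\ov{\y})$. Hence
\[
\feln(\rho) = 2\sum_{\y} p(\y)\,p(\ov{\y}), \qquad \feln(\sigma) = 2\sum_{\y} q(\y)\,q(\ov{\y}).
\]
Measuring in the standard basis is a quantum channel, so by data processing
\[
\sum_{\y} |p(\y)-q(\y)| \le \|\rho-\sigma\|_1 = 2\,\TD(\rho,\sigma).
\]

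Next I would telescope the bilinear form:
\[
p(\y)p(\ov{\y}) - q(\y)q(\ov{\y}) = \bigl(p(\y)-q(\y)\bigr)p(\ov{\y}) + q(\y)\bigl(p(\ov{\y})-q(\ov{\y})\bigr).
\]
Summing over $\y$ and using $0\le p,q\le 1$ bounds each of the two sums by $\|p-q\|_1$. The second sum needs the reindexing $\y\mapsto\ov{\y}$, which is a bijection of $\bin^n$. This gives
\[
|\feln(\rho)-\feln(\sigma)| \le 2\cdot 2\,\|p-q\|_1 \le 8\,\TD(\rho,\sigma),
\]
which is exactly the stated constant. A sharper bound is available by using $p(\ov{\y})\le 1$ more carefully, but it is not needed.

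There is no real obstacle. The only points to check are that the conjugation by $X^{\otimes n}$ turns the second factor into $p(\ov{\y})$, that the data-processing (contractivity) step under the dephasing channel holds, and that the factor of $2$ in the definition of felinity is tracked correctly.
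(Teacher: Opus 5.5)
Your proof is correct and is essentially the paper's argument. Both reduce to computational-basis distributions via contractivity of trace distance under measurement, then split the product with a triangle inequality to reach the same constant $8$; the paper measures the two-copy state $\rho\otimes\rho$ and uses $\|\rho\otimes\rho-\sigma\otimes\sigma\|_1\le 2\|\rho-\sigma\|_1$ before restricting to the pairs $(y,\bar{y})$, whereas you measure a single copy and telescope the bilinear form classically.
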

\begin{proof}
Consider the computational basis measurement on $2n$-qubit states. For any $\y,\z \in \bin^{n}$ and a $n$-qubit state $\rho'$, define, 
\begin{align}
q_{\rho'}(\y\z) &:= \braket{\y,\z | \rho' \tens \rho' |\y,\z} = \braket{\y|\rho'|\y} \braket{\z|\rho' |\z}  
\end{align}
This is precisely the probability of the measurement outcome $\y\z \in \bin^{2n}$ on the $2n$-qubit state $\rho' \tens \rho'$. From triangle inequality and contractivity of trace distance under measurement, 
\begin{align}
  \frac12  \sum_{\y,\z \in \bin^{n}} \vlr{q_{\rho}(\y\z) - q_{\sig}(\y\z)}  &\leq \TD(\rho \tens \rho,\sig \tens \sig) \\
  &\leq 2\TD(\rho, \sig).
\end{align}
For $\y\in\{0,1\}^n$, let $\ov{\y}$ be the string obtained by flipping all bits of $\y$, i.e, $\ket{\ov{\y}} = X^{\tens n}\ket{\y}$. Then, 
\begin{align}
    \vlr{\feln(\rho) - \feln(\sig)} &= 2 \vlr{\sum_{\y \in \bin^n} q_{\rho}(\y\ov{\y}) - q_{\sig}(\y\ov{\y})} \\
    &\leq 2 \sum_{\y \in \bin^n} \vlr{q_{\rho}(\y\ov{\y}) - q_{\sig}(\y\ov{\y})} \\
    &\leq 2 \sum_{\y,\z \in \bin^n} \vlr{q_{\rho}(\y\z) - q_{\sig}(\y\z)} \\
    &\leq 8 \TD(\rho, \sig).
\end{align}
\end{proof}
Recall that we treat the output of the circuit as as a $\pm 1$ valued measurement outcome. Any Hermitian matrix $\O$ is a valid \emph{observable} corresponding to a measurement with possible outcomes among the eigenvalues of $\O$. For a state $\ket{\psi}$, the expected outcome of measuring $\O$ on $\ket{\psi}$ is given by $\braket{\psi | \O |\psi}$. 
For $n$-input, $m$-ancillae $\QACZ$ circuit $C$ with a designated target $t$, define the following observable, 
 \begin{align}
 \O_C := \bra{\0^m}_A \cdot {C^\dag Z_t C} \cdot \ket{\0^m}_A
 \end{align}
Physically, $\O_C$ can be interpreted as the Heisenberg evolution of the output measurement. Mathematically, $\O_C$ is simply a $2^n \times 2^n$ Hermitian matrix with eigenvalues in $[-1, 1]$ satisfying $f_{C}(\x) = \braket{\x | \O_C | \x}$.
With this formalism in mind, we proceed to show the main property of felinity.

\parfromfeln
\begin{proof}
Let $T = \clr{t_1, t_2\dots  t_n}$ and $A$ be the set of qubits (including $T$) of $C$. To construct $C'$, first label the input qubits $X = \clr{x_1, x_2 \dots x_n}$ and define a layer of $\cz$ gates, $L := \bigotimes_{i \in [n]} \cz(x_i,t_i)$.Using a fresh ancilla $a$, designated as the target register, define the following reversible-$\NOR$ gate,
\begin{align}\label{eq:orgate}
    G &:= (I - \kb{0^m})_A \tens I_a + \kb{0^m}_A \tens X_a  \\
    &=  I - 2\kb{0^m}_A \tens \kb{-}_a
\end{align}
This gate flips the target $a$ when the $\NOR$ of all the controls is $1$ and is a valid $\QACZ$ gate. Then, $C'$ is given by, 
\begin{align}
    C' := G C^\dag L C
\end{align}
and is a depth $2d+2$ circuit with $m+1$ ancillae. 

We will now analyze the correlation of $C'$ with $\PARITY(\x)$ as defined in \cref{def:cktcorr}. 
First define the observable $\O_{C'}$ corresponding to measurement outcome of $C'$, s.t $f_{C'} = \braket{\x | \O_{C'} | \x}$
\begin{align}
    \O_{C'} &= \braket{0^{m}_A, 0_a | (C')^\dag Z_a \tens I_A C' | 0^{m}_A, 0_a} \\
    &= \braket{0^{m}_A, 0_a | C^\dag L C G^\dag Z_a G C^\dag L C | 0^{m}_A, 0_a} & \qquad (L = L^\dag)\\ 
    &= \braket{\psi_A, 0_a | L C G^\dag Z_a G C^\dag L  | \psi_A, 0_a} \\ 
    &= \bra{\psi}_A \cdot L C \braket{0_a| G^\dag Z_a G |0_a} C^\dag L^\dag \cdot \ket{\psi}_A \label{eq:here} \\ 
    &= \bra{\psi}_A \cdot L C (2\kb{0^m} - I) C^\dag L^\dag \cdot \ket{\psi}_A \\ 
    &= \bra{\psi}_A \cdot L  (2\kb{\psi} - I)  L^\dag \cdot \ket{\psi}_A \\ 
    &= 2 \bra{\psi}_A \cdot L \cdot \kb{\psi}_A \cdot L \cdot \ket{\psi}_A -  \bra{\psi}_A \cdot L L^\dag \cdot \ket{\psi}_A  \\
    &= \underbrace{2 \bra{\psi}_A   L \kb{\psi}_A L \ket{\psi}_A}_{\O'_{C'}} - I_n
\end{align}
In \cref{eq:here} we used the following derivation for the observable induced on $A,X$ through \cref{eq:orgate}
\begin{align}
    \bra{0_a} G^\dag Z_a G \ket{0_a} &=  \bra{0_a} G^\dag \kb{0}_a G \ket{0_a} -  \bra{0_a} G^\dag \kb{1}_a G \ket{0_a}  \\
    &=  (\kb{0^m}) \cdot (\kb{0^m}) -  (I - \kb{0^m}) (I - \kb{0^m})  \\
    &= \kb{0^m} - (I - \kb{0^m}) \\
    &=  (2\kb{0^m} - I).
\end{align}
Next, we expand $\ket{\psi}$ in the standard basis on $T$. We re-label each basis state $\ket{\y}_T$, for $\y \in \bin^n$ by mapping to a unique subset $S \subseteq [n]$ based on the $1$ bits in $\y$. This associates each basis state with a Fourier character $\chi_S$, giving,
\begin{align}
    \ket{\psi} &= \sum_{S \subseteq [n]} \alpha_{S} \ket{S} \ket{\phi_{S}}_{A \setminus T}.
\end{align}
for some arbitrary states $\ket{\phi_{S}}$ on the remaining ancillae and coefficients $\alpha_S$.  Putting it together, 
\begin{align}
    \eps = \corr({f_{C'},\PARITY}) &= 
     \Ex_{\x \sim \bin^n} \braket{\x | \O_{C'} |\x} (-1)^{|\x|} \\
    &= \Ex_{\x \sim \bin^n} \braket{\x | \O'_{C'} |\x} (-1)^{|\x|} { \  - \Ex_{\x \sim \bin^n} (-1)^{|\x|}} \\
    &= 2 \cdot \Ex_{\x \sim \bin^n} \braket{\psi,\x |  L | \psi} \braket{\psi |  L | \psi, \x} (-1)^{|\x|}  \\
    &= 2 \sum_{S,S' \subseteq [n]} \alpha^*_{S} \alpha_{S'}  \cdot \Ex_{\x \sim \bin^n} \braket{S,\phi_S, \x |  L | \psi} \braket{\psi |  L | S', \phi_{S'}, \x} \chi_{[n]}(\x) \\
    &= 2 \sum_{S,S' \subseteq [n]} \alpha^*_{S} \alpha_{S'}  \cdot \Ex_{\x \sim \bin^n} \chi_S(\x) \chi_{S'}(\x) \braket{S, \phi_{S}| \psi} \braket{\psi | S', \phi_{S'}} \chi_{[n]}(\x) \\
    &= 2 \sum_{S,S' \subseteq [n]} \vlr{\alpha_{S}}^2 \vlr{\alpha_{S'}}^2  \cdot \Ex_{\x \sim \bin^n} \chi_{S\Delta S'}(\x) \chi_{[n]}(\x) \\
    &= 2 \sum_{S \subseteq [n]} \vlr{\alpha_{S}}^2 \vlr{\alpha_{\ov{S}}}^2 .
\end{align}
Observe that $\vlr{\alpha_{S}}^2 = \braket{S|\rho_T |S}$, for $\rho_T = \tr_{A\setminus T} \kb{\psi}$.
Mapping the labels $S$ back to bit-strings gives, 
\begin{align}
    \corr({f_{C'},\PARITY}) &= 2 \sum_{\y \in \bin^n} \braket{\y|\rho_T|\y} \braket{\y | X^{\tens n}  \cdot \rho_T \cdot X^{\tens n} | \y} \\ 
    &= \feln(\rho_T).
\end{align}
\end{proof}
We conclude with the following reduction. 

\felntofanout
\begin{proof}
This follows directly from \cref{lem:felinity_to_par} and \cref{thm:nonnegpar_to_exact}. 
\end{proof}

\subsubsection{Poly weight Dicke states are \safeqaczf-complete}
It is known that any Dicke state $\ket{D^k_n}$ can be constructed in $\QACZF$, i.e using $\FANOUT_n$. We will show that constructing $\ket{D^k_n}$ in constant depth \emph{requires} $\FANOUT_{\min(k,n-k)}$.
\begin{definition}[Uniform $k$-hamming slice / Dicke $\D^n_k$]\label{def:dicke_dist}
The uniform distribution on all $n$-bit strings of hamming weight $k$. The associated state is $\ket{D^n_k}$, the $n$-qubit $k$-weight Dicke state,
    \[\ket{D^n_k} := \binom{n}{k}^{-1/2} \sum_{b \in \{0,1\}^n, |b| = k} \ket{b}.\]
\end{definition}
\noindent We will focus on $k \leq n/2$ WLOG, since $\ket{D^{n-k}_n}$ has the same complexity as $\ket{D^n_k}$.

\begin{fact}[Binomial distribution bounds \cite{grinstead2006grinstead}]\label{fact:binombounds}
  For any sufficiently large $n \in \mathbb{N}$, and any $k \leq n/2$  $$\Pr[\binomd(n,k/n) = k] \geq \frac{e^{-1}}{\sqrt{k}}$$ 
\end{fact}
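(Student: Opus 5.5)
The plan is to write the probability in closed form and bound the binomial coefficient with Stirling's formula in Robbins' sharp form. With $p=k/n$,
\[
\Pr[\binomd(n,k/n)=k]=\binom{n}{k}\frac{k^k(n-k)^{n-k}}{n^n}.
\]
I will use Robbins' bounds
\[
\sqrt{2\pi m}\,(m/e)^m\le m!\le \sqrt{2\pi m}\,(m/e)^m e^{1/(12m)},
\]
taking the lower bound for $n!$ and the upper bounds for $k!$ and $(n-k)!$. The factors $(m/e)^m$ cancel exactly against $k^k(n-k)^{n-k}/n^n$. This leaves
\[
\Pr[\binomd(n,k/n)=k]\;\ge\;\sqrt{\frac{n}{2\pi k(n-k)}}\;e^{-\frac{1}{12k}-\frac{1}{12(n-k)}}.
\]
Since $n/(n-k)\ge 1$ and $n-k\ge k$ (because $k\le n/2$), the right-hand side is at least $\frac{1}{\sqrt{2\pi k}}e^{-1/(6k)}$.

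It then suffices to have $e^{-1/(6k)}\ge \sqrt{2\pi}/e\approx 0.922$. This holds whenever $1/(6k)\le 0.081$, which is true for all $k\ge 3$. So the general Stirling argument already gives the claim for every $k\ge 3$, uniformly in $n\ge 2k$.

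The main obstacle is that the constant $e^{-1}$ is tight at $k=1$. There $\Pr=(1-1/n)^{n-1}\to e^{-1}$, so the crude Stirling bound, with constant $\approx 1/\sqrt{2\pi}\cdot e^{-1/6}\approx 0.338<e^{-1}$, fails for very small $k$. I will therefore treat $k=1,2$ by direct computation, using the elementary inequality $(1-x)^{1/x-1}\ge e^{-1}$ for $x\in(0,1]$.

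For $k=1$, this gives $\Pr=(1-1/n)^{n-1}\ge e^{-1}$ exactly.

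For $k=2$, we have $\Pr=2(1-1/n)(1-2/n)^{n-2}$. Applying the inequality with $x=2/n$ gives $(1-2/n)^{n-2}\ge e^{-2}$, so $\Pr\ge 2(1-1/n)e^{-2}$. This is at least $e^{-1}/\sqrt2$ as soon as $1-1/n\ge e/(2\sqrt2)\approx 0.961$, i.e.\ $n\ge 26$. This is where the hypothesis ``sufficiently large $n$'' enters.

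Combining the small-$k$ cases with the Stirling bound for $k\ge 3$ completes the proof.
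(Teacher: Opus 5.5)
Your proof is correct. The paper gives no argument for this fact and only cites Grinstead--Snell. The natural content of that citation is the asymptotic Stirling / local de Moivre--Laplace approximation $\Pr[\binomd(n,p)=np]\sim 1/\sqrt{2\pi np(1-p)}$. That kind of asymptotic statement only controls the regime $k\to\infty$, whereas the fact is claimed uniformly over all $1\le k\le n/2$.

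Your use of Robbins' two-sided bounds turns the approximation into the non-asymptotic inequality $\Pr[\binomd(n,k/n)=k]\ge e^{-1/(6k)}/\sqrt{2\pi k}$, valid for every $n\ge 2k$, and this settles $k\ge 3$. You also correctly see that the small cases need separate treatment. At $k=1$ the bound is asymptotically tight, so any lossy Stirling estimate must fail there. At $k=2$ your correction factor $e^{-1/12}\approx 0.920$ falls just below $\sqrt{2\pi}/e\approx 0.922$.

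Your direct computation for $k=1$, namely $(1-1/n)^{n-1}\ge e^{-1}$, is exactly the instance the paper later invokes in the proof of \cref{lem:uncomputeW}. The only implicit assumption in your write-up is $k\ge 1$, which the statement needs anyway for $e^{-1}/\sqrt{k}$ to make sense.
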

\noindent To prove $\QACZF$-completeness of Dicke state preparation, we first prove the following reduction from Dicke states to non-negligible felinity states. 
\begin{lemma}\label{lem:dicketofel}
For any $n$ and $k \leq n/2$, there is a single layer $L$ of $\QACZ$ gates acting on $n + \ell$ qubits for $\ell = \lfloor k/\log k \rfloor$, such that, the state 
$\ket{\psi} = L \dkn \ket{0^{\ell}}_T$ satisfies, 
$$\fel{\ell}\lr{\tr_{\ov{T}} \kb{\psi} } \geq \frac{1}{8k}.$$  
\end{lemma}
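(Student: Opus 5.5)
The plan is to partition the $n$ Dicke qubits into $\ell=\lfloor k/\log k\rfloor$ disjoint blocks $B_1,\dots,B_\ell$, each of size $t=\lfloor n/\ell\rfloor\approx n\log k/k$; any leftover qubits are left untouched. Set $p=k/n$ and $\ket{\theta_j}:=\ket{p}^{\tens B_j}$, which is a product state. For each block I apply the single gate
\[
G_j:=I-2\kb{\theta_j}_{B_j}\tens\kb{-}_{t_j}
\]
onto a fresh target $t_j\in T$. This is a valid $\QACZ$ reflection about a product state. The gates act on disjoint qubits, so $L=\prod_j G_j$ is a single layer. Writing $P_j=\kb{\theta_j}$, we get $G_j\ket{\phi}\ket{0}_{t_j}=(I-P_j)\ket{\phi}\ket{0}+P_j\ket{\phi}\ket{1}$. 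Hence the $T$-register of $L\dkn\ket{0^\ell}$ measures to $\y\in\bin^\ell$ with probability
\[
q(\y)=\Big\|\textstyle\bigotimes_j Q_j^{y_j}\dkn\Big\|^2,\qquad Q_j^1=P_j,\ \ Q_j^0=I-P_j.
\]
Dropping all terms except $\y=1^\ell$ gives $\fel{\ell}(\rho_T)\ge 2\,q(1^\ell)\,q(0^\ell)$, so it suffices to lower bound these two probabilities.

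For $q(1^\ell)$: since $\bigotimes_j P_j$ projects onto $\ket{p}^{\tens(\text{blocks})}$ and acts as identity on the leftover qubits, $q(1^\ell)\ge|\braket{p^{\tens n}|D^n_k}|^2$. Expanding $\ket{p}^{\tens n}=\sum_j\sqrt{\Pr[\binomd(n,p)=j]}\,\ket{D^n_j}$ shows $|\braket{p^{\tens n}|D^n_k}|^2=\Pr[\binomd(n,k/n)=k]\ge e^{-1}/\sqrt k$ by \cref{fact:binombounds}.

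For $q(0^\ell)$, I would use the witness $\ket{w}=(Z\ket p)^{\tens n}$. It has the same overlap magnitude with $\dkn$, since only a global sign $(-1)^k$ appears. It is also nearly orthogonal to each $\ket{\theta_j}$: $|\braket{\theta_j|(Z\ket p)^{\tens t}}|=(1-2p)^t\le e^{-2pt}\approx k^{-2}$, because $pt\approx\log k$.

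The main obstacle is converting this into a lower bound on $q(0^\ell)$ without expanding into $2^\ell$ cross terms. I would handle it with a telescoping/hybrid bound. First, $\big\|\bigotimes_j(I-P_j)\ket w-\ket w\big\|\le\sum_j\|P_j\ket w\|\le\ell k^{-2}\le 1/k$, using that each $I-P_j$ is a contraction. Then
\[
\sqrt{q(0^\ell)}\ \ge\ \Big|\bra{w}\textstyle\bigotimes_j(I-P_j)\dkn\Big|\ \ge\ |\braket{w|D^n_k}|-\tfrac1k\ \ge\ e^{-1/2}k^{-1/4}-k^{-1}.
\]
This gives $q(0^\ell)\ge(1-o(1))e^{-1}/\sqrt k$.

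Combining the two bounds gives $\fel{\ell}\ge 2e^{-2}(1-o(1))/k\approx 0.27/k\ge 1/(8k)$ for $k$ large enough. Small $k$ is handled by adjusting constants, or trivially since $\ell\ge1$ is needed. The remaining care is in the rounding of $t$ and $\ell$, and in checking that $(1-2p)^t\le k^{-2+o(1)}$ still gives $\ell k^{-2+o(1)}=o(k^{-1/4})$, which holds comfortably.
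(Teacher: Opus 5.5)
Your proposal is correct and follows essentially the same route as the paper. It uses the same partition into $\ell=\lfloor k/\log k\rfloor$ blocks and a single layer of reflections that flip a fresh target $t_j$ exactly when block $B_j$ lies in $\ket{p}^{\tens |B_j|}$, with $p=k/n$. It then bounds the $1^\ell$ branch by the binomial point mass at $k$, bounds the $0^\ell$ branch via the near-eigenvector witness $(Z\ket{p})^{\tens n}$, and lower-bounds felinity by the product of the two branch probabilities. The only differences are minor: you control $\|\Pi\ket{w}-\ket{w}\|$ by the telescoping sum $\sum_j\|P_j\ket{w}\|\le \ell k^{-2}$ rather than by the paper's exact product formula $\prod_j\bigl(1-(1-2p)^{2m_j}\bigr)$, and your gate $I-2\kb{\theta_j}\tens\kb{-}$ is the correct form of the controlled-NOT that the paper intends.
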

\begin{proof}
  Partition the qubits into $\ell = \lfloor k/\log k \rfloor$ blocks. Let $B_i$ be the $m_i = |B_i| \geq n/\ell$ qubits in each block, and label a corresponding qubit of $T$ as $t_i$. 
  Let $p = \frac{k}{n}$. Then, $L$ is given by,
\begin{align}
  L &= \bigotimes_{i \in [\ell]} (I_{B_i} - \kb{p}^{\tens m_i} \tens \kb{+}_{t_i}) \\ 
    &= \bigotimes_{i \in [\ell]} \blr{(I_{B_i} - \kb{p}^{\tens m_i}) \tens I_{t_i} + \kb{p}^{\tens m_i} \tens X_{t_i}}.  \label{eq:ascnot}
\end{align}
As described in \cref{eq:ascnot}, each gate can be viewed as a controlled-not gate, that flips $t_i$ whenever $B_i$ lies in the $\kb{p}^{\tens m_i}$ subspace. 
Now, for $\ket{\psi} = L \dkn \ket{0^{\ell}}$, we will first argue that $\Vlr{\bra{1^\ell}_T \cdot \ket{\psi}}^2 \geq \eps$ and $\Vlr{\bra{0^\ell}_T \cdot \ket{\psi}}^2 \geq \eps$. From \cref{eq:ascnot}, we have, 
\begin{align}
  \Vlr{\bra{1^\ell}_T \cdot \ket{\psi}}^2 &= \Vlr{\bra{p}^{\tens n}\cdot \ket{\psi}}^2 \\
                                          &= \Vlr{\bra{p}^{\tens n} \cdot \dkn}^2\\
                                          &= \Pr\blr{\binomd(n,p) = k} \\
                                          &\geq \frac{e^{-1}}{\sqrt{k}}. \qquad \qquad \qquad \qquad \text{(from \cref{fact:binombounds})} \label{eq:branch1}
\end{align}
similarly, letting $\Pi = \bigotimes_{i \in [\ell]} (I - \kb{p}^{\tens m_i})$, 
\begin{align}
  \Vlr{\bra{0^\ell}_T \cdot \ket{\psi}}^2 &= \Vlr{\Pi \cdot \dkn}^2. \label{eq:branch2}
\end{align}
Instead of directly bounding $\Vlr{\Pi \dkn}^2$, we bound $\vlr{\braket{\nu | D^n_k}}^2$ for a vector $\ket{\nu}$ that is \emph{nearly} a $+1$ eigenvector of $\Pi$. Define, $\ket{p_-} = \sqrt{1-p} \ket{0} - \sqrt{p} \ket{1}$.

Observe that, for each $i$,
\begin{align}
  \braket{p_-|p}^{2 m_i} &= (1-2p)^{2m_i} \\
                         &\leq \exp(-4k/\ell)\\
                         &= \frac{1}{k^4}.
\end{align}

Therefore,

\begin{align}
  \Vlr{\Pi \ket{p_-}^{\tens n}}^2 &= \prod_{i \in [\ell]} (1 - \vlr{\braket{p|p_-}}^{2m_i})  \\
                           &\geq \lr{1 - 1/k^4}^{\ell} \\
                           &\geq 1- 1/k^3.
\end{align}
Therefore, $\ket{p_-}^{\tens n}$ is nearly an eigenvector of $\Pi$, and,
\begin{align}
  \Vlr{\Pi \cdot \dkn} &\geq \vlr{\bra{p_-}^{\tens n} \cdot \dkn}  - 1/k^{3/2}  \\
                         &= \Vlr{\bra{p}^{\tens n} \cdot Z^{\tens n} \dkn} - 1/k^{3/2} \\
                         &= \Vlr{(-1)^{k} \cdot \bra{p}^{\tens n} \dkn} - 1/k^{3/2} \\
                         &= \sqrt{\Pr\blr{\binomd(n,p) = k}} - 1/k^{3/2} \\
                         &\geq \frac{e^{-1.01}}{k^{1/4}}.
\end{align}

Hence, $\Vlr{\Pi \cdot \dkn}^2 \gtrsim 1/\sqrt{k}$.

Combining with \cref{eq:branch2} and \cref{eq:branch1}, for $\rho_T = \tr_{\ov{T}} \kb{\psi}$, 
\begin{align}
  \fel{\ell}(\rho_T) &\geq \braket{0^\ell|\rho_T|0^\ell} \braket{1^\ell|\rho_T|1^\ell} \\
                     &\geq \frac{1}{8k}. \qquad \qquad \qquad \qquad \qquad \qquad \text{(using $e^{2.01} \approx 7.5$)} 
\end{align}
\end{proof}

Now we prove the $\QACZF$-completeness of poly-weight Dicke states.

\dicketocat
\begin{proof}
Consider the circuit $C$ from \cref{lem:dicketofel}, and let $\sig^*$, $\sig$ be the state on the $\ell = \lfloor k/\log k \rfloor$-qubit register $T$ after applying $C$ to $\dkn \ket{0^\ell}_T$ and $\rho \tens \kb{0^{\ell}}_T$ respectively.
Then, 
\begin{align}
  \TD(\sig, \sig^*) &\leq \TD(\rho \tens \kb{0^{\ell}}, \dkn \ket{0^{\ell}}) \\
                    &= \TD(\rho, \dkn).
\end{align}
Then, from \cref{lem:feltd}, 
\begin{align}
  \fel{\ell}(\sig) &\geq \frac{1}{8k} - 8\TD(\rho,\dkn) \\
                   &\geq \frac{1}{8k} - \frac{1}{10k} \\
                   &= \Omega(1/k).
\end{align}
By \cref{cor:felnhard}, this implies a depth $O(d)$, $m \cdot \poly(n)$-ancilla circuit for $\FANOUT_\ell$. Note $\log \ell/ \log k = 1-o(1)$. Then, using the recursive nature of $\FANOUT$ \cite{moore1999qac0, rosenthal2021qac0}, this provides a depth $d' = O(d)$, $m' = m \cdot \poly(n)$-ancilla circuit for $\FANOUT_k$, as well as a depth $O(d'/\delta)$, $O(m' n^{1-\delta})$ ancillae circuit for $\FANOUT_n$.
\end{proof}

\polydickeisfanout
\begin{proof}
Either $\rho$ or $X^{\tens n} \rho$ is within $1/k$ trace distance of $\dkn$. Since $\log k/ \log n = \Theta(1)$, \cref{thm:dicketocat} implies a $O(1)$ depth and $\poly(n)$ ancilla circuit for $\FANOUT_n$. 
\end{proof}

\section{Upper bounds for \safeqacz}\label{sec:upper}

We now prove some important building blocks for our construction of a circuit with high $\MAJORITY$ correlation. We will first show that ``poor-man's fanout'' lies in $\QACZ$, and then combine this with a ``weak-copy'' test, which estimates the hamming weight of the original string these weak copies were made from.

The ``poor-mans fanout'' circuit is an upgraded version of a controlled-$W$ map, which we use in its implementation.

\begin{claim}[Uniform $0^n$ and $W_{n}$]\label{cl:zerow}
For any $n$, the state $\ket{W'_n} = \frac{1}{\sqrt{2}} \ket{0^n} + \frac{1}{\sqrt{2}} \ket{W_n}$ can be cleanly prepared in $\QACZ$ using $O(n)$ qubits. 
\end{claim}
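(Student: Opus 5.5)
Since \cite{grier2026tc0} cleanly prepares $\ket{W_n}$ in $\QACZ$, the plan is to reuse that circuit together with a one-qubit control. Let $U$ denote the clean $\QACZ$ circuit preparing $\ket{W_n}$ on a register $Q$ of $n$ qubits with ancillae $A$, so $U\ket{0^n}_Q\ket{0}_A = \ket{W_n}_Q\ket{0}_A$. The obvious route is to put a control qubit $c$ in $\ket{+}$, apply controlled-$U$, and then uncompute $c$ using the fact that $c=1$ exactly when $Q$ has Hamming weight one, i.e.\ when $Q \neq 0^n$. Controlled-$U$ is a problem, however, because controlling every gate of a $\QACZ$ circuit on one qubit would need fanout of $c$. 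I therefore plan to avoid controlling $U$ altogether.

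\textbf{Step 1: a single-copy construction.} I use the structure of $\ket{W_n}$ directly: $\ket{W'_n}$ is the uniform superposition, with weight $1/2$ on $\ket{0^n}$ and weight $1/(2n)$ on each $e_i$. Equivalently, it is obtained from the $(n+1)$-dimensional $W$-type state $\frac{1}{\sqrt2}\ket{1}_{q_0}\ket{0^n} + \frac{1}{\sqrt{2n}}\sum_i \ket{0}_{q_0}\ket{e_i}$ by uncomputing $q_0$. This suggests preparing a \emph{weighted} $W$ state on $n+1$ qubits, with amplitude $1/\sqrt2$ on $e_0$ and $1/\sqrt{2n}$ on each other $e_i$. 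I expect the construction of \cite{grier2026tc0} (or the weighted version in \cite{cleandicke2026}) to give arbitrary-amplitude $W$ states $\sum_i \beta_i \ket{e_i}$. If only the uniform version is available, I take $\ket{W_{2n}}$ on $2n$ qubits and apply an $\OR$ of the last $n$ qubits onto a fresh qubit $q_0$. That produces $\frac{1}{\sqrt2}\ket{1}_{q_0}\ket{0^n}_{Q}\ket{W_n}_{Q'} + \frac{1}{\sqrt{2n}}\sum_i\ket{0}_{q_0}\ket{e_i}_Q\ket{0^n}_{Q'}$. The remaining task in this case is to clean $Q'$, which is essentially the original problem again, so the weighted-$W$ route is preferable.

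\textbf{Step 2: uncompute $q_0$.} In the weighted state, $q_0=1$ exactly when $Q=0^n$. Applying the $\QACZ$ gate that flips $q_0$ controlled on $Q=0^n$ (a reflection about $\ket{0^n}_Q\ket{-}_{q_0}$) therefore maps the state to $\big(\frac{1}{\sqrt2}\ket{0^n}+\frac{1}{\sqrt2}\ket{W_n}\big)_Q\ket{0}_{q_0}$. The ancillae of the $W$ circuit are already clean, so the preparation is clean, has constant depth, and uses $O(n)$ qubits.

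\textbf{Main obstacle.} The hardest step is securing a clean constant-depth preparation of the \emph{non-uniform} $W$ state with amplitudes $(1/\sqrt2, 1/\sqrt{2n},\dots)$. My first attempt is to approximate the weights with a uniform $W$ state on $N=2n\cdot r$ qubits, grouping $n r$ qubits into the heavy block and $r$ qubits into each light block, and then compressing each block by an $\OR$ onto a fresh qubit. Cleaning the block qubits is the issue: conditioned on the block output, the block still holds a $\ket{W_r}$ state. I would remove it by applying the inverse of the clean $\ket{W_r}$ circuit controlled on the block's $\OR$ bit. A controlled $W$ preparation is exactly the map $\ket{0}\mapsto\ket{0^n}$, $\ket{1}\mapsto\ket{W_n}$ that the paper later attributes to \cref{lem:ctrlW}, so it must be obtained without circularity. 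If this fails, I would fall back on the general exact-amplitude $W$-state construction of \cite{grier2026tc0}, where amplitudes are set by single-qubit rotations before their Hamming-weight-one amplification.
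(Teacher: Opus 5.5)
Your Step 2 is correct. If you had the weighted state $\frac{1}{\sqrt2}\ket{1}_{q_0}\ket{0^n}+\frac{1}{\sqrt{2n}}\sum_i\ket{0}_{q_0}\ket{e_i}$, flipping $q_0$ controlled on $Q=0^n$ would leave exactly $\ket{W'_n}\ket{0}_{q_0}$.

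\textbf{The gap is Step 1.} Preparing that weighted state is the entire content of the claim, and you never establish it. The routes you work out (OR-compressing $\ket{W_{2n}}$, or blocks of a uniform $W$ state) all leave a register holding $\ket{W}$ on only one branch. Cleaning it is precisely the controlled-$W$ map of \cref{lem:ctrlW}. The paper derives that map (and \cref{cor:any0W}) from this very claim, so, as you note, those routes are circular. What remains is an unverified expectation that \cite{grier2026tc0} or \cite{cleandicke2026} supply arbitrary-amplitude $W$ states. Your fallback is one line: it gives no rotation angles and does not check that the success probability is a known constant. That check is what single-round exact amplitude amplification requires.

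\textbf{The missing idea.} You need no $W$-state black box. Choose the single-qubit rotation so that the weight-$0$ and weight-$1$ parts of a product state already have equal norm. With $p=1/(n+1)$, the state $\ket{p}^{\otimes n}$ has amplitude $(1-p)^{n/2}$ on $\ket{0^n}$ and amplitude $\sqrt{np}\,(1-p)^{(n-1)/2}=(1-p)^{n/2}$ along $\ket{W_n}$. Flag Hamming weight at most one onto a fresh qubit, using the small-threshold gate the paper takes to be in $\QACZ$. This gives $\sqrt{\alpha}\,\ket{W'_n}\ket{1}+\sqrt{1-\alpha}\,\ket{\mathrm{bad}}\ket{0}$ with $\alpha=2\bigl(\tfrac{n}{n+1}\bigr)^n\ge 2/e$, a value known exactly. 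Then \cref{cl:hiampamp} produces $\ket{W'_n}$ cleanly with $O(n)$ qubits. This is the paper's proof.

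Your own framing works the same way once made concrete. Put $q_0$ in $\ket{+}$ and the other $n$ qubits in $\ket{1/(n+1)}$. The Hamming-weight-one component on these $n+1$ qubits is exactly your weighted $W$ state, with probability $\bigl(\tfrac{n}{n+1}\bigr)^n>1/e$. Flagging weight exactly one and amplifying gives Step 1, and your Step 2 finishes the argument. At that point, though, the extra qubit $q_0$ is only a detour around the direct weight-$\le 1$ construction.
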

\begin{proof}
Start with the state $\ket{\psi_0}_T := \ket{1/p}^{\tens n}$ for $p = 1/(n+1)$. Observe that, 
\begin{align}
    p_1 := \Pr_{x \sim \bern(1/{n+1}, n)} \blr{|x| = 1} &= n \cdot (p) \cdot (1-p)^n \\
    &= (1-p)^{n} \\
    &= \Pr_{x \sim \bern(1/{n+1}, n)} \blr{|x| = 0}.  
\end{align}
Then, applying the $\THR^n_1$, which is in $\QACZ$, onto a new ancilla $a$, we can prepare, 
\begin{align}
    \ket{\psi_1}_{T,a} &= \THR^n_1 \ket{\psi_0}_T \ket{0}_a \\
    &=  \sqrt{p_1} \ket{W'_n}_T \ket{1}_a + \sqrt{1-p_1} \ket{\bad}_T \ket{0}_a.
\end{align}
From known binomial bounds (see \cref{fact:binombounds}), $\lim_{n \to \infty} p_1 = e^{-1} \approx 0.367$. Hence $p_1 \geq 1/4$ and applying \cref{cl:hiampamp} produces $\ket{W'_n}$ cleanly.
\end{proof}

\begin{lemma}[Controlled W map]\label{lem:ctrlW}
    We can apply the following map cleanly in $\QACZ$:
    $$\ket{0}_x \ket{0^n} \mapsto \ket{0}_x \ket{0^n}, \quad \ket{1}_x \ket{0^n} \mapsto \ket{1}_x \ket{W_n}.$$
\end{lemma}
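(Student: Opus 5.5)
Starting from \cref{cl:zerow}, which cleanly prepares $\ket{W'_n} = \frac{1}{\sqrt2}(\ket{0^n}+\ket{W_n})$, the plan is to build the controlled map out of reflections about that state. Let $U$ be the clean preparation circuit, so $U\ket{0^{n'}} = \ket{W'_n}\ket{0^{n'-n}}$. By \cref{fact:reflstate}, the reflection $\R_{\ket{W'_n}} = I - 2\kb{W'_n}$ (tensored with the clean ancillae) is implementable in $\QACZ$. The controlled version comes almost for free: the multi-qubit gate inside $U^\dag(\cdot)U$ is a reflection about $\ket{0^{n'}}$, and we add the control qubit $x$ into it, making it a reflection about $\ket{0^{n'}}\ket{1}_x$. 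This gives $\ctr{\R}_{\ket{W'_n}} = I - 2\kb{1}_x\tens\kb{W'_n}$ at essentially the same depth.

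On $\ket{0^n}$, with $\ket{W'_n} = \frac{1}{\sqrt2}(\ket{0^n}+\ket{W_n})$, we have $\braket{W'_n|0^n} = 1/\sqrt2$. Hence
\[
\R_{\ket{W'_n}}\ket{0^n} = \ket{0^n} - \sqrt2\,\ket{W'_n} = -\ket{W_n}.
\]
So the controlled reflection already sends $\ket{1}_x\ket{0^n}\mapsto -\ket{1}_x\ket{W_n}$ and fixes $\ket{0}_x\ket{0^n}$. Applying a $Z$ gate on $x$ converts the unwanted $-1$ phase on the $x=1$ branch into a global phase $-1$, or equivalently into the mapping up to relative phase $Z_x$. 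To get exactly the stated map, I instead apply $Z$ to $x$ before or after, which flips the sign of only the $x=1$ branch and yields $\ket{1}_x\ket{W_n}$ exactly.

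The ancillae used by $U$ must return to $\ket{0}$. In the reflection circuit $U(I-2\kb{0^{n'}}\tens\kb{1}_x)U^\dag$, the extra ancillae start clean and are mapped back: on inputs of the form $\ket{\phi}_T\ket{0^{n'-n}}$ with $\ket{\phi}$ in span$\{\ket{0^n},\ket{W'_n}\}$, $U^\dag$ maps $\ket{W'_n}\ket{0}$ to $\ket{0^{n'}}$ and the orthogonal part to something orthogonal to $\ket{0^{n'}}$. The reflection therefore acts correctly as $\R_{\ket{W'_n}\ket{0^{n'-n}}}$, and the ancillae stay clean on the relevant inputs.

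The main obstacle I expect is making the reflection controlled while keeping it implementable. This is handled by noting that every $\QACZ$ multi-qubit gate is a reflection about a product state, so adding the control qubit to that single central gate (reflection about $\ket{0^{n'}}\ket{1}_x$) is legitimate. The only other point to verify is that the composite is exactly clean, which the calculation above does. Overall depth is $2d+1$ for the prep depth $d$ of \cref{cl:zerow}, so the map lies in $\QACZ$.
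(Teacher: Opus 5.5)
Your proof is correct, but it takes a different route from the paper.

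\textbf{Your route.} You apply a single \emph{controlled} reflection $I - 2\kb{1}_x\otimes\kb{W'_n}$. You get it by adding $x$ to the central gate of $U(I-2\kb{0^{n'}})U^\dag$, which is legitimate because that gate is a reflection about a product state. You then use the fact that, for orthonormal $\ket{a},\ket{b}$, reflecting about $(\ket{a}+\ket{b})/\sqrt{2}$ sends $\ket{a}\mapsto -\ket{b}$. A $Z_x$, which commutes with the controlled reflection, removes the sign. The circuit is exactly $I-2\kb{1}_x\otimes\kb{W'_n}\otimes\kb{0^{n'-n}}$ as an operator, so cleanliness is automatic.

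\textbf{The paper's route.} It keeps the reflection uncontrolled. It first prepares $\ket{W'_n}$ on the data register regardless of $b$. It then reflects about the two-register state $\ket{\varphi}=\frac{1}{\sqrt{2}}(\ket{1}_x\ket{0^n}-\ket{0}_x\ket{W_n})$. This swaps $\ket{1}_x\ket{0^n}\leftrightarrow\ket{0}_x\ket{W_n}$ and leaves $x$ in $\ket{+}$, with the register holding $\ket{W_n}$ or $\ket{0^n}$ according to $b$. Finally it restores $x$ with an $H$ and an $\OR$ gate.

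\textbf{What each buys.} The paper uses \cref{fact:reflstate} purely as a black box. The price is preparing $\ket{\varphi}$, which needs an extra $\OR$, plus the final $\OR$ recomputation, for depth roughly $3d+4$. Your version opens up the reflection circuit, but needs only one reflection (depth $2d+1$) and is conceptually more direct.

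\textbf{Two small slips in the write-up.}
\begin{itemize}
\item The sentence calling the leftover $-1$ a ``global phase'' is wrong. It is a relative phase between the $x=0$ and $x=1$ branches, which is exactly why $Z_x$ is needed and why it suffices. Your final sentence gets this right.
\item In your first paragraph the conjugation should read $U(\cdot)U^\dag$, as you correctly write later.
\end{itemize}
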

\begin{proof}
One can construct the following state using an $\OR$ gate, then an additional $XZ$ gate on qubit $x$ from \cref{cl:zerow}. 
\begin{align}
   \ket{\varphi} = \frac{1}{\sqrt{2}} \ket{1}_x \ket{0^n}  - \frac{1}{\sqrt{2}} \ket{0}_x \ket{W_n}. 
\end{align}
We can implement $\R_{\varphi} = (I - 2\kb{\varphi})$, the reflection about this state in $\QACZ$ using \cref{fact:reflstate}. This has the effect of swapping the two branches.   
Then, on input $b \in \bin$, first construct the state, using \cref{cl:zerow}.
\begin{align}
    \ket{\psi_0(b)} &:= \ket{b}_x \ket{W'_n} \\
    &= \frac{1}{\sqrt{2}} \ket{b}_x \ket{0^n}  + \frac{1}{\sqrt{2}} \ket{b}_x \ket{W_n}  
\end{align}
Next, apply $\R_{\varphi}$ to obtain $\ket{\psi_1(b)}$. Observe that when $b = 1$, we have,
\begin{align}
     \ket{\psi_1(1)} &:= \R_{\varphi} \ket{\psi_0(1)} \\
     &= \frac{1}{\sqrt{2}} \lr{\R_{\varphi} \ket{1}_x \ket{0^n}}  + \frac{1}{\sqrt{2}} \ket{1}_x \ket{W_n}  \\
     &= \frac{1}{\sqrt{2}}  \ket{0}_x \ket{W_n}  + \frac{1}{\sqrt{2}} \ket{1}_x \ket{W_n}   \\
     &= \ket{+}_x \ket{W_n}
\end{align}
Because the $\ket{1} \ket{W_n}$ branch is orthogonal to $\ket{\varphi}$. 
Similarly, when $b = 0$, we obtain the state $\ket{\psi_1(0)} = \ket{+}_x \ket{0^n}$. To make this map clean, we can apply an $H$ gate on $x$ and then another $\OR$ gate to recover the input value of $b$. 
\end{proof}

\begin{corollary}[Any superposition of $\ket{0^n}$, $\ket{W_n}$] \label{cor:any0W}
For any $\alpha, \beta$ such that $\vlr{\alpha}^2 + \vlr{\beta}^2 = 1$, we can prepare the state, 
$\ket{W_n(\alpha, \beta)} := \alpha \ket{0^n}  + \beta \ket{W_n}$ in $\QACZ$. 
\end{corollary}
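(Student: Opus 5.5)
We need to prove Corollary \ref{cor:any0W}: for any amplitudes $\alpha,\beta$, the state $\alpha\ket{0^n}+\beta\ket{W_n}$ can be prepared in $\QACZ$. The plan is to prepare a single control qubit in the desired superposition and then apply the controlled-$W$ map of \cref{lem:ctrlW}.

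\textbf{Step 1: encode the amplitudes.} Start from $\ket{0}_x\ket{0^n}$ and apply a single-qubit unitary $U$ to $x$ with $U\ket{0}=\alpha\ket{0}+\beta\ket{1}$. Single-qubit gates are free, so this costs no depth.

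\textbf{Step 2: apply the controlled-$W$ map.} Apply the clean map of \cref{lem:ctrlW}. By linearity it produces
$$\alpha\ket{0}_x\ket{0^n}+\beta\ket{1}_x\ket{W_n}.$$

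\textbf{Step 3: disentangle the control qubit.} This is the main obstacle, since $x$ is now perfectly correlated with which branch the register is in. We erase $x$ with a gate controlled on the register, mapping $\ket{b}_x\ket{v}\mapsto\ket{b\oplus \OR(v)}_x\ket{v}$. Concretely, this is the reflection $I-2\kb{0^n}\tens\kb{-}_x$ followed by an $X$ on $x$. The two branches behave as follows:
\begin{itemize}
\item On $\ket{0^n}$, $\OR(v)=0$, so $x$ stays $\ket{0}$.
\item On $\ket{W_n}$, every basis string has weight exactly $1$, so $\OR(v)=1$ and $x$ flips from $\ket{1}$ to $\ket{0}$.
\end{itemize}
The result is $\ket{0}_x\big(\alpha\ket{0^n}+\beta\ket{W_n}\big)$, with $x$ returned clean. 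Since \cref{lem:ctrlW} is itself a clean map, all its ancillae are also restored, so the whole preparation is clean.

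\textbf{Cost.} The total depth is that of \cref{lem:ctrlW} plus $O(1)$, and the ancilla count is $O(n)$. Complex phases in $\alpha,\beta$ pose no issue, because the single-qubit unitary $U$ in Step 1 realizes any pair with $|\alpha|^2+|\beta|^2=1$.
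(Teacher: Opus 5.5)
Your proof is correct and matches the paper's argument: prepare $\alpha\ket{0}+\beta\ket{1}$ on the control qubit, apply the controlled-$W$ map of \cref{lem:ctrlW}, then uncompute the control with an $\OR$ gate from the register. Your check that $I-2\kb{0^n}\tens\kb{-}_x$ followed by $X$ on $x$ realizes $x\mapsto x\oplus\OR(v)$, clearing $x$ on both branches, just spells out a detail the paper leaves implicit.
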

\begin{proof}
To prepare the desired superposition, start with a qubit $\ket{\nu}_x = \alpha \ket{0} + \beta \ket{1}$ and apply the circuit from \cref{lem:ctrlW}. To uncompute qubit $x$ apply an $\OR$ gate. 
\end{proof}

\subsection{Poor man's fanout}\label{sec:majupper}

\begin{lemma}[Uncompute $\ket{W_n}$ but not $\ket{0^n}$]\label{lem:uncomputeW}
There is a $\QACZ$ circuit $C$ that cleanly achieves the mapping, 
$$C \ket{1/n}^{\tens n} \ket{1}  = \ket{W^n} \ket{1} \qquad \text{and} \qquad C \ket{0^n} \ket{0}  = \ket{0^n} \ket{0}$$
\end{lemma}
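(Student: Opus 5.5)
The plan is to regard $\ket{1/n}^{\tens n}$ as a starting state whose ``good'' component is exactly $\ket{W_n}$. We then run the single round of exact amplitude amplification from the proof of \cref{cl:hiampamp}, with every multi-qubit gate additionally controlled on the flag qubit $f$ (the last register). Adding one more control to a reflection gate $I-2\kb{\vth}_S$ gives $I-2\kb{\vth}_S\tens\kb{1}_f$, which is again a valid $\QACZ$ gate. So on input $\ket{0^n}\ket{0}$ every gate acts as the identity, and the second identity $C\ket{0^n}\ket{0}=\ket{0^n}\ket{0}$ holds automatically. All the work is in the branch $f=1$.

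\textbf{Decomposition.} Write
$$\ket{1/n}^{\tens n}=\sqrt{q_1}\,\ket{W_n}+\sqrt{1-q_1}\,\ket{\bad},$$
where $q_1=\Pr[\binomd(n,1/n)=1]=(1-1/n)^{n-1}\to e^{-1}$. This amplitude is real, positive, and at least $1/5$ for large $n$. As in the proof of \cref{cl:zerow}, I will use the fact, already used in that claim, that an exact-weight-one test can be written onto a fresh ancilla $a$ cleanly in $\QACZ$. (If needed, build it from $\THR^n_1$ together with a clean test for weight $\geq 2$.) Doing this, controlled on $f$, produces
$$\sqrt{q_1}\ket{W_n}\ket{1}_a+\sqrt{1-q_1}\ket{\bad}\ket{0}_a.$$
This is exactly the hypothesis of \cref{cl:hiampamp}, with $\alpha=q_1\geq 1/5$.

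\textbf{Amplification.} Next I follow \cref{cl:hiampamp} with $f$ as an extra control throughout:
\begin{enumerate}
\item Apply $\ctr{\R_\gamma}$ with $\gamma=1/(2q_1)$ to dilute the good amplitude to exactly $1/2$.
\item Apply a $Z$ phase on the marking ancilla.
\item Apply the reflection $\R_{\ket{\psi_0}}$ about the state $\ket{\psi_0}$ prepared so far.
\end{enumerate}
By \cref{fact:reflstate}, $\R_{\ket{\psi_0}}$ is $C_0^\dag(I-2\kb{0\cdots0})C_0$, where $C_0$ is the preparation circuit. Its middle gate is controlled on $f$, and $C_0$ itself has all its gates controlled on $f$. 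The preparation circuit consists of three parts: one layer of free single-qubit rotations $\ket{0}\mapsto\ket{1/n}$, the weight-one marker, and the $\ctr{\R_\gamma}$ gate. The difficulty is that the single-qubit rotations are not naturally controlled on $f$. To handle this I will write the rotations themselves as $f$-controlled reflection gates. Concretely, I compose the identity-on-$f{=}0$ circuit as $V^\dag\,(\text{$f$-controlled core})\,V$, where $V$ is the uncontrolled rotation layer. On $f=0$ this gives $V^\dag V=I$, so the $f=0$ branch stays fixed. After the reflection, the $f=1$ branch is $\ket{W_n}\ket{1}\ket{1}$ on the relevant ancillas, and applying $f$-controlled $X$ gates on those ancillas returns them to $\ket{0}$. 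The construction has constant depth and uses $O(n)$ ancillas.

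\textbf{Main obstacle.} I expect two points to need care. The first is the exact, clean weight-exactly-one marker, since the reduction to \cref{cl:hiampamp} needs an exact projector. The second is checking that the $f=0$ branch is left exactly unchanged, rather than, for example, the reflection $\R_{\ket{\psi_0}}$ acting nontrivially on $\ket{0^n}\ket{0}$. My first attempt is to make every non-free operation $f$-controlled and to conjugate by the free rotation layer $V$ as described, then check the $f=0$ branch explicitly. If the exact-one marker turns out to be a problem, I would instead use the fact that $\ket{W_n}$ is exactly the weight-one component and obtain the marker as $\THR^n_1\wedge\neg\THR^n_2$, with $\THR^n_2$ taken from the cited exact threshold constructions.
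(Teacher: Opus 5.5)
Your overall plan matches the paper's: mark the weight-one component of $\ket{1/n}^{\otimes n}$ with an $\EXACT_1$ test, run one round of exact amplitude amplification as in \cref{cl:hiampamp}, and clean up with flag-controlled $\cnot$ gates. The problem is how you keep $\ket{0^n}\ket{0}$ fixed. You make every multi-qubit gate, and all of $C_0$ (or its ``core''), additionally controlled on $f$. That is not a $\QACZ$ construction. In each layer a qubit participates in at most one gate, so putting $f$ on all the parallel gates of the $\EXACT_1$ circuit would require fanning $f$ out to $\poly(n)$ copies. The same holds for the $n$ rotations, as you first suggest. That fanout is exactly the resource that is not available. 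The global control also does not make the $f=0$ branch trivial by itself: the free single-qubit layers of the forward marker stay uncontrolled and act on $\ket{0^n}$.

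The fix is the idea you already apply to $V$, pushed all the way: only the middle gate of the reflection needs $f$. With $C_0$ completely uncontrolled,
$C_0^\dagger\bigl(I-2\kb{0\cdots 0}\otimes\kb{1}_f\bigr)C_0 = I-2\kb{\psi_0}\otimes\kb{1}_f$,
which is the identity on the $f=0$ branch, and $f$ is used in just one gate. This control is genuinely needed, because $\ket{\psi_0}$ alone has overlap $(1-1/n)^{n/2}\approx e^{-1/2}$ with $\ket{0^n}\ket{0}\ket{0}$. The forward steps need no control at all. A clean $\EXACT_1$ writes $0$ on $\ket{0^n}$, the rotation is controlled on that output, and the $Z$ acts on an ancilla that stays $\ket{0}$. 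This is exactly the paper's proof. It places the flag $q=1$ inside the reflected state $\ket{\psi_2}$, so $\ket{\psi_2}\perp\ket{0^n}\ket{0}_q\ket{00}$ and every gate is ineffective on that input.

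Two bookkeeping points remain. First, the $Z$ must act on the target of the controlled $\Rot_\gamma$, the qubit that is $1$ only on the good component, and not on the $\EXACT_1$ output. Otherwise the single round is exact only when $q_1=1/4$. Second, one round actually produces $-\ket{W_n}\ket{1}\ket{1}$; the proof of \cref{cl:hiampamp} drops this sign. Here it is a relative phase between the lemma's two branches, and a second $Z$ on that target after the reflection, or a $Z$ on $f$, removes it.
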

\begin{proof}
First we will describe a circuit $C_0$ to go from $\ket{1/n}^{\tens n} \ket{1}$ to $\ket{W} \ket{1}$. Then we will argue that $C_0$ acts as desired on $\ket{0} \ket{0}$. First using a fresh ancilla, $a_0$, obtain, 
\begin{align}
    \ket{\psi_1} &:= \EXACT_1(X,a_0) \ket{1/n}^{\tens n}_X \ket{1}_q \ket{0}_{a_0} \\
    &= \sqrt{p} \ket{W_n} \ket{1}_q \ket{1}_{a_0} + \sqrt{1-p} \ket{\bad} \ket{1}_{q} \ket{0}_{a_0}.
\end{align}
where $p = \Pr[\binomd(1/n,n) = 1] \geq e^{-1}$. Then, similar to \cref{cl:hiampamp}, apply a controlled-$\Rot_{\gamma}$ gate controlled on both $q$ and $a_0$ to obtain, 
\begin{align}
    \ket{\psi_2} &:= \text{cc-}\Rot_{\gamma}(q,a_0,a_1) \ket{\psi_1} \ket{0}_{a_1} \\
    &= \frac{1}{2} \ket{W_n} \ket{1}_q \ket{1}_{a_0} \ket{1}_{a_1} + \frac{\sqrt{3}}{2} \ket{\bad}_{T,q,a_0} \ket{0}_{a_1}.  
\end{align}
Let $\R_{\ket{\psi_2}} := (I - 2\kb{\psi_2})$ be the reflection about $\ket{\psi_2}$. 
Observe that since $\ket{\psi_2}$ can be constructed in $O(1)$ depth,  $\R_{\ket{\psi_2}}$ is also in $\QACZ$. 
Next, obtain, 
\begin{align}
    \ket{\psi_3} &:=  Z_{a_0} \R_{\ket{\psi_2}} Z_{a_0} \ket{\psi_2} \\
    &= - Z_{a_0} \R_{\ket{\psi_2}} \cdot \lr{\frac{1}{2} \ket{W_n} \ket{1}_q \ket{1}_{a_0} \ket{1}_{a_1} - \frac{\sqrt{3}}{2} \ket{\bad}_{T,q,a_0} \ket{0}_{a_1}  } \\
    &= - Z_{a_0} \ket{W_n} \ket{1}_q \ket{1}_{a_0} \ket{1}_{a_1}  \\
    &= \ket{W_n} \ket{1}_q \ket{1}_{a_0} \ket{1}_{a_1}.  
\end{align}
On input registers marked $X,q$ and ancillae $\ket{00}_{a_0,a_1}$, this circuit is given by, 
\begin{align}
   C_0 := Z_{a_0} \R_{\ket{\psi_2}} Z_{a_0} \cdot \text{cc-}\Rot_{\gamma} \cdot \EXACT_1(X,a_0).
\end{align}
Then, observe that all these gates are ineffective on input $\ket{0^n}_X \ket{0}_{q}$. In particular, $\ket{\psi_2}$ is orthogonal to $\ket{0^n}_X \ket{0^3}_{q,a_0,a_1}$. Therefore, 
\begin{align}
&    C_0 \cdot \ket{0^n}_X \ket{0}_q \ket{00}_{a_0,a_1} =  \ket{0^n}_X \ket{0}_q \ket{00}_{a_0,a_1} \\
&    C_0 \cdot \ket{1/n}^{\tens n}_X \ket{1}_q \ket{00}_{a_0,a_1} = \ket{W_n}_X \ket{1}_q \ket{11}_{a_0,a_1}
\end{align}
Thus, $C := \cnot(q,a_0) \cnot(q,a_1) C_0$ implements the desired mapping.
\end{proof}

Then, the reverse circuit for the above is our ``poor man's'' $n$-fanout.

\poormansfanoutmap
\begin{proof}
Let $C$ be the circuit from \cref{lem:uncomputeW}. Then, this mapping is achieved by $C^\dag$.
\end{proof}

\subsection{Majority}
For ease of description we will use the following claim formally connecting classical post-processing with limited fanout to $\QACZ$ circuits.

\begin{claim}[Closure under $\ACZ$ postprocessing]\label{cl:ac0postprocessing}
Let $U$ be a $\QACZ$ circuit that, on input $\x$, produces output qubits whose computational-basis measurement outcomes form a distribution $\D(\x)$ over $\bin^{m(n)}$. Suppose there is an $\ACZ$ circuit $T$ with $\polylog(n)$ fanout satisfying, for all $\x \in \bin^n$, 
$$\Pr_{\y \sim \D(\x)}[T(\y)) = g(\x)] \ge 1-\delta$$
for a function $g(\x)$, then there is a $\poly(m)$ size single-output $\QACZ$ circuit $C$ (\cref{def:booleanckt}) such that, $$\corr(f_C, g) \geq 1-2\delta.$$
\end{claim}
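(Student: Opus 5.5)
The plan is to simulate the classical post-processing $T$ coherently inside $\QACZ$ and then read off its output bit. The correlation bound then follows from how the output distribution relates to $f_C$.

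\textbf{Step 1: compile $T$ into a coherent circuit.} Fix an $\ACZ$ circuit $T$ of constant depth and $\poly(m)$ size, in which every wire has fanout at most $\polylog(n)$. Each $\AND$/$\OR$/$\NOT$ gate of $T$ becomes a reversible Toffoli-type gate that writes its value onto a fresh ancilla. $\AND$ of qubits $S$ onto target $a$ is the reflection $I-2\kb{1^{|S|}}_S\tens\kb{-}_a$, and $\OR$ is obtained by conjugating with $X$ gates. Such a layer acts as a permutation of computational basis states. Wire fanout is the obstacle, since a quantum bit cannot simply be copied. Because each wire is read by at most $\polylog(n)$ gates, before each layer I will apply $\FANOUT_{\polylog(n)}$, which is exact in $\QAC(\Theta(1),\poly(n))$ by \cref{fact:polylogfan}. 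This copies each wire value (in the computational basis, via XOR onto fresh ancillae) as many times as needed, so every gate reads its own copy. Each layer of $T$ therefore costs $O(1)$ depth, giving constant total depth and $\poly(m)$ size overall.

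\textbf{Step 2: compose with $U$ and designate the output.} Let $C$ run $U$ on $\ket{\x}\ket{0}$, then the coherent version $\tilde T$ of $T$ on $U$'s output qubits, and designate the qubit holding $T$'s output as the target $t$. I will not uncompute anything. Let $\ket{\phi_\x}=U\ket{\x}\ket{0}=\sum_{\y}\beta_\y\ket{\y}\ket{\mathrm{rest}_\y}$, expanded in the computational basis of the output qubits. Then $\tilde T$ maps $\ket{\y}\ket{0}$ to $\ket{\y}\ket{\mathrm{junk}(\y)}\ket{T(\y)}_t$. The branches with distinct $\y$ stay orthogonal, since $\y$ itself is preserved. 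Measuring $Z_t$ therefore yields outcome $T(\y)$ (identified with $\pm1$) with probability exactly $\sum_{\y:T(\y)=b}|\beta_\y|^2=\Pr_{\y\sim\D(\x)}[T(\y)=b]$. This is the key point: no interference arises, because measuring $t$ commutes with the computational-basis measurement of the $\y$ register.

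\textbf{Step 3: the correlation bound.} For each $\x$, with $g(\x)\in\bpm$, we have $f_C(\x)g(\x)=\Pr[\text{out}=g(\x)]-\Pr[\text{out}\neq g(\x)]\ge (1-\delta)-\delta=1-2\delta$. Averaging over $\x\sim\bin^n$ gives $\corr(f_C,g)\ge 1-2\delta$.

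The only nontrivial step is Step 1: reconciling classical fanout with reversibility. This is handled exactly by the $\polylog$ fanout assumption together with \cref{fact:polylogfan}. I also need to check that gates of unbounded fan-in are native single reflections in the \cite{rosenthal2021qac0} gate model, which they are.
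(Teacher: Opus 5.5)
Your proposal is correct and follows essentially the same route as the paper: you compile $T$ into a coherent $\QACZ$ circuit using $\polylog(n)$ fanout (\cref{fact:polylogfan}) and feed $U$'s output qubits into it. Because the $\y$ register is preserved in the computational basis, measuring $t$ gives exactly the distribution of $T(\y)$ for $\y \sim \D(\x)$, so the success probability is at least $1-\delta$ on every input and the correlation is at least $1-2\delta$. The only cosmetic difference is that you leave junk on the ancillae rather than computing $T$ cleanly as the paper asserts. This is harmless, since that junk depends only on $\y$ and so cannot cause interference between branches.
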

\begin{proof}
Using \cref{fact:polylogfan}, there is a $\QACZ$ circuit $C'$ acting on $m$ inputs that, using $\poly(m)$ ancillae, implements the following map for all $\y \in \bin^m$, 
\begin{align}
   \ket{\y}_X \ket{0}_t \mapsto \ket{\y}_X \ket{T(\y)}_t
\end{align}
Now, observe that any measurement on the $X$ register in the output does not affect measurements on the $t$ register (since they commute).  Prepend $U$ to $C'$, so that the $m$ output qubits of $U$, feed into the inputs of $C'$ to obtain $C$ that acts on $n$ inputs. 

Then, on any input $\x \in \bin^n$, apply $C$ and measure both $t$ and $X$ in the standard basis. The distribution on $t$ is exactly the one obtained by applying $T(\y)$ on a sample from $\D(\x)$. 
Thus the resulting circuit outputs $g(\x)$ with probability at least $1-\delta(n)$ on every input, which corresponds to correlation $1-2\delta(n)$ with $g(\x)$.
\end{proof}

Poor man's fanout allows us to create $n$ ``weaker'' copies of the input. In order to approximate majority, we can separately apply $\QACZ$ circuits to each copy to estimate majority and amplify. A gadget with similar properties to poor man's fanout was given in Appendix B of \cite{grier2026tc0}. The lemma below, which will serve as the key ingredient in our test, is analogous to their Claim 28.

\oneweakcopytest
%
\begin{proof}
Let $\ket{\phi} := a \ket{0^n}+ b\ket{W_n}$ for some $a,b$ with $a^2+b^2=1$ we will choose later, and let $\R = (I - 2\kb{\phi})$, the reflection about $\ket{\phi}$. Then, the circuit $C$ first applies $\R$ and then a $\OR$ gate with $o$ as the target. Mathematically, $C$ is given by, 
\begin{align}\label{eq:cktdef}
    C := (I - 2\kb{0^n} \tens \kb{+}_o) \cdot \R
\end{align}
From \cref{cor:any0W} one can implement any such $\ket{\phi}$ therefore implement $\R$ using \cref{fact:reflstate} in $\QACZ$. 

We will choose $a,b$ as follows. First observe for any $\x$ 
of hamming weight $\ell$, $\alpha_{\ell} := \braket{0^n | \psi(\x)}$ and $\gamma_{\ell} := \braket{W_n | \psi(\x)}$ depend only on $\ell$. 
Let $s :=\sqrt{\alpha_t^2+\gamma_t^2}$, observe $s^2$ is the total weight on the subspace spanned by $\ket{0^n}, \ket{W_n}$ in $\ket{\psi(\x)}$ whenever $|\x|=t$. Choose, 
\begin{align}
a=\sqrt{\frac{s-\gamma_t}{2s}}, \qquad b=\sqrt{\frac{s+\gamma_t}{2s}}
\end{align}
so that $1-2a^2=\gamma_t/s$ and $2ab=\alpha_t/s$. 
This defines the circuit $C$ through \cref{eq:cktdef}.

Now consider any input $\x \in \bin^n$  and let $\ell = |\x|$. Then, 
$\ket{\psi(\x)}$ can be written as, 
\begin{align}\label{eq:inpdec2}
\ket{\psi_\ell}=\alpha_\ell\ket{0^n}+\gamma_\ell\ket{W_n}+\ket{\ov{\eta(\x)}}
\end{align}
for some un-normalized $\ket{\ov{\eta(\x)}}$ with $\braket{0^n | \ov{\eta(\x)}} = \braket{W_n | \ov{\eta(\x)}}$. 
Since the final $\OR$ gate flips the ancilla $o$ only in the subspace spanned by $\ket{0^n}$, the probability of measuring $\ket{1}$ on the output is, $\braket{1 | \rho(\x) | 1} = \vlr{\braket{0^n | \R | \psi(\x)}}^2$, where,
\begin{align}
 \braket{0^n | \R | \psi(\x)} &= \bra{0^n} \cdot (I - 2\kb{\phi}) \ket{\psi(\x)} \\
    &=  2 \bra{0^n} \cdot \kb{\phi} \cdot \lr{\gamma_{\ell} \ket{W_n} + \ket{\ov{\eta}(\x)}}  + \alpha_{\ell} \braket{0^n | \R | 0^n} \\
    &=  2 \gamma_{\ell} \bra{0^n} \cdot \kb{\phi} \cdot  \ket{W_n}  + \alpha_{\ell} \braket{0^n | \R | 0^n} \\
    &= 2 \gamma_{\ell} ab + \alpha_\ell \cdot (1-2a^2) \\
    &= \gamma_{\ell} \cdot \alpha_t/s + \alpha_{\ell} \cdot \gamma_t/s
\end{align}
Now, letting $r:=\sqrt{1-\frac1n}$, gives $\alpha_\ell = r^{\ell}$. 
Observing there are exactly $\ell$ hamming weight $1$ basis vectors with non-zero amplitudes of $\frac1{\sqrt n}r^{\ell-1}$  in $\ket{\psi(\x)}$ gives $\gamma_\ell = \frac{\ell}{n}r^{\ell-1}$. 
Substituting, 
\begin{align}
s &=\sqrt{r^{2t}+\frac{t^2}{n^2}r^{2t-2}} \\ 
    &=r^{t-1}\sqrt{r^2+\frac{t^2}{n^2}} 
\end{align}
gives, 
\begin{align}
 \braket{0^n | \R | \psi(\x)} = \frac{t-\ell}{n}\cdot \frac{r^\ell}{\sqrt{r^2+\frac{t^2}{n^2}}}.
\end{align}
Now assume $|\ell-t|=O(\sqrt n\,\polylog n)$. Then $(\ell-t)/n=o(1)$, hence
$r^\ell=r^t(1+o(1))$. Therefore
\begin{align}
\braket{1 | \rho(\x) | 1} = (1+o(1))\left(\frac{\ell-t}{n}\right)^2 \frac{r^{2t}}{r^2+\frac{t^2}{n^2}}.
\end{align}
Since $r^{2t}=(1-\frac1n)^t=e^{-t/n}(1+o(1))$ and
$r^2+\frac{t^2}{n^2}=1+\frac{t^2}{n^2}+o(1)$, it follows that
\begin{align}
\braket{1 | \rho(\x) | 1} = \lambda_t(1+o(1))\left(\frac{\ell-t}{n}\right)^2,
\qquad
\lambda_t:=\frac{e^{-t/n}}{1+t^2/n^2}=\Theta(1).
\end{align}
Therefore, as claimed,
\begin{align}
\braket{0 | \rho(\x) | 0} &= 1-\braket{1 | \rho(\x) | 1} = 1-\lambda_t(1+o(1))\left(\frac{\ell-t}{n}\right)^2.
\end{align}
\end{proof}

\begin{lemma}[$\QACZ$ circuit for $\APPROX_t$]\label{lem:approxtcircuit}
Let $a=\polylog(n)$ and let $t\in\{0,\dots,n\}$. There exists a $\QACZ$ circuit $C$ such that, on every input $\x\in\{0,1\}^n$ of Hamming weight $\ell$, $C$ outputs $1$ with probability at least $1-1/n$ if $|\ell-t|<\frac{\sqrt n}{2a}$, and outputs $0$ with probability at least $1-1/n$ if $|\ell-t|>\frac{\sqrt n}{a}$.
\end{lemma}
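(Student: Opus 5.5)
We plan to combine the poor man's fanout gadget (\cref{lem:poormanfanout}) with the weak copy test (\cref{lem:onecopytest}), and then amplify classically using \cref{cl:ac0postprocessing}.

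\textbf{Making weak copies.} Apply \cref{lem:poormanfanout} to each input bit $x_i$, producing $N$ registers (we will take $N = n$ or a polynomial multiple obtained by first applying $\FANOUT_{\polylog(n)}$ from \cref{fact:polylogfan}) in the state $\ket{x_i/n}^{\tens n}$. Regrouping qubits, this yields $n$ independent, identical weak copies $\ket{x_1/n}\cdots\ket{x_n/n}$ of the input, in tensor product. On each copy, run the circuit of \cref{lem:onecopytest} with threshold $t$ and measure its output $o_j$. These trials are independent, and each outputs $1$ with probability $q(\ell) := \lambda_t(1+o(1))((\ell-t)/n)^2$ whenever $|\ell - t| = O(\sqrt n\,\polylog n)$.

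\textbf{Separation of the two regimes.} If $|\ell-t| < \sqrt n/(2a)$, then $q \le \lambda_t(1+o(1))/(4a^2 n)$. If $\sqrt n/a < |\ell - t| \le \sqrt n\,\polylog n$, then $q \ge \lambda_t(1-o(1))/(a^2 n)$. For even larger $|\ell-t|$, we use that the exact formula $\braket{0^n|\R|\psi(\x)} = \frac{t-\ell}{n} r^\ell/\sqrt{r^2+t^2/n^2}$ is monotone in $|\ell-t|$ near $t$. Where it is not, we bound it below by a constant times $1/(a^2 n)$ directly; when $\ell$ is far from $t$, the factor $r^\ell \ge e^{-1/2}$ stays $\Theta(1)$. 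So the acceptance probabilities in the two regimes differ by a constant factor (at least $4$), both of order $1/(a^2 n)$.

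\textbf{Amplification.} Take $M = n\cdot a^2\cdot \polylog(n)$ weak copies, which requires $M/n = \polylog(n)$ rounds of poor man's fanout on $\polylog(n)$ classical copies of each $x_i$. The count of $1$s among the $o_j$ has mean $\mu_{\mathrm{near}} \le c_1\polylog(n)$ or $\mu_{\mathrm{far}} \ge 4c_1\polylog(n)$. By a Chernoff bound, thresholding at $2c_1\polylog(n)$ distinguishes the two cases with error $\le 1/n$. Computing that threshold on $M = \poly(n)$ bits is the main obstacle, since $\THR$ on many bits is not in $\QACZ$. The fix is to use that the expected count is only $\polylog$. Partition the $M$ outputs into $K = \polylog(n)$ buckets, OR each bucket, and then take an approximate count of the $\polylog(n)$ OR-outputs. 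Approximate counting or $\MAJORITY$ on $\polylog(n)$ bits is in $\ACZ$ with polylog fanout. Since each bucket fires with probability $\approx 1-e^{-\mu/K}$, which differs by a constant factor between the regimes, Chernoff over the $K$ independent buckets again gives error $\le 1/n$. Finally, invoke \cref{cl:ac0postprocessing} to turn this measurement-plus-$\ACZ$ postprocessing into a single $\QACZ$ circuit. We output $1$ in the near case (negating as needed), obtaining success probability $\ge 1-1/n$ in each regime.
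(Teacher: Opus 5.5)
Your proposal is correct and follows essentially the same route as the paper. Both take $\polylog(n)$ classical copies of $\x$, apply poor man's fanout to get $\poly(n)$ independent weak copies, run the weak copy test on each, take ORs over buckets of size $\Theta(na^2/\lambda_t)$ so the bucket firing probabilities differ by a constant, apply an $\ACZ$ approximate majority (Ajtai--Ben-Or) to the polylogarithmically many bucket outputs, and compile the result into $\QACZ$ via \cref{cl:ac0postprocessing}. You also handle the case $|\ell-t|\gg\sqrt n\,\polylog n$, using the exact amplitude formula from the proof of \cref{lem:onecopytest} (which is in fact monotone in $|\ell-t|$ on each side of $t$). The paper's proof passes over this case, even though that lemma's asymptotic statement only covers $|\ell-t|=O(\sqrt n\,\polylog n)$.
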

\begin{proof}
Let $C_0$ be the circuit from \Cref{lem:onecopytest}, and let $Z$ be the bit that is $1$ iff the last qubit of $C_0$ measures to $\ket{1}$. By \Cref{lem:onecopytest}, $\Pr[Z=1]=\lambda_t(1+o(1))((\ell-t)/n)^2$, where $\lambda_t=\frac{e^{-t/n}}{1+t^2/n^2}$. Thus, for all large enough $n$, we may bound the $1+o(1)$ factor between $2/3$ and $4/3$, so
\[
\frac{2}{3}\lambda_t\left(\frac{\ell-t}{n}\right)^2
\le
\Pr[Z=1]
\le
\frac{4}{3}\lambda_t\left(\frac{\ell-t}{n}\right)^2.
\]
Hence $\Pr[Z=1]\le \lambda_t/(3na^2)$ when $|\ell-t|<\frac{\sqrt n}{2a}$, and $\Pr[Z=1]\ge 2\lambda_t/(3na^2)$ when $|\ell-t|>\frac{\sqrt n}{a}$.

Set $r=\lceil 3na^2/\lambda_t\rceil$, $m=d\log n$ for a large enough constant $d$, and $s=\lceil rm/n\rceil=\polylog(n)$. Using $\polylog(n)$ fanout, make $s$ copies of the input string $\x$. For each copied string and each $i\in[n]$, apply \Cref{lem:poormanfanout} to the copied bit $x_i$. For each $u\in[n]$, the $u$th fanout outputs across $i=1,\dots,n$ form one copy of the encoded input $\ket{x_1/n}\cdots\ket{x_n/n}$. Thus we obtain $sn\ge rm$ unentangled encoded inputs, and hence $rm$ independent copies $Z_{j,k}$ of $Z$, with $j\in[m]$ and $k\in[r]$.

Let $W_j=\bigvee_{k=1}^r Z_{j,k}$. In the first case, $\Pr[W_j=1]\le r\Pr[Z=1]\le 2/3$. In the second,
\[
\Pr[W_j=1]=1-(1-\Pr[Z=1])^r \ge 1-\left(1-\frac{2\lambda_t}{3na^2}\right)^r \ge 1-e^{-2}.
\]
Since the $W_j$ are independent, a Chernoff bound gives $\sum_j W_j<0.7m$ in the first case and $\sum_j W_j>0.8m$ in the second, each with probability at least $1-1/n$, for large enough $d$.

By a well-known result of Ajtai and Ben-Or~\cite{ajtai1984probabilistic}, there is an approximate-majority function on $m$ bits, that is, a function $A_m:\{0,1\}^m\to\{0,1\}$ computable by an $\ACZ$ circuit of size $\poly(m)$, which outputs $1$ whenever the input Hamming weight is below $0.7m$ and outputs $0$ whenever it is above $0.8m$. Since $m=O(\log n)$, this circuit has size, and hence fanout, at most $\polylog(n)$. Applying $A_m$ to $W_1,\dots,W_m$ therefore gives the desired output with probability at least $1-1/n$, and the result follows from \cref{cl:ac0postprocessing}.
\end{proof}

\ezmajority

\begin{proof}
Let $\gamma=\frac{\sqrt n}{a}$ and $\eta=\frac{\sqrt n}{2a}$. For each threshold $t$, let $D_t$ be the $\QACZ$ circuit from \cref{lem:approxtcircuit} . Thus, on inputs of Hamming weight $\ell$, $D_t$ outputs $1$ with probability at least $1-1/n$ if $|\ell-t|<\eta$, and outputs $0$ with probability at least $1-1/n$ if $|\ell-t|>\gamma$.

Let $L=c'\log{n}\sqrt{n}$ for a sufficiently large constant $c'$, and consider thresholds
\[
T_+=\{n/2+j\eta: 1\le j\le L/\eta\},
\qquad
T_-=\{n/2-j\eta: 1\le j\le L/\eta\}.
\]
Since $|T_+\cup T_-|=\polylog(n)$, we may run all the circuits $D_t$ in parallel. Let $C$ output $1$ if some $D_t$ with $t\in T_+$ outputs $1$, output $0$ if some $D_t$ with $t\in T_-$ outputs $1$, and output an arbitrary bit otherwise. By \Cref{cl:ac0postprocessing}, this is computable in $\QACZ$.

If $\gamma<\ell-n/2\le L$, then some $t\in T_+$ satisfies $|\ell-t|<\eta$, while every $t\in T_-$ satisfies $|\ell-t|>\gamma$. Hence, except with probability at most $|T_+\cup T_-|/n$, the circuit $C$ outputs $1=\MAJ_n(x)$. The case $-L\le \ell-n/2<-\gamma$ is symmetric.

Thus $C$ can disagree with $\MAJ_n$ only if one of the above tests outputs the wrong answer, if $|\ell-n/2|\le \gamma$, or if $|\ell-n/2|>L$. The first case contributes at most $|T_+\cup T_-|/n=\polylog(n)/n$ failure probability. To analyze the second case, for a uniformly random input, the probability that the hamming weight $\ell$ satisfies $|\ell-n/2|\le \gamma$ is at most $O(\gamma/\sqrt n)=O(1/a)$. For the last case, the probability that $\ell$ satisfies $|\ell-n/2|>L$ is at most $2e^{-2L^2/n}=O(1/n)$. Each of these contributions is $O(1/a)$. Therefore $C$ agrees with $\MAJ_n$ on a $1-O(1/a)$ fraction of inputs, and hence has correlation at least $1-1/a$ after adjusting constants.
\end{proof}

\ifnum\ANON=1
\else
\section{Acknowledgments}
The authors are grateful to Avishay Tal and Ryan Williams for valuable advice and discussions regarding this work. 
\fi

\appendix
\bibliographystyle{alpha}
\bibliography{main}

\section{Skipped proofs}\label[appendix]{sec:addproofs}
\approxcleancomp*
\begin{proof} [Proof of \cref{fact:cleancomp}]
Preserve the input registers by making a classical copy of each input coordinate $x_i$ in a single layer of $\cnot$ gates at the start. This increases the number of ancillae to $a = n+m$. Call this circuit $C_0$. Let $A$ be the set of ancilla qubits of $C_0$ other than output register $t$, then, wlog, the final state of $C_0$ has the form,
  \begin{align}
    C_0 \ket{\x} \ket{0^a}_{t,A} &= \ket{\x} \lr{\sqrt{p_0(\x)}  \ket{0}_t \ket{\psi_0(\x)}_A + \sqrt{p_1(\x)} \ket{1}_t \ket{\psi_1(\x)}_A} 
  \end{align}
for some arbitrary input-dependent states $\ket{\psi_0(\x)}$ and $\ket{\psi_1(\x)}$ on the remaining ancillae that contain global phases to absorb the additional phases. Now using another fresh ancilla $t'$, define $C'$ using the standard copy-uncompute procedure as,  
\begin{align}
    C' := C_0^\dag \cdot \cnot(t,t') \cdot C_0
\end{align}
Then, 
\begin{align}
\cnot(t,t') C_0 \ket{\x} \ket{0^a}_{A} \ket{0}_{t'} &= \ket{\x} \underbrace{\lr{ \sqrt{p_0(\x)} \ket{0}_t \ket{\psi_0(\x)}_A \ket{0}_{t'} + \sqrt{p_1(\x)} \ket{1}_t \ket{\psi_1(\x)}_A \ket{1}_{t'}}}_{\ket{\varphi(\x)}}.  
\end{align}
Therefore, for both values of $\b \in \bin$, the claimed amplitude is,
\begin{align}
    \alpha_b(\x) &:= \bra{\x} \bra{0^a} \bra{b}_{t'} \cdot C' \cdot \ket{\x} \ket{0^{a+1}}_{A,t'} \\
&= \lr{\bra{\x} \bra{0^a} \bra{b}_{t'} \cdot C_0^\dag} \cdot \ket{\x} \ket{\varphi(\x)}_{A,t'} \\
&= \lr{\bra{\x} \bra{0^a} \bra{b}_{t'} \cdot C_0^\dag} \cdot \lr{\sqrt{p_b(\x)} \cdot \ket{\x} \ket{b}_t \ket{b}_{t'} \ket{\psi_b(\x)}_A}\\ 
&=  \lr{\sqrt{p_0(\x)} \bra{0}_t \bra{\psi_0(\x)}_A + \sqrt{p_1(\x)} \bra{1}_t \bra{\psi_1(\x)}}_A \cdot \sqrt{p_b(\x)} \ket{b}_t \ket{\psi_b(\x)}_A \\
&= p_b(\x).
  \end{align}
\end{proof}

\begin{lemma}[Felinity of uniformly rotated W states]\label{lem:felwstate}
For any state $\ket{\psi}$ of the form $\ket{\psi} := U^{\tens n} \ket{W_n}$ for any unitary $U$, 
$$\feln(\psi) = \negl(n).$$
\end{lemma}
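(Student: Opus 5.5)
We compute the felinity directly from the standard-basis distribution of $\ket{\psi} = U^{\tens n}\ket{W_n}$. Because the felinity only depends on diagonal entries, we need
$$q(\y) := \vlr{\braket{\y|U^{\tens n}|W_n}}^2 \qquad \text{and} \qquad \feln(\psi) = 2\sum_{\y} q(\y)\,q(\ov{\y}).$$
Write $U\ket{0} = a\ket{0} + b\ket{1}$ and $U\ket{1} = c\ket{0} + d\ket{1}$, with $|a|^2+|b|^2 = |c|^2+|d|^2 = 1$. Expanding $\ket{W_n} = n^{-1/2}\sum_i \ket{e_i}$ gives
$$\braket{\y|U^{\tens n}|W_n} = n^{-1/2}\sum_{i} u_{y_i}\prod_{j\neq i} v_{y_j},$$
where $v_0 = a$, $v_1 = b$ are the entries of $U\ket{0}$ and $u_0 = c$, $u_1 = d$ are those of $U\ket{1}$.

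The plan is to bound $\sum_\y q(\y)q(\ov\y)$ by splitting into cases according to $|a|$ and $|b|$, using the elementary bound $\sum_\y q(\y)q(\ov\y) \le \max_{\y} q(\ov{\y}) \cdot \sum_\y q(\y) = \max_\y q(\y)$. It therefore suffices to show that either $\max_\y q(\y)$ is negligible, or that the mass of $q$ concentrates on a set $S$ whose complement-image $\ov{S}$ carries negligible mass.

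\textbf{Case 1: both $|a|^2$ and $|b|^2$ are bounded away from $1$ by a constant.} Then every product $\prod_{j\ne i}|v_{y_j}|$ is at most $\max(|a|,|b|)^{n-1}$, which is exponentially small. Hence
$$|\braket{\y|\psi}| \le \sqrt{n}\,\max(|a|,|b|)^{n-1},$$
and $\max_\y q(\y) = \negl(n)$.

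\textbf{Case 2: one of them is close to $1$.} Suppose WLOG $|b|^2 \le \eta$ for small $\eta$ (the case $|a|^2 \le \eta$ is symmetric under $X$-conjugation, which preserves felinity). Now $q$ is nearly concentrated on low-weight strings, and $q(\ov\y)$ for low-weight $\y$ concerns high-weight strings. For $|\y| = w$ we have
$$|\braket{\y|\psi}| \le \sqrt{n}\,|b|^{w-1},$$
since at most one factor avoids $b$. So high-weight strings have amplitude at most $\sqrt{n}\,\eta^{(w-1)/2}$. For each $\y$, one of $\y$ and $\ov{\y}$ has weight at least $n/2$, so
$$q(\y)q(\ov\y) \le n\,\eta^{n/2-1}\cdot\max\bigl(q(\y),q(\ov\y)\bigr).$$
Summing over $\y$ gives $\feln \le 4n\,\eta^{n/2-1}$, which is negligible for any constant $\eta<1$.

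Combining the two cases, we choose the threshold $\eta = 1/2$. This choice is clean: if $|b|^2 \le 1/2$ we are in Case 2 with $\eta = 1/2$, and the bound $4n\,2^{-(n/2-1)}$ is negligible. Otherwise $|a|^2 < 1/2$ and we are in the symmetric instance. So Case 2 alone seems to suffice with the crude bound $|v|\le\max(|a|,|b|)$: for weight $w \ge n/2$, the amplitude is at most $\sqrt{n}\cdot\max(|v_1|)^{w-1}$.

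The main obstacle is handling $|b|^2$ near $1/2$ uniformly in $U$ (including $U$ depending on $n$), and making sure the symmetric reduction is legitimate. Conjugating by $X^{\tens n}$ maps $U$ to $XU$, which swaps $a \leftrightarrow b$ and preserves $\feln$ because $\ov{\y}$ pairs are swapped. I would verify this carefully, and confirm that the threshold $1/2$ indeed yields a $2^{-\Omega(n)}\poly(n)$ bound in all cases.
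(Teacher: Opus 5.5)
Your proof is correct, and it takes a genuinely different and more elementary route than the paper.

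The paper's route has three steps:
\begin{itemize}
\item It first reduces $U$ to $R_y(\beta)$ via an Euler-angle decomposition, since the global phase and $R_z$ factors do not affect computational-basis statistics.
\item It expands $R_y(\beta)^{\otimes n}\ket{W_n}=\sum_k \alpha_k\ket{D^n_k}$ in the Dicke basis, so that $\feln = 2\sum_k \binom{n}{k}^{-1}|\alpha_k|^2|\alpha_{n-k}|^2$.
\item It identifies $\alpha_k$ with a Wigner small-$d$ matrix entry and quotes a closed form to get $|\alpha_k|\le \mathrm{poly}(n)\sqrt{\binom{n}{k}}\cos^{n-k-1}(\beta/2)\sin^{k-1}(\beta/2)$. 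Then $\sin(\beta/2)\cos(\beta/2)\le 1/2$ finishes.
\end{itemize}

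You never leave the computational basis and never specialize $U$. The bound $|\langle y|\psi\rangle|\le\sqrt{n}\,|b|^{|y|-1}$ (or symmetrically $\sqrt{n}\,|a|^{n-|y|-1}$), together with the fact that one of $y,\overline{y}$ has weight at least $n/2$, gives $\feln\le 4n\cdot 2^{-(n/2-1)}$.

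\textbf{What your route buys.} It is self-contained: no representation theory and no external closed form are needed. Your amplitude expansion is exactly what one would compute to verify the paper's $d$-matrix bound. Your bound is also explicit and visibly uniform in $U$.

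\textbf{What the paper's route buys.} A better exponent, $\mathrm{poly}(n)\,2^{-n}$ rather than $\mathrm{poly}(n)\,2^{-n/2}$. The paper's final display silently drops a leftover factor $\binom{n}{k}$, but this is harmless since $\sum_k\binom{n}{k}4^{-n}=2^{-n}$. The difference in exponent is immaterial for negligibility.

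\textbf{Simplifications.} You can drop Case~1 and the hedging at the end. Since $|a|^2+|b|^2=1$, one of them is at most $1/2$. Your Case~2 bound with $\eta=1/2$ depends on nothing else about $U$, so $n$-dependent $U$ and $|b|^2$ near $1/2$ are already covered. The $X^{\otimes n}$ reduction is also legitimate: $(XU)^{\otimes n}=X^{\otimes n}U^{\otimes n}$ swaps $a\leftrightarrow b$, and the defining sum of $\feln$ is invariant under reindexing $y\leftrightarrow\overline{y}$. Alternatively, you can apply the $|a|^{n-|y|-1}$ bound directly.
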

\begin{proof}
$U$ can be written in its Euler angle decomposition as, 
\begin{align}
    U = e^{i\delta} R_z(\alpha) R_y(\beta) R_z(\gamma). 
\end{align}
Observe that any global phases and any matrices that commute with $Z$ do not affect the felinity, since it is defined with respect to $Z$-basis measurements. Therefore, all the $R_z$ rotations commute and it suffices to analyze $R_y(\beta)$, i.e,  the state  $\ket{\varphi} := R_y(\beta)^{\tens n} \ket{W_n}$.

$\ket{\varphi}$ lies in the symmetric subspace, and we can express it in the Dicke basis:
\begin{align}
\ket{\varphi} = \sum_{k=0}^n \alpha_k \ket{D_k^n},
\end{align}
Then, 
\begin{align} 
\feln(\varphi) &= 2 \cdot \sum_{k=0}^n \binom{n}{k}^{-1} \cdot \vlr{\alpha_k}^2 \cdot \vlr{\alpha_{n-k}}^2 \label{eq:here1324}
\end{align}
Identify the symmetric subspace with the spin-$j=n/2$ irreducible representation of $SU(2)$, under which
$\ket{D_k^n}$ corresponds to the angular momentum eigenstate $\ket{j=n/2,\,m=k-n/2}$. Thus,
\begin{align}
\alpha_k &= \bra{D_k^n}R_y(\beta)^{\otimes n}\ket{D_1^n} \\
&= d^{\,n/2}_{\,k-n/2,\;1-n/2}(\beta),
\end{align}
where $d^j_{m',m}(\beta)$ denotes the Wigner small-$d$ matrix. We require the entry for $m' = k-n/2$ and $m = 1-n/2$. These matrix elements are well known and admit explicit closed-form expression. For e.g., see section (4.3.4) of \cite{varshalovich1988quantum} which provides the entry we need in terms of Jacobi polynomials. From this, one obtains,  
\begin{align}
\vlr{\alpha_k} \leq  \poly(n) \cdot \sqrt{\binom{n}{k}} \cdot \cos^{n-k-1} \lr{\frac{\beta}{2}}  \sin^{k-1}\lr{\frac{\beta}{2}}
\end{align}
Then, since $\sin^2 \theta \cos^2 \theta$ is at most $1/4$, each term of the summation in \cref{eq:here1324} can be bounded by, 
\begin{align}
\binom{n}{k}^{-1}|\alpha_k|^2 |\alpha_{n-k}|^2  &\le \poly(n) \lr{\sin(\beta/2) \cdot \cos(\beta/2)}^{2n-4}  \\
&\leq \poly(n) (1/2)^{2n-4} \\
&= \negl(n)
\end{align}
Therefore,
\begin{align}
    \feln(\psi) = \feln(\varphi) = \negl(n) 
\end{align}
\end{proof}

\end{document}